\newcommand\gH{{\mathfrak{H}}}
\newcommand\gotH{{\mathfrak{H}}}
\newcommand\goth{{\mathfrak{h}}}
\newcommand\gotL{{\mathfrak{L}}}
\newcommand\gotm{{\mathfrak{m}}}
\newcommand\gotN{{\mathfrak{N}}}
\newcommand\gotn{{\mathfrak{n}}}
\newcommand\gotS{{\mathfrak{S}}}
\newcommand\gt{{\mathfrak{t}}}
\newcommand{\gd}{{\delta}}
\newcommand{\gga}{{\gamma}}
\newcommand{\gG}{{\Gamma}}
\newcommand{\gl}{{\lambda}}
\newcommand{\gs}{{\sigma}}
\newcommand{\gth}{{\theta}}
\newcommand{\gT}{{\Theta}}
\newcommand{\gP}{{\Pi}}
\newcommand\R{{\mathbb{R}}}
\newcommand\C{{\mathbb{C}}}
\newcommand\N{{\mathbb{N}}}
\newcommand\cA{{\mathcal{A}}}
\newcommand\cC{{\mathcal{C}}}
\newcommand\cD{{\mathcal{D}}}
\newcommand\cF{{\mathcal{F}}}
\newcommand\cG{{\mathcal{G}}}
\newcommand\cH{{\mathcal{H}}}
\newcommand\cK{{\mathcal{K}}}
\newcommand\cL{{\mathcal{L}}}
\newcommand\cT{{\mathcal{T}}}
\def\wt#1{{{\widetilde #1} }}
\def\wh#1{{{\,\widehat #1\,} }}
\def\bm\chi{\mbox{\boldmath$\chi$}}
\def\IM{{\rm Im\,}}
\def\real{{\rm Re\,}}
\def\imag{{\rm Im\,}}
\def\Ext{{\rm Ext\,}}
\def\ess{{\rm ess\,}}
\def\ran{{\rm ran\,}}
\def\dom{{\rm dom\,}}
\def\dim{{\rm dim\,}}
\def\diag{{\rm diag\,}}
\def\graph{{\rm gr\,}}
\let\xker=\ker \def\ker{{\xker\,}}
\def\supp{{\rm supp\,}}
\def\dim{{\rm dim}}
\def\cl{{\rm cl}}
\DeclareMathOperator\re{Re}
\DeclareMathOperator\im{Im}
\newtheorem{theorem}{Theorem}[section]
\newtheorem{proposition}[theorem]{Proposition}
\newtheorem{corollary}[theorem]{Corollary}
\newtheorem{lemma}[theorem]{Lemma}
\newtheorem{definition}[theorem]{Definition}
\newtheorem{remark}[theorem]{Remark}
\numberwithin{equation}{section}
\newcommand{\ba}{\begin{array}}
\newcommand{\ea}{\end{array}}
\newcommand{\bea}{\begin{eqnarray}}
\newcommand{\eea}{\end{eqnarray}}
\newcommand{\bead}{\begin{eqnarray*}}
\newcommand{\eead}{\end{eqnarray*}}
\newcommand{\be}{\begin{equation}}
\newcommand{\ee}{\end{equation}}
\newcommand{\bed}{\begin{displaymath}}
\newcommand{\eed}{\end{displaymath}}
\newcommand{\bl}{\begin{lemma}}
\newcommand{\el}{\end{lemma}}
\newcommand{\bp}{\begin{proposition}}
\newcommand{\ep}{\end{propostion}}
\newcommand{\bt}{\begin{theorem}}
\newcommand{\et}{\end{theorem}}
\newcommand{\Label}{\label}
\newcommand{\bc}{\begin{corollary}}
\newcommand{\ec}{\end{corollary}}
\newcommand{\la}{\Label}
\newcommand{\br}{\begin{remark}}
\newcommand{\er}{\end{remark}}
\newcommand{\bd}{\begin{definition}}
\newcommand{\ed}{\end{definition}}
\newcommand{\slim}{\,\mbox{\rm s-}\hspace{-2pt} \lim}
\newcommand{\wlim}{\,\mbox{\rm w-}\hspace{-2pt} \lim}
\newenvironment{proof}%
{\begin{sloppypar}\noindent{\bf Proof.}}%
{\hspace*{\fill}$\square$\end{sloppypar}}
\begin{document}
\author{Mark~Malamud\\
IAMM, NAS of Ukraine\\
Universitetskaya str. 74,
83114 Donetsk, Ukraine\\[1mm]
and\\[1mm]
Donetsk National University\\
Universitetskaya Str. 24,
83050 Donetsk,  Ukraine\\[1mm]
E-Mail: mmm@telenet.dn.ua\\[3mm]
\and
Hagen~Neidhardt\\
Weierstrass Institute\\
Mohrenstr. 39,
10117 Berlin, Germany\\[1mm]
E-Mail: hagen.neidhardt@wias-berlin.de
}

\title{Sturm-Liouville boundary value problems\\
with operator potentials and \\
unitary equivalence}

\maketitle

\begin{abstract}
\noindent
Consider the minimal Sturm-Liouville operator $A =
A_{\rm min}$  generated by the differential expression
\bed
\cA := -\frac{d^2}{dt^2} + T
\eed
in the Hilbert space $L^2(\R_+,\cH)$ where $T = T^*\ge 0$ in $\cH$.
We investigate the absolutely continuous parts
of different self-adjoint realizations of $\cA$. In particular, we show
that Dirichlet and Neumann realizations, $A^D$ and $A^N$, are absolutely
continuous and unitary equivalent to each other and to the
absolutely continuous part of the Krein realization. Moreover,
if $\inf\sigma_{\ess}(T) = \inf\gs(T) \ge 0$, then the
part $\wt A^{ac}E_{\wt A}(\gs(A^D))$ of any self-adjoint realization
$\wt A$ of $\cA$ is unitarily equivalent to $A^D$.
In addition, we  prove that the absolutely continuous part
$\wt A^{ac}$ of any realization $\wt A$  is unitarily
equivalent to $A^D$  provided that the resolvent difference
$(\wt A - i)^{-1}- (A^D - i)^{-1}$ is compact. The abstract
results are applied to elliptic  differential expression in the half-space.\\

\noindent
{\bf Subject Classification:} 34G10, 47E05,  47F05, 47A20, 47B25\\

\noindent
{\bf Keywords:} Sturm--Liouville operators, operator potentials, elliptic partial
differential operators, boundary value problems, self-adjoint extensions,
unitary equivalence, direct sums of symmetric operators
\end{abstract}

\newpage

\tableofcontents

\section{Introduction} \label{intro}

Let $T$ be a non-negative self-adjoint operator in an infinite
dimensional separable Hilbert space $\cH$. We consider the
minimal Sturm-Liouville operator $A$ generated by the differential expression
\be\la{8.10}
\cA := -\frac{d^2}{dt^2} + T
\ee
in the Hilbert space $\gotH := L^2(\R_+,\cH)$ of
$\cH$-valued square summable  vector-valued functions.
Following \cite{Gor71,GG91} the minimal operator $A := A_{\rm min}$
is defined as the closure of the operator $A'$ defined by
\be \label{8.11}
A'  :=  \cA\upharpoonright \cD_0,
\quad \cD_0 :=
\left\{\sum_{1\le j \le k}\phi_j(t)h_j:
\begin{matrix}
\phi_j \in W_0^{2,2}(\R_+)\\
 h_j \in \dom(T), \;\  k \in \N
\end{matrix}
\right\},
\ee
where  $W_0^{2,2}(\R_+) := \{\phi \in W^{2,2}(\R_+): \phi(0) =
\phi'(0) = 0\}$, that is, $A_{\rm min} := \overline{A'}$. It
is easily seen that  $A$ is a closed non-negative symmetric
operator in $\cH$ with equal deficiency indices $n_{\pm}(A)=
\dim(\cH)$. The adjoint operator $A^*$
of $A = A_{\rm min}$ is the maximal operator denoted by
$A_{\rm max}$
Extensions of $A$ are usually called realizations of
{$\cA$, self-adjoint extensions are called self-adjoint
realizations.} Self-adjoint realizations of {$\cA$}
were firstly investigated by M. L. Gorbachuk \cite{Gor71} {in
the case of finite intervals $I$}. Namely, he  showed that the
traces of vector-functions $f\in \dom(A_{\rm max})$ belong to the
space $\cH_{-1/4}(T)$, cf. \eqref{5.1b}.  In particular, $\dom(A_{\rm max})$  is not
contained in the Sobolev space $W^{2,2}(I,\cH)$.
Based on this result he constructed a boundary triplet for the operator
$A_{\rm max}= A_{\rm min}^* = A^*$ in the Hilbert space $L^2(I,\cH)$.
These results are similar to those for elliptic operators in domains
with smooth boundaries, cf. \cite{Ber65,Gru08,LioMag72}, and go back to classical papers
of M.I. Vi\v{s}ik \cite{Visik52} and G. Grubb \cite{Grubb68}.

After the pioneering work \cite{Gor71} the spectral theory of
self-adjoint and dissipative realizations of {$\cA$} in
$L^2(I,\cH)$ has intensively been  investigated by several authors
{for bounded intervals}. Their results have been summarized in the
book of M.L. and V.I. Gorbachuk \cite[Section 4]{GG91} where one
finds, in particular, discreteness criterion, asymptotic formulas
for the eigenvalues, resolvent comparability results, etc. Some
results from \cite{GG91} including the construction of a boundary
triplet were extended in  \cite{GorKut78},  \cite{GorKut82},
\cite{Kut76},  \cite[Section 9]{DM91},  to the case of the
semi-axis.  In particular, in \cite{GorKut78},  \cite{GorKut82},
the $\mathfrak S_p$-resolvent comparability of two realizations of the
form $y'(0) = C_jy(0), \  j\in \{1,2\}$. For instance, the
Dirichlet and the Neumann realizations are $\mathfrak S_1$-resolvent
comparable if and only if $T^{-1}\in\mathfrak S_1$ (cf.
\cite{GorKut78}).

However neither the absolutely continuous spectrum ({in short}
$ac$-spectrum) nor the unitary equivalence of self-adjoint realizations of
$\cA$ have been investigated in previous papers.
We show, cf. Lemma \ref{VI.2.7}, that the domain $\dom(A)$
of the minimal operator $A$ coincides algebraically and
topologically with the Sobolev space $W^{2,2}_{0,T}(\R_+, \cH) :=
\{f\in W^{2,2}_T(\R_+, \cH): \  f(0) = f'(0)=0\},$
where $W^{2,2}_T(\R_+, \cH)$ consists of $\cH$-valued
functions $f(\cdot)\in W^{2,2}(\R_+, \cH)$  satisfying
\bed
\|f\|^2_{W_T^{2,2}} := \int_{{\R}_+}\bigl(\|f''(t)\|^2_{\cH} +
\|f(t)\|^2_\cH +\|T f(t)\|_{\cH}^2\bigr)dt <\infty.
\eed
This statement is similar to the classical regularity result
for minimal elliptic operators with smooth coefficients,
see \cite{Ber65,Gru08,LioMag72}. Besides we show that the
Dirichlet and Neumann realizations defined by
\bead
\dom(A^D) & := &  \{f\in W^{2,2}_T(\R_+,\cH):\ f(0)=0\},\\
\dom(A^N) & := & \{f\in W^{2,2}_T(\R_+,\cH):\ f'(0)=0\}
\eead
are self-adjoint, cf. Proposition \ref{prop6.8}. This statement
is similar to that of the regularity of Dirichlet and Neumann
realizations in elliptic theory (cf. \cite{Ber65,Gru08,LioMag72}).
It looks surprising, that these regularity statements
were not obtained in previous papers even in the case of finite intervals.

Moreover, we  show that the realizations $A^D$ and $A^N$ are absolutely
continuous and unitarily equivalent for any $T$. We note
that these results can easily be obtained using the tensor product
structure of $A^D$ and $A^N$, see Appendix A.2.
However, the method fails if the special tensor product structure is missing.
We investigate the spectral properties of arbitrary self-adjoint realizations of
$\cA$ by investigating the corresponding Weyl functions.

We point out that the results substantially differ from those for
Dirichlet and Neumann extensions $A^D_I$ and $A^N_I$ of $\cA$ on a
finite interval $I$. In the later case  \emph{the spectral
properties of $A^D_I$ and $A^N_I$ strongly correlate with those
of} $T$, cf. Appendix A.1. In
particular, we show that, in contrast  to the case of a finite
interval, for any $T=T^*\ge 0$ none of the realizations of
$\cA$ on the semi-axis is pure point, purely singular or discrete.
Moreover, we show that for any $T\ge 0$  the Dirichlet and the Neumann
realizations  $A^D$ and $A^N$ are $ac$-minimal in the following sense.
\bd[{\cite[Definition 3.5, Definition 5.1]{MN2011}}]\la{III.5}
{\em
Let $A$ be a closed symmetric operator and let $A_0$ be a
self-adjoint extension of $A$.
\item[\;\;(i)]  We say that $A_0$ is $ac$-minimal if for any
self-adjoint extension $\wt A$ of $A$ the absolutely continuous part
$A^{ac}_0$ is unitarily equivalent to a part of $\wt A$.
\item[\;\;(ii)] Let $\gs_0 := \gs_{ac}(A_0)$. We say that $A_0$ is
strictly $ac$-minimal if for any self-adjoint extension $\wt A$ of
$A$ the part  $\wt A^{ac}E_{\wt A}(\gs_0)$ of $\wt A$ is unitarily
equivalent to the absolutely continuous part $A^{ac}_0$ of $A_0$.
}
\ed
One of our main results, which follows from Theorem \ref{VI.9},
Theorem \ref{VI.10a} and Corollary \ref{cor6.12},
can be summarized as follows:
\bt\label{th0.1}
Let $T$ be a non-negative self-adjoint operator in the infinite
dimensional Hilbert space $\cH$ with
$t_0 = \inf\gs(T)$ and $t_1 = \inf\gs_{\ess}(T)$.
Further, let $\wt A$ be a
self-adjoint realization of $\cA$. Then the following holds:
\item[\rm \;\;(i)]
The Dirichlet and  the Neumann realizations
$A^D$ and $A^N$ of $\cA$ are unitarily equivalent, absolutely continuous and
$\gs(A^D) = \gs_{ac}(A^D) = \gs(A^N) =
\gs_{ac}(A^N) = [t_0, \infty)$.

\item[\rm \;\;(ii)]
The Dirichlet, Neumann and Krein
realizations $A^D$, $A^N$ and $A^K$ of $\cA$ are
$ac$-minimal.

\item[\rm \;\;(iii)]
These realizations  are strictly $ac$-minimal if and only if $t_0 = t_1.$

\item[\rm \;\;(iv)]
If one of the following conditions
\bed
(\wt A - i)^{-1}- (A^D - i)^{-1}\in \gotS_\infty(\gotH)\quad
\text{or}\quad
  (\wt A - i)^{-1}- (A^K - i)^{-1}\in \gotS_\infty(\gotH)
\eed
is satisfied, then the absolutely continuous part $\wt A^{ac}$
of  $\wt A$  is unitarily equivalent to the Dirichlet realization
$A^D$.

\item[\rm \;\;(v)] If $t_0 = t_1$, then the absolutely continuous part $\wt A^{ac}$
of  $\wt A$   is unitarily equivalent to the Dirichlet realization
$A^D$ provided that
\bed
(\wt A - i)^{-1}- (A^N - i)^{-1}\in \gotS_\infty(\gotH).
\eed
\et

At first glance it seems that the $ac$-minimality of $A^D$ contradicts the
classical Weyl-v.~Neumann theorem, cf. \cite[Theorem X.2.1]{Ka76},  which guarantees the
existence   of a Hilbert-Schmidt perturbation $C=C^*$
such that the spectrum $\sigma(A^D + C)$ of the
perturbed operator $A^D + C$  is pure point. But, in fact,
Theorem \ref{th0.1} presents an explicit example showing  that the
analog of the Weyl-v.Neumann theorem  does not hold for
non-additive classes of perturbations. {Indeed, Theorem \ref{th0.1} shows
that for the class of self-adjoint extensions of $A$
the absolutely continuous part can never be eliminated. Moreover, if
$(\wt A - i)^{-1}- (A^D - i)^{-1}$ is compact, then even unitary
equivalence holds.}

We apply Theorem \ref{th0.1} and other abstract results to
Schr\"odinger operators
\bed
\cL := -\frac{\partial^2}{\partial t^2}
-\sum^n_{j=1}\frac{\partial^2}{\partial x^2} + q(x) =
-\frac{\partial^2}{\partial t^2}  -\Delta_x + q,
\qquad (t,x) \in \R_+\times\R^n,
\eed
considered in the half-space $\R^{n+1}_+ = \R_+\times \R^n$, $n\in
\N$. Here $q$ is a bounded non-negative  potential, $q=\bar q\in
L^\infty(\R^n),\ q\ge 0$. In this case the minimal elliptic
operator $L:= L_{\min}$  generated in $L^2(\R^{n+1}_+)$ by the
differential expression $\cL$ can be identified with the minimal
operator $A = A_{\min}$ generated in   $\gotH = L^2(\R_+,\cH)$,
$\cH:= L^2(\R^n)$, by the  differential expression \eqref{8.10}
with $T= -\Delta_x + q = T^*$. Therefore and due to the regularity
theorem (see \cite{Gru08,LioMag72}) the Dirichlet $L^D$ and the Neumann $L^N$
realizations  of the elliptic expression $\cL$ are identified,
respectively, with the realizations $A^D$ and $A^N$ of the expression $\cA$.
Moreover, the Krein realization $L^K$ of $\cL$ is identical with $A^K$. This leads to
statements on realizations of $\cL$ which are similar to those of Theorem \ref{th0.1}.
In fact, one has only to replace $A$ by $L$ in Theorem \ref{th0.1}. In addition,
if the condition
\begin{equation}\label{0.33}
\lim_{|x|\to\infty}\int_{|x-y|\le 1} q(y) dy = 0
\end{equation}
is satisfied, then $L^D$ and $L^N$  are absolutely continuous and strictly
$ac$-minimal. In particular, $\gs(L^D) = \gs_{ac}(L^D) = \gs(L^N)
= \gs_{ac}(L^N) = [0,\infty)$.

To prove {Theorem \ref{th0.1}} we consider the minimal symmetric
operator $A$ associated with the differential expression $\cA$
in the framework of extension theory, more precisely, in the framework
of boundary triplets intensively developed during the last three decades,
see for instance \cite{DM91,DM95,GG91} or \cite{BGPankr2008}
and references therein. The key role  in this theory plays  the so-called abstract
Weyl function introduced and investigated in
\cite{DM87,DM91,DM95}. Moreover, the proofs invoke techniques
elaborated in \cite{ABMN05,BMN02} and our recent publication \cite{MN2011}.

Namely, the proofs of unitary equivalence are based on some
statements from \cite{MN2011},
which allow to compute the  spectral multiplicity function $N_{\wt
A^{ac}}(\cdot)$ of the $ac$-part $\wt A^{ac}$ of an extension $\wt
A = \wt A^*$ in terms of boundary values of the
Weyl functions at the real axis, {cf. Proposition \ref{III.8} and
Corollary \ref{IV.9}.}

We construct a special boundary triplet for the operator $A^*$ (in
the case of unbounded $T=T^*\ge0$) representing $A$ as a direct
sum of minimal Sturm-Liouville operators $S_n$ with bounded
operator potentials {$T_n := TE_T([n-1,n))$, $n \in \N$,} where $E_T(\cdot)$
is the spectral measure of $T$. The corresponding Weyl function
$M(\cdot)$ has weak boundary values
\be\label{0.2A}
M(\gl) := M(\gl+i0)= \wlim_{y\downarrow 0}M(\gl + iy) \quad\text{for
a.e.} \quad  \gl \in \R.
\ee
This boundary triplet differs from that used in \cite[Section
9]{DM91}. {It is  more suitable  for the investigation of the
$ac$-spectrum of realizations of $\cA$
than that one of \cite[Section 9]{DM91}}.
Due to the {property \eqref{0.2A} the statement (iv) of Theorem
\ref{th0.1} follows immediately from
our recent result \cite[Theorem 1.1]{MN2011}). We note that this is
more than one can expect when applying the classical Kato-Rosenblum theorem
\cite{Ka76,Ros57}. Indeed, in accordance with its generalization by
Kuroda \cite{Kur59,Kur60}, Birman \cite{Bir63a} and
Birman and Krein \cite{BirKrei62} it is required
that the resolvent differences in (iv) and (v) of Theorem \ref{th0.1}
belong to the trace class ideal and not to the compact one as
actually assumed}. We note also that although the
limit \eqref{0.2A} does not exist for the Weyl
function of the Neumann realization $A^N$ the conclusion
(iv) of Theorem \ref{th0.1} still remains valid, cf. Theorem \ref{th0.1}(v).

The paper is organized as follows. In Section 2 we give a short
introduction into the theory of boundary triplets and  the
corresponding Weyl functions. {We recall here some statements
on spectral multiplicity functions and
the main theorem from \cite{MN2011} used in the following}.

In Section 3 we obtain some new results on symmetric operators $S
:= \bigoplus^\infty_{n=1} S_n$ being  an infinite direct sum of
closed symmetric operators $S_n$ with equal deficiency indices.
First,  let $\Pi_n= \{\cH_n, \gG_{0n}, \gG_{1n}\}$ be a boundary
triplet for $S^*_n,$  $n \in \N$. In general, the direct sum $\Pi
= \bigoplus^\infty_{n=1}\Pi_n$ is not a boundary triplet for $S^*
= \bigoplus^\infty_{n=1} S^*_n$, cf. \cite{Koch79}. Nevertheless,
we show, cf. Theorem \ref{VI.3},  that each  boundary triplet $\Pi_n$ can  slightly be
modified such that the new sequence $\wt\Pi_n = \{\cH_n,
\wt\gG_{0n}, \wt\gG_{1n}\}$ of boundary triplets possess the
following properties:
\begin{enumerate}

\item[(i)]
the direct sum
\bed
\wt\Pi = \bigoplus^\infty_{n=1}\wt\Pi_n =  \{\cH, \wt\gG_{0},
\wt\gG_{1}\}, \quad
 \cH:= \bigoplus^\infty_{n=1}\cH_n, \quad
\wt\gG_{j}:= \bigoplus^\infty_{n=1}\wt\gG_{jn}, \quad j\in\{0.1\},
\eed
is already a boundary triplet for $S^*$;

\item[(ii)]
the extension $\wt S_{0}:= S^*\upharpoonright\ker \wt\gG_{0}$
satisfies $\wt S_{0} = \bigoplus^\infty_{n=1}\wt S_{0n}$ where
\bed
\wt S_{0n}:= S^*_n\upharpoonright\ker \wt\gG_{0n} =  S^*_n\upharpoonright
\ker\gG_{0n} =: S_{0n}, \quad n\in \N.
\eed
\end{enumerate}
Moreover, the Weyl function $\wt M(\cdot)$ corresponding
to the triplet $\wt\Pi$ is block-diagonal, that is,
$\wt M(\cdot) = \bigoplus^\infty_{n=1}\wt M_n(\cdot)$ where $\wt M_n(\cdot)$ is
the Weyl function corresponding to the triplet $\wt\Pi_n$, $n\in
\N$. This result plays an important role in the sequel. In
particular, we show that the self-adjoint extension $S_0 =
\bigoplus^\infty_{n=1}S_{0n}$ is $ac$-minimal provided
{that the deficiency indices $n_\pm(S_n)$ are equal and finite}.
We also prove in this section that if $S_n\ge 0, n\in \N,$ then the
Friedrichs and Krein extensions $S^F$ and  $S^K$ of $S :=
\bigoplus^\infty_{n=1} S_n$, respectively, are the direct sums of Friedrichs
and Krein extensions of the summands $S_n$, i.e., $S^F :=
\bigoplus^\infty_{n=1} S_n^F$ and $S^K := \bigoplus^\infty_{n=1}
S_n^K$, cf. Corollary \ref{cor5.5}. In a recent paper \cite{KosMal10}
Theorem \ref{VI.3} has been applied  to Schr\"odinger operators with local
point interactions.

In Section 4 we consider Sturm-Liouville operators with bounded operator potentials. In
this case it is easy to construct a boundary triplet for $A^*$. We
prove here Theorem \ref{th0.1} in the case $T\in [\cH]$ and
establish some additional properties of  Krein's realization as
well as other realizations.

In Section 5 we extend the results to the case of Sturm-Liouville
operators with unbounded non-negative operator potentials.
We construct here a boundary triplet for $A^*$ using results of
both Sections 3 and 4 and compute the (block-diagonal) Weyl
function. Based on this construction we prove Theorem \ref{th0.1}
for unbounded $T$  and establish some additional properties of
Dirichlet, Neumann and  other realizations as well. In particular, we prove
here the regularity results mentioned above. Finally,
we apply the abstract results to the elliptic partial
differential expression $\cL$ in the half-space.

In the Appendix we present some results on
realizations of $\cA$ admitting separation of variables, i.e.,
having a certain tensor product structure.

The main results of the paper have been announced (without proofs)
in \cite{MN2010}, a preliminary version has been published as a
preprint \cite{MN2009}. Since the results of the paper are obvious if
$\dim(\cH) < \infty$ we consider the case
when $\dim(\cH) = \infty$.

{\bf Notations}  In the following we consider only separable
Hilbert spaces which are denoted by $\mathfrak H$, $\cH$ etc.
A closed linear relation in $\cH$ is a closed subspace of $\cH \oplus
\cH$. The set of all closed linear relations in $\cH$ is denoted by
$\wt \cC(\cH)$. A graph $\graph(B)$ of a closed linear operator $B$ belongs
to $\wt \cC(\cH)$. The symbols $\cC(\cH_1, \cH_2)$ and $[\mathfrak H_1,
\mathfrak H_2]$ stand for the sets of closed and
bounded linear operators from
$\mathfrak H_1$ to $\mathfrak H_2,$ respectively.
We set $\cC(\cH):= \cC(\cH, \cH)$ and $[\mathfrak H] :=
[\mathfrak H, \mathfrak H]$.
We regard $\cC(\cH)$ as a subset of $\wt\cC(\cH)$ identifying an
operator $B$ with its graph $\graph(B)$.

The Schatten-v.~Neumann ideals of compact operators
are denoted by $\gotS_p(\gotH)$, $p \in [1,\infty]$, where
$\gotS_1(\gotH)$, $\gotS_2(\gotH)$ and $\gotS_\infty(\gotH)$ are the
ideals of trace, Hilbert-Schmidt and compact operators, respectively.

The symbols $\dom(T)$, $\ran(T)$, $\varrho(T)$ and $\sigma(T)$ stand for the domain,
the range,  the resolvent set and the spectrum of an operator
$T\in \cC(\cH),$ respectively; $T^{ac}$ and
$\sigma_{ac}(T)$ stand for the absolutely continuous part and the
absolutely continuous spectrum of a self-adjoint operator $T=T^*$.

\section{Preliminaries}

\subsection{Boundary triplets and proper extensions}

In this section we briefly recall basic facts on  boundary triplets and their
Weyl functions, cf. \cite{DM87,DM91,DM95,GG91}.

Let $A$ be a densely defined closed symmetric operator in the
separable Hilbert space $\gH$ with equal deficiency indices
$n_\pm(A)=\dim(\ker(A^*\mp i)) \leq \infty$.
\begin{definition}[\cite{GG91}]\la{t1.9}
{\em
A triplet $\Pi = \{\cH, \Gamma_0,
\Gamma_1\}$, where $\cH$ is an auxiliary Hilbert space and
$\Gamma_0,\Gamma_1:\  \dom(A^*)\rightarrow \cH$ are linear
mappings,  is called an boundary triplet for $A^*$ if
the "abstract Green's identity"
\begin{equation}\la{2.10A}
(A^*f,g) - (f,A^*g) = (\gG_1f,\gG_0g)_{\cH} -
(\gG_0f,\gG_1g)_{\cH}, \qquad f,g\in\dom(A^*),
\end{equation}
holds and the mapping $\gG := (\Gamma_0,\Gamma_1):  \dom(A^*)
\rightarrow \cH \oplus \cH$ is surjective.
}
\end{definition}
\begin{definition}[\cite{GG91}]\la{def2.2}
{\em
A closed extension $A'$ of $A$ is called a proper extension,
in short $A'\in \Ext_A,$ if $A\subset A' \subset A^*$.
Two proper extensions $A', A''$ are called disjoint if
$\dom( A')\cap \dom( A'') = \dom( A)$ and  transversal if in
addition $\dom( A') + \dom( A'') = \dom( A^*).$
}
\end{definition}

Clearly, any  self-adjoint extension $\wt A= {\wt A}^*$ is proper,
$\wt A\in \Ext_A$. A boundary triplet $\Pi=\{\cH,\gG_0,\gG_1\}$ for $A^*$ exists
whenever  $n_+(A) = n_-(A)$. Moreover, the relations $n_\pm(A) =
\dim(\cH)$ and $\ker(\Gamma_0) \cap \ker(\Gamma_1)=\dom(A)$ are
valid. In addition one has $\Gamma_0, \Gamma_1\in [{\mathfrak H}_+,\cH]$
where ${\mathfrak H}_+$ denotes the Hilbert space { obtained
by equipping} $\dom(A^*)$ with the graph norm of $A^*$.

Using the concept of { boundary triplets} one can parameterize all
proper, in particular, self-adjoint extensions of $A$. For this
purpose we denote by $\wt\cC(\cH)$ the set  of closed linear
relations in $\cH$, that is, the set of all closed linear subspaces
of $\cH\oplus\cH$.  A linear relation $\gT$ is called {\it symmetric} if
$\gT\subset\gT^*$ and self-adjoint if $\gT=\gT^*$ where
$\gT^*$ is the adjoint relation.
For the definition of the inverse and the resolvent set of a linear
relation $\gT$ we refer to \cite{DS87}.
\begin{proposition}\label{prop2.1}
Let  $\Pi=\{\cH,\gG_0,\gG_1\}$  be a
boundary triplet for  $A^*$.  Then the mapping
\be\label{bij}
\Ext_A \ni \wt A \to  \Gamma \dom(\wt A)
=\{\{\Gamma_0 f,\Gamma_1f \} : \  f\in \dom(\wt A) \} =:
\gT \in \wt\cC(\cH)
\ee
establishes  a bijective correspondence between the sets $\Ext_A$
and  $\wt\cC(\cH)$. We put $A_\gT :=\wt A$ where
$\gT$ is defined by \eqref{bij}. Moreover, the following holds:

\item[\;\;\rm (i)] $A_\gT= A_\gT^*$ if and only if $\gT=
\gT^*$;

\item[\;\;\rm (ii)] The extensions $A_\gT$ and $A_0$ are disjoint
if and only if there is an operator $B \in \cC(\cH)$ such that
$\graph(B) = \gT$. In this case \eqref{bij} takes the form
\bed
A_\gT = A^*\!\upharpoonright\ker(\gG_1- B\gG_0);
\eed

\item[\;\;\rm (iii)] The extensions $A_\gT$ and $A_0$ are
transversal if and only if $A_\gT$ and $A_0$ are disjoint and $\gT =
\graph(B)$ where $B$ is bounded.
\end{proposition}

With any boundary triplet $\Pi$ one associates  two special extensions
$A_j:=A^*\!\upharpoonright\ker(\gG_j), \ j\in \{0,1\}$, which are
self-adjoint in view of Proposition \ref{prop2.1}. Indeed, we have
$A_j:=A^*\!\upharpoonright\ker(\gG_j) = A_{\gT_j},\ j\in
\{0,1\}$, where $\gT_0:= \{0\} \times \cH$ and $\gT_1 := \cH \times
\{0\}$. Hence $A_j= A_j^*$ since $\gT_j = \gT^*_j$.
In the sequel the extension $A_0$ is usually regarded as a reference
self-adjoint extension.

{Moreover, if $\gT$ is the graph of a closed
operator $B$, i.e. $\gT = \graph(B)$, then the operator $A_\gT$ is
denoted by $A_B$.}

Conversely, for any extension $A_0=A_0^*\in \Ext_A$ there exists
a boundary triplet $\Pi=\{\cH,\gG_0,\gG_1\}$ for $A^*$
such that $A_0:=A^*\!\upharpoonright\ker(\gG_0)$.

\subsection{Weyl functions and $\gamma$-fields}\label{weylsec}

It is well known that Weyl functions are an important tool in the
direct and inverse spectral theory of singular Sturm-Liouville
operators. In \cite{DM87,DM91,DM95} the concept of Weyl function
was generalized to the case of an arbitrary symmetric operator $A$
with  $n_+(A) = n_-(A).$ Following \cite{DM87,DM91,DM95} we recall
basic facts on Weyl functions and $\gamma$-fields associated with
a boundary triplet $\Pi$.
\bd[{\cite{DM87,DM91}}]\label{Weylfunc}
{\em
 Let $\Pi=\{\cH,\gG_0,\gG_1\}$ be a boundary triplet  for $A^*.$
The functions $\gamma(\cdot):
\varrho(A_0)\rightarrow [\cH,\gH]$ and  $M(\cdot):
\varrho(A_0)\rightarrow  [\cH]$ defined by
\begin{equation}\label{2.3A}
\gamma(z):=\bigl(\Gamma_0\!\upharpoonright\mathfrak N_z\bigr)^{-1}
\qquad\text{and}\qquad M(z):=\Gamma_1\gamma(z), \qquad
z\in\varrho(A_0),
      \end{equation}
are called the {\em $\gamma$-field} and the {\em Weyl function},
respectively, corresponding to  $\Pi.$
}
\ed

It follows from the identity  $\dom(A^*)=\ker(\Gamma_0)\dot +
\gotN_z$, $z\in\varrho(A_0)$, where  $A_0 =
A^*\!\upharpoonright\ker(\gG_0)$, and ${\mathfrak N}_z :=\ker(A^* -
z)$, that the $\gamma$-field $\gamma(\cdot)$ is well defined and
takes values in $[\cH,\gH]$. Since $\gG_1 \in [\gH_+, \cH]$, it
follows from \eqref{2.3A} that  $M(\cdot)$ is well defined too and
takes values in  $[\cH]$. Moreover, both $\gamma(\cdot)$ and
$M(\cdot)$ are holomorphic on $\varrho(A_0)$. It turns out
than the Weyl function $M(\cdot)$ is  in fact
a $R_{\cH}$-function (Nevanlinna or Herglotz function), that is, $M(\cdot)$ is a
$[\cH]$-valued holomorphic function on $\C\backslash \R$  satisfying
\bed
M(z)=M(\overline z)^*\qquad\text{and}\qquad
\frac{\IM(M(z))}{\IM(z)}\geq 0, \qquad z\in\C\backslash\R,
 \eed
which in addition satisfies the condition
$0\in \varrho(\IM(M(z))),\  z\in\C\backslash\R$.

If $A$ is a simple symmetric operator, then  the Weyl function
$M(\cdot)$ determines the pair $\{A,A_0\}$ uniquely up to unitary
equivalence (see \cite{DM95,KL71}). Therefore $M(\cdot)$ contains
(implicitly)  full information on spectral properties of $A_0$.
We recall that a symmetric operator is
said to be {\it simple} if there is no non-trivial subspace which
reduces it to a self-adjoint operator.

For a fixed $A_0 = A_0^*$ extension of $A$ the boundary triplet
$\Pi =\{\cH, \gG_0,\gG_1\}$ satisfying  $\dom(A_0)=\ker
(\Gamma_0)$ is not unique.  If  $\wt \Pi =\{\wt \cH, \wt
\gG_0,\wt \gG_1\}$ is  another boundary triplet  for $A^*$
satisfying  $\ker(\gG_0) = \ker(\wt \gG_0)$,  then the
corresponding Weyl functions $M(\cdot)$ and $\wt M(\cdot)$ are
related by
\begin{equation}\label{2.11}
\wt M(z) = R^* M(z) R + R_0,
\end{equation}
where $R_0= R^*_0 \in [\wt \cH]$ and $R \in
[\wt \cH,\cH]$ is boundedly invertible.

\subsection{Krein type formula for resolvents and resolvent comparability}

With  any  boundary triplet   $\Pi=\{\cH,\gG_0,\gG_1\}$  for $A^*$
and any proper (not necessarily  self-adjoint)  extension $A_{\gT}\in
\Ext_A$ it is naturally  associated  the following (unique) Krein
type formula (cf. \cite{DM87,DM91,DM95})
\be\label{2.30}
(A_\gT - z)^{-1} - (A_0 - z)^{-1} = \gamma(z) (\gT -
M(z))^{-1} \gamma({\overline z})^*, \quad z\in \varrho(A_0)\cap
\varrho(A_\gT).
\ee
Formula \eqref{2.30} is a generalization of the known Krein
formula for resolvents.
We note also, that all objects in
\eqref{2.30} are expressed in terms of the boundary triplet $\Pi$
(cf. \cite{DM87,DM91,DM95}). The following result is deduced from  formula \eqref{2.30} (cf.
\cite[Theorem 2]{DM91}).
\bp\label{prop2.9}
Let $\Pi=\{\cH,\gG_0,\gG_1\}$  be a boundary triplet for $A^*$,
$\gT_i = \gT_i^* \in \wt\cC(\cH), \ i\in \{1,2\}$.  Then for
any Schatten-v.~Neumann ideal ${\gotS}_p$, $p \in (0,\infty]$, and any $z \in
\C\setminus\R$ the following equivalence holds
\bed
(A_{\gT_1}-z)^{-1} - (A_{\gT_2}-z)^{-1}\in{\mathfrak
S}_p(\gH)
\Longleftrightarrow  \bigl(\gT_1 - z\bigr)^{-1}-
\bigl(\gT_2 - z \bigr)^{-1}\in{\mathfrak S}_p(\cH)
\eed
In particular,  $(A_{{\gT}_1} - z)^{-1} - (A_0 - z)^{-1} \in
{\mathfrak S}_p(\gH) \Longleftrightarrow \bigl(\gT_1 - i\bigr)^{-1} \in {\mathfrak
S}_p(\cH)$.

If in addition $\gT_1, \gT_2\in[\cH]$, then for any $p \in
(0, \infty]$ the equivalence holds
\bed
(A_{\gT_1}-z)^{-1} -
(A_{\gT_2}-z)^{-1}\in{\mathfrak S}_p(\gH) \Longleftrightarrow
\gT_1 - \gT_2 \in{\mathfrak S}_p(\cH).
\eed
\end{proposition}

\subsection{Spectral multiplicity function and unitary equivalence}

Let as above  $A$ be a densely defined \emph{simple} closed
symmetric operator in $\gH$ and let $\Pi = \{\cH,\gG_0,\gG_1\}$ be
a boundary triplet for $A^*$,  $M(\cdot)$ the corresponding Weyl
function $M(\cdot)$ and $A_0 = A^*\!\upharpoonright\ker(\gG_0)=
 A_0^*$.

In our recent publication  \cite{MN2011} using some results from \cite{MM03}
we expressed the spectral multiplicity function
$N_{A_0^{ac}}(\cdot)$ of $A_0^{ac}$  by means of the limit values
of the Weyl function $M(\cdot)$. In general, the limit $M(t) :=
\slim_{y\downarrow 0}M(t+iy)$, $t \in \R$, does not exist.
However, for any  $D\in\gotS_2(\cH)$ 
satisfying $\ker(D) = \ker(D^*) = \{0\}$ the ``sandwiched'' Weyl
function,
\bed
M^D(z) := D^*M(z)D, \quad z \in \C_\pm,
\eed
admits limit values  $M^D(t) := \slim_{y\downarrow 0}M^D(t+iy)$
for a.e. $t \in \R$,  even in $\gotS_2$-norm (cf.
\cite{BirEnt67}, \cite{Gin66}). We set
\bed
d_{M^D}(t)  := \dim(\ran(\im(M^D(t)))),
\eed
which is well-defined for a.e. $t \in \R$. The function
$d_{M^D}(\cdot)$ is Lebesgue measurable and takes values in the
set of extended natural numbers $\{0\} \cup \N \cup \{\infty\} =
\{0,1,2,\ldots,\infty\}$. The set $\supp_{d_{M^D}} := \{t \in \R:
d_{M^D}(t) > 0\}$ is called the support of $d_{M^D}(\cdot)$ and
is, of course, a Lebesgue measurable set of $\R$. If the limit
$M(t) := \slim_{y\downarrow 0}M(t+iy)$ exists for a.e. $t \in \R$,
then we set $d_M(t)  := \dim(\ran(\im(M(t))))$.

To state the next result we introduce the notion of  the
absolutely continuous closure $\cl_{ac}(\delta)$ of a Borel subset $\delta\subset \R$
(see for definition \cite[Appendix]{MN2011} as well as
\cite{BMN02,GMZ08}). The use of this notion  for the investigation of
the $ac$-spectrum of Schr\"odinger operators etc. see
the recent publication \cite{Geszin2009}.
\bp[{\cite[Proposition 3.2]{MN2011}}]\la{III.8}
Let $A$ be as above 
and let $\gP = \{\cH,\gG_0,\gG_1\}$ be a boundary triplet for
$A^*$, $M(\cdot)$ the corresponding Weyl function.  If $D$ is a
Hilbert-Schmidt operator such that $\ker(D) = \ker(D^*) = \{0\}$,
then $N_{{A^{ac}_0}}(t) = d_{M^D}(t)$ for a.e. $t \in \R$ and
$\gs_{ac}(A_0) = \cl_{ac}(\supp(d_{M^D}))$.

If, in addition, the limit $M(t) := \slim_{y\downarrow 0}M(t+iy)$ exists
for a.e. $t \in \R$, then $N_{{A^{ac}_0}}(t) = d_M(t)$ for a.e. $t
\in \R$ and $\gs_{ac}(A_0) = \cl_{ac}(\supp(d_{M}))$.
\end{proposition}
If $\wt A = \wt A^*\in \Ext_A$ and is disjoint with $A_0$, then by
Proposition \ref{prop2.1}(ii) there is a self-adjoint operator $B$
acting in $\cH$ such that $\wt A = A_B :=
A^*\upharpoonright\ker(\gG_1 - B\gG_0)$. In this case the
multiplicity function  $N_{{A^{ac}_B}}(\cdot)$ is  expressed by
means of
the generalized Weyl function $M_B(\cdot)$ of $\wt A= A_B$ defined
by
\be\la{2.5}
M_B(z) := (B - M(z))^{-1}, \quad z \in \C_\pm,
\ee
\bc[{\cite[Corollary 3.3]{MN2011}}]\la{IV.9}
Let $A$, $\Pi$, $M(\cdot)$ and $D$  be as in Proposition
\ref{III.8} and let $B = B^*\in \cC(\cH).$ Then $N_{A^{ac}_B}(t) =
d_{M_B^D}(t)$ for a.e. $t\in \R$ and $\gs_{ac}(A_B) =
\cl_{ac}(\supp(d_{M_B^D}))$.

If, in addition, the limit
$M_B(t) := \slim_{y\downarrow 0}M_B(t + iy)$ exists for a.e. $t \in \R$, then
$N_{A^{ac}_B}(t) = d_{M_B}(t)$ for a.e. $t \in \R$ and
$\gs_{ac}(A_B) = \cl_{ac}(\supp(d_{M_B}))$.
\ec
Finally, we can retranslate the unitary equivalence of $ac$-parts
of two self-adjoint extensions in terms of the limit values of the
Weyl functions.
\bt[{\cite[Theorem 3.4]{MN2011}}]\la{IV.10}
Let $A$, $\Pi$, $M(\cdot)$ and $D$  be as in Proposition
\ref{III.8} and $B = B^*\in \cC(\cH).$  Let also  $E_{A_B}(\cdot)$
and $E_{A_0}(\cdot)$ be the spectral measures of  $A_B=A_B^*$ and
$A_0$, respectively. If $\gd$ is a Borel
subset of $\R$, then

\item[{\rm\;\;(i)}] $A_0E^{ac}_{A_0}(\gd)$ is unitarily equivalent
to a part of $A_BE^{ac}_{A_B}(\gd)$ if and only if
$d_{M^{ D}}(t) \le d_{M_B^{ D}}(t)$ for
a.e. $t \in \gd$;

\item[{\rm\;\;(ii)}] $A_0E^{ac}_{A_0}(\gd)$
and $A_BE^{ac}_{A_B}(\gd)$  are unitarily equivalent if and only if
$d_{M^D}(t) = d_{M_B^D}(t)$ for a.e. $t \in \gd.$
\et
Theorem \ref{IV.10} reduces the problem of unitary
equivalence of $ac$-parts  of certain self-adjoint extensions of
$A$ to the computation of the functions $d_{M^{D}}(\cdot)$ and
$d_{M_B^{D}}(\cdot)$. If $\gd = \R$, then the absolutely continuous
part $A^{ac}_0$ is unitarily equivalent to $\wt A^{ac} = A^{ac}_B$ if
and only if $d_{M^D}(t) = d_{M^D_B}(t)$ for a.e. $t \in \R$.

If $M(\cdot)$ is the Weyl function of a boundary triplet $\Pi$,
then we introduce the maximal normal function
\bed
m^+(t) := \sup_{y\in (0,1]}\left\|M(t+iy)\right\|,\quad  t \in \R.
\eed
\bt[{\cite[Theorem 4.3, Corollary 4.6]{MN2011}}]\la{V.5}
Let $A$, $\Pi$, $M(\cdot)$ and $D$  be as in Proposition
\ref{III.8}. Let $\wt A = \wt A^*\in \Ext_A$  and
$A_0 := A^*\upharpoonright\ker(\gG_0)$. Assume also that there is a Borel
subset $\gd$ of $\R$ such that the maximal
normal function $m^+(t)$ is finite for a.e. $t \in \gd$ and the
condition
\be\la{0.2}
({\wt A}-i)^{-1}-(A_0-i)^{-1}\in{\gotS}_{\infty}(\gH),
\ee
is satisfied.  Then the $ac$-parts $\wt A^{ac}E_{\wt A}(\gd)$ of
$\wt AE_{\wt A}(\gd)$  and $A_0E_{A_0}(\gd)$, respectively,  are
unitarily equivalent. In particular, if $m^+(t)$ is finite for
a.e. $t \in \R$, then absolutely continuous parts $\wt A^{ac}$ and
$A^{ac}_0$ are unitarily equivalent.
\et

One easily verifies that $m^+(t)< \infty$ for a.e. $t \in \gd$ if
and only if limit \eqref{0.2A} exists for a.e. $t \in \gd$. Thus,
condition $m^+(t) < \infty$ for a.e. $t \in \gd$ in Theorem
\ref{V.5} can be replaced by the assumption that the limit
\eqref{0.2A} exists for a.e. $t \in \gd$, cf. \cite[Theorem
1.1]{MN2011}.

However, the function $m^+(\cdot)$ depends on  the chosen  boundary triplet.
In \cite{MN2009}-\cite{MN2011} we introduced the invariant maximal
normal function $\gotm^+(\cdot)$ defined by
\be\label{4.12A}
\gotm^+(t) := \sup_{y\in(0,1]}
\left\|
\frac{1}{\sqrt{\im(M(i))}}\left(M(t+iy) -
\re(M(i))\right)\frac{1}{\sqrt{\im(M(i))}}
\right\|,
\ee
$t \in \R$.
It follows from \eqref{2.11} that the invariant maximal normal
functions
for  two boundary triplets $\Pi = \{\cH,\gG_0,\gG_1\}$  and $\wt
\Pi = \{\wt H,\wt \gG_0,\wt \gG_1\}$ for $A^*$  coincide whenever
$A^*\upharpoonright\ker(\gG_0) = A^*\upharpoonright\ker(\wt
\gG_0)$.
Clearly, $\gotm^+(t) < \infty$ if and only if $m^+(t) < \infty$
for any $t \in \R$. However, the invariant maximal normal function
is more convenient in applications. We demonstrate this fact in
the next section applying this concept  to infinite direct sums of
symmetric operators.

\section{Direct sums of symmetric operators}

\subsection{Boundary triplets for direct sums}

Let ${ S_n}$ be a closed densely defined symmetric {
operators} in $\gotH_n,$ \ $n_+({ S_n})=n_-( S_n),$ and { let
$\gP_n = \{\cH_n,\gG_{0n},\gG_{1n}\}$  be} a boundary triplet for
$S_n^*,\ n\in \N.$ Let
\be\la{6.5a}
A := \bigoplus^\infty_{n=1}{ S_n}, \quad \dom(A) :=
\bigoplus^\infty_{n=1}\dom({ S_n}).
\ee
Clearly, $A$ is a closed densely defined symmetric operator  in
the Hilbert space  $\gotH := \bigoplus^\infty_{n=1}\gotH_n$ with
$n_\pm(A)= \infty$. Obviously, we have
\be\la{6.5b}
A^* = \bigoplus^\infty_{n=1}{ S}^*_n, \quad \dom(A^*) =
\bigoplus^\infty_{n=1}\dom({ S}^*_n).
\ee
Let us consider the direct sum
$\gP:=\bigoplus_{n=1}^{\infty} \gP_n=: \{\cH,\gG_{0},\gG_{1}\}$ of
 boundary triplets defined by
\be\la{6.5}
\cH := \bigoplus^\infty_{n=1}\cH_n, \quad \gG_0 :=
\bigoplus^\infty_{n=1}\gG_{0n} \quad \mbox{and} \quad \gG_1 :=
\bigoplus^\infty_{n=1}\gG_{1n}.
\ee
We note that the Green's identity
\bed (S_n^*f_n, g_n) - (f_n,S_n^*g_n) =
(\gG_{1n}f_n,\gG_{0n}g_n)_{\cH_n} - (\gG_{0n}f_n,
\gG_{1n}g_n)_{\cH_n}, \eed
$f_n, g_n \in \dom( S_n^*)$, holds for every $S^*_n$, $n \in
\N$. This yields that the Green's identity \eqref{2.10A} holds for $A_*:=
A^*\upharpoonright \dom (\gG)$, $\dom(\gG) := \dom (\gG_0) \cap
\dom (\gG_1) \subseteq \dom (A^*)$, that is, for
$f=\bigoplus_{n=1}^{\infty}f_n$, $g = \bigoplus_{n=1}^\infty g_n \in
\dom (\gG)$ we have
\begin{equation}\label{3.1AA}
(A_*f, g) - (f, A_*g) = (\gG_1f, \gG_0 g)_{\cH} -
(\gG_0 f, \gG_1 g)_{\cH}, \qquad f,g\in\dom(\gG),
\end{equation}
where $A^*$ and $\gG_j$ are defined by \eqref{6.5b} and
\eqref{6.5}, respectively. However, the Green's identity
\eqref{3.1AA} cannot extend to $\dom(A^*)$ in general, since
$\dom(\gG)$ is smaller than $\dom(A^*)$ generically. It might even
happen that $\gG_j$ are not bounded as mappings from $\dom(A^*)$
equipped with the graph norm into $\cH.$ Counterexamples such that
$\Pi=\bigoplus^\infty_{n=1} \Pi_n$ is not
a boundary triplet firstly appeared in
\cite{Koch79}).

In this section we show that it is always possible to modify the
boundary triplets $\gP_n$ in such a way that the new sequence $\wt
\gP_n= \{\cH_n,\wt\gG_{0},\wt\gG_{1}\}$ of boundary triplets for
$S^*_n$ such that $\wt \gP = \bigoplus^\infty_{n=1} \wt \gP_n$ defines a boundary
triplet for $A^*$ and the relations
\be\la{5.5A}
\wt S_{0n} := S^*_n\upharpoonright
\ker(\wt\gG_{0n}) = S^*_n\upharpoonright \ker(\gG_{0n}) =:
S_{0n}, \quad n\in \N,
\ee
are valid. Hence $\wt A_{0} := \bigoplus^\infty_{n=1} \wt S_{0n} = \bigoplus^\infty_{n=1}S_{0n} =: A_0$.
We note that the existence of a boundary triplet $\gP'= \{\cH, \gG'_{0},
\gG'_{1}\}$ for $A^*$ satisfying $\ker(\gG'_{0})= \dom(A_0) $ is
known (see \cite{DM91,GG91}). However, in applications we need a special boundary
triplet for $A^*$ which respects the direct sum structure and which
leads therefore to a block-diagonal form of
the corresponding Weyl function.
We start with a simple technical lemma.
\bl\la{VI.1}
Let $S$ be a densely defined closed symmetric operator with equal
deficiency indices, let $\gP = \{\cH,\gG_0,\gG_1\}$  be a boundary
triplet for $S^*$,  and let $M(\cdot)$ be the corresponding Weyl
function. Then there exists  a boundary triplet $\wt{\gP} =
\{\cH,\wt{\gG}_0,\wt{\gG}_1\}$ for $S^*$ such that
$\ker(\wt{\gG}_0) = \ker(\gG_0)$ and  the corresponding Weyl
function $\wt{M}(\cdot)$ satisfies  $\wt{M}(i) = i$.
\el
\begin{proof}
Let $M(i) = Q + iR^2$ where $Q := \re(M(i)),\ R :=
\sqrt{\im(M(i))}$. We set
\be\label{3.8}
\wt{\gG}_0 := R\gG_0
\quad \mbox{and} \quad
\wt{\gG}_1 := R^{-1}(\gG_1 - Q\gG_0).
\ee
A straightforward computation shows that $\wt{\gP} : =
\{\cH,\wt{\gG}_0,\wt{\gG}_1\}$ is a boundary triplet for $A^*$.
Clearly, $\ker(\wt{\gG}_0) = \ker(\gG_0)$. The Weyl function
$\wt{M}(\cdot)$ of $\wt{\gP}$ is given by $\wt{M}(\cdot) =
R^{-1}(M(\cdot) - Q)R^{-1}$ which yields $\wt{M}(i) = i$.
\end{proof}

If $S$ is a densely defined closed symmetric operator in $\gotH$,
then by the first v.~Neumann formula the direct decomposition
$\dom(S^*) = \dom(S) \stackrel{.}{+} \gotN_i \stackrel{.}{+}
\gotN_{-i}$  holds, where $\gotN_{\pm i} := \ker(S^*\mp i)$.
Equipping  $\dom(S^*)$ with the inner  product
\be\la{3.8b}
(f,g)_+ := (S^*f, S^*g) + (f,g), \quad f,g \in \dom(S^*),
\ee
one obtains a Hilbert space denoted by $\gotH_+$.
The first v.~Neumann formula leads to the following
orthogonal decomposition
\bed
\gotH_+ = \dom(S) \oplus \gotN_i \oplus \gotN_{-i}.
\eed
\bl\la{VI.2}
Let $S$, $\Pi$ and $M(\cdot)$ be as in Lemma \ref{VI.1}.
If $M(i) = i$, then $\gG: \gotH_+ \longrightarrow \cH \oplus \cH$,
$\gG:=(\Gamma_0,\Gamma_1)$  is a contraction. Moreover, $\gG$
isometrically maps $\gotN := \gotN_i\oplus \gotN_{-i}$ onto $\cH$.
\el
\begin{proof}
We show that
\be\la{6.4}
\|\gG(f + f_i + f_{-i})\|^2_{\cH \oplus \cH} = \|f_i + f_{-i}\|^2_+
\ee
where $f \stackrel{.}{+} f_i \stackrel{.}{+} f_{-i} \in \dom(S)
\stackrel{.}{+} \gotN_i \stackrel{.}{+} \gotN_{-i} = \dom(S^*)$.
Since $\dom(S) = \ker(\gG_0) \cap \ker(\gG_1)$ we
find
\bed
\|\gG(f + f_i + f_{-i})\|^2_{\cH \oplus \cH} =
\|\gG_0(f_i + f_{-i})\|^2_\cH + \|\gG_1(f_i + f_{-i})\|^2_\cH.
\eed
Clearly,
\be\la{6.4A} \|\gG_j(f_i + f_{-i})\|^2_\cH = \|\gG_jf_i\|^2  +
2\re((\gG_jf_i,\gG_jf_{-i})) +\  \|\gG_jf_{-i}\|^2, \quad j\in
\{0,1\}. \ee
Using $\gG_1f_i = M(i)\gG_0f_i = i\gG_0f_i$ and $\gG_1f_{-i} = M(-i)\gG_0f_{-i} = -i\gG_0f_{-i}$
we obtain
\be\la{6.4B}
\|\gG_1(f_i + f_{-i})\|^2_\cH =
(\gG_0f_i,\gG_0f_i) - 2\re((\gG_0f_i,\gG_0f_{-i})) + (\gG_0f_{-i},\gG_0f_{-i})
\ee
Taking a sum of \eqref{6.4A} and \eqref{6.4B} we get
     \be\la{6.4BC}
\|\gG_0(f_i + f_{-i})\|^2_\cH + \|\gG_1(f_i + f_{-i})\|^2_\cH =
2\|\gG_0f_i\|^2_\cH + 2\|\gG_0f_{-i}\|^2_\cH.
       \ee
Combining equalities $\gG_1f_{\pm i}= \pm i \gG_0f_{\pm i}$ with
Green's  identity \eqref{2.10A}  we obtain $\|\gG_0f_i\|_\cH =
\|f_i\|$ and $\|\gG_0f_{-i}\|_\cH = \|f_{-i}\|$. { Therefore
\eqref{6.4BC} takes the form}
\be\la{6.4C}
\|\gG_0(f_i + f_{-i})\|^2_\cH + \|\gG_1(f_i + f_{-i})\|^2_\cH =
2\|f_i\|^2 + 2\|f_{-i}\|^2.
\ee
A straightforward computation shows
$\|f_i + f_{-i}\|^2_+ = 2\|f_i\|^2 + 2\|f_{-i}\|^2$
which together with \eqref{6.4C} proves \eqref{6.4}.
Since $\|f_i + f_{-i}\|^2_+ \le \|f\|^2_+ + \|f_i + f_{-i}\|^2_+ =
\|f + f_i + f_{-i}\|^2_+$, we get from \eqref{6.4} that $\gG$ is a contraction.

Obviously, $\gG$ is an isometry from $\gotN$ into $\cH \oplus
\cH$. Since $\gP$ is a boundary triplet for $S^*$,
 $\ran(\gG) = \cH \oplus \cH.$ Hence $\gG$
is an isometry acting from $\gotN$ onto $\cH \oplus \cH$.
\end{proof}

Passing to  the direct sum \eqref{6.5a}, we  equip  $\dom(S^*_n)$ and
$\dom(A^*)$ with their graph's norms and obtain the Hilbert spaces
$\gotH_{+n}$ and $\gotH_+$, respectively. Clearly, the
corresponding inner products $(f,g)_{+n}$
and $(f,g)_+$ are defined by \eqref{3.8b} where $S^*$ is replaced by
$S^*_n$ and $A^*$, respectively. Obviously, $\gotH_+ =
\bigoplus^\infty_{n=1} \gotH_{+n}.$
\bt\la{VI.3}
Let $\{S_n\}^\infty_{n=1}$ be a sequence of densely defined
closed symmetric operators in $\gotH_n$
and let $S_{0n} = S^*_{0n}\in \Ext_{S_n}$. Further, let $A$ and $A_0$ be
given by \eqref{6.5a} and
\be\la{6.8}
A_0 := \bigoplus^\infty_{n=1}{ S_{0n}},
\ee
respectively. Then there exist boundary triplets $\gP_n :=
\{\cH_n,\gG_{0n},\gG_{1n}\}$ for $S^*_n$ such that ${
S_{0n}} = { S^*_n}\upharpoonright\ker(\gG_{0n})$, $n \in \N$,
and the direct sum ${ \gP = \bigoplus^\infty_{n=1} \gP_n}$
defined by \eqref{6.5} forms a boundary triplet for
$A^*$ satisfying $A_0 = A^*\upharpoonright\ker(\gG_0)$. Moreover,
the corresponding  Weyl function $M(\cdot)$ and the $\gga$-field
$\gga(\cdot)$ are { given by}
\be\la{6.7}
M(z) = \bigoplus^\infty_{n=1}M_n(z)
\quad \mbox{and} \quad
\gga(z) = \bigoplus^\infty_{n=1}\gga_n(z)
\ee
where $M_n(\cdot)$ and $\gga_n(\cdot)$ are the Weyl functions and
the $\gga$-field corresponding to $\gP_n, \ n\in \N$. In addition,
the condition { $M(i) = iI$}  { holds}.
\end{theorem}
\begin{proof}
For every ${ S_{0n}} = { S^*_{0n}}\in \Ext_{S_n}$
there exists a boundary triplet $\gP_n =
\{\cH_n,\gG_{0n},\gG_{1n}\}$ for $S_n^*$ such that $S_{0n}
:= S^*_n\upharpoonright\ker(\gG_{0n})$ (see \cite{DM91}). By Lemma
\ref{VI.1} we can assume without loss of generality that the
corresponding Weyl function $M_{n}(\cdot)$ satisfies $M_n(i) = i$.
By Lemma \ref{VI.2} the mapping
$\gG^n:=(\Gamma_{0n},\Gamma_{1n}): \gotH_{+n} \longrightarrow
\cH_n \oplus \cH_n$, is contractive { for each} $n\in \N$.
Hence $\|\gG_j\| = \sup_{n}\|\gG_{jn}\|\le 1,  j\in \{0,1\}$,
where $\gG_0$ and $\gG_1$ are defined by \eqref{6.5}. It follows
that the mappings $\gG_0$ and $\gG_1$ are well-defined on
$\dom(\gG) = \dom(A^*) = \bigoplus^\infty_{n=1}\dom({
S^*_n})$. Thus, { the Green's} identity \eqref{3.1AA} holds
for all $f,g \in \dom(A^*).$

Further, we set $\gotN_{\pm in} := \ker({ S^*_n} \mp i)$,\  $\gotN_n
:= \gotN_{in} \stackrel{.}{+} \gotN_{-in}$, $\gotN_{\pm i} :=
\ker(A^* \mp i)$ and $\gotN := \gotN_i \stackrel{.}{+}
\gotN_{-i}.$  By Lemma \ref{VI.2}  the restriction
$\gG^n\upharpoonright\gotN_n$ is an isometry from $\gotN_n,$
regarded as a subspace of $\gotH_{+n},$ onto $\cH_n \oplus \cH_n$.
Since $\gotN$ regarded as a subspace of $\gotH_+$ admits the
representation $\gotN = \bigoplus^\infty_{n=1}\gotN_n$, the
restriction $\gG\upharpoonright\gotN$, $\gG :=
\bigoplus^\infty_{n=1}\gG^n$, isometrically maps $\gotN$ onto $\cH
\oplus \cH$. Hence $\ran(\gG) = \cH \oplus \cH$. Equalities
\eqref{6.7} are follow  from  Definition \ref{Weylfunc}.
\end{proof}
\br
{\rm
Theorem \ref{VI.3} generalizes a result of
Kochubei \cite[Theorem 3]{Koch79} which states that for any sequence
of pairwise unitarily equivalent closed symmetric operators
$\{S_n\}_{n\in \N}$ there are boundary triplets $\Pi_n$ for $S^*_n$,
$n \in \N$ such that $\Pi = \bigoplus_{n\in \N}\Pi_n$ defines a boundary triplet for $A^* =
\bigoplus_{n\in\N}S^*_n$.
}
\er

Recall, that for any non-negative symmetric operator $A$
the set of its non-negative self-adjoint extensions
$\Ext_A(0,\infty)$ is non-empty (see \cite{AG81,Ka76}). The set
$\Ext_A(0,\infty)$ contains the Friedrichs (the biggest) extension
$A^F$ and the Krein (the smallest) extension $A^K$. These
extensions are uniquely determined by the following extremal
property in the class $\Ext_A(0,\infty):$
\bed
(A^F + x)^{-1}\le(\wt A+x)^{-1}\le(A^K + x)^{-1},\quad x>0,\quad
\wt A\in\Ext_A(0,\infty).
\eed
\begin{corollary}\label{cor5.5}
Let the assumptions of Theorem \ref{VI.3} be satisfied. Further, let $S_n\ge 0$,
$n\in{\N},$ and let $S^F_n$ and $S^K_n$ be the Friedrichs and
Krein extensions of $S_n$, respectively. Then
\begin{equation}\label{5.14}
A^F=\bigoplus^\infty_{n=1}S_n^F
\qquad \text{and}\qquad
A^K=\bigoplus^\infty_{n=1} S^K_n.
\end{equation}
\end{corollary}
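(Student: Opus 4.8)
The plan is to prove each of the two equalities in \eqref{5.14} by invoking the extremal characterization of the Friedrichs and Krein extensions together with the block-diagonal structure furnished by Theorem \ref{VI.3}. First I would observe that since each $S_n \ge 0$, the direct sum $A = \bigoplus_{n=1}^\infty S_n$ is non-negative, so $\Ext_A(0,\infty)$ is non-empty and contains both $A^F$ and $A^K$. The extremal property displayed just before the statement characterizes $A^F$ and $A^K$ as the largest and smallest elements (in the sense of resolvent ordering) of $\Ext_A(0,\infty)$. The key structural fact I would use is that the resolvents respect the direct-sum decomposition: for $x > 0$ one has $(\bigoplus_n S_n^F + x)^{-1} = \bigoplus_n (S_n^F + x)^{-1}$, and analogously for the Krein extensions.

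The heart of the argument is then a two-sided comparison. I would first verify that $\bigoplus_{n=1}^\infty S_n^F$ and $\bigoplus_{n=1}^\infty S_n^K$ are genuinely self-adjoint non-negative extensions of $A$ lying in $\Ext_A(0,\infty)$; this requires checking that each acts as an extension of the corresponding summand $S_n$ and that the orthogonal direct sum of non-negative self-adjoint operators is again non-negative self-adjoint, which is routine. Next, for an arbitrary $\wt A \in \Ext_A(0,\infty)$ I would want to compare $\wt A$ with the candidate extensions summand by summand. The clean way is to apply the scalar extremal inequality $(S_n^F + x)^{-1} \le (\wt A_n + x)^{-1} \le (S_n^K + x)^{-1}$ at the level of each block and then take the orthogonal sum, using that the ordering of non-negative self-adjoint operators is preserved under forming block-diagonal direct sums. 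Summing these inequalities over $n$ yields $(\bigoplus_n S_n^F + x)^{-1} \le (\wt A + x)^{-1} \le (\bigoplus_n S_n^K + x)^{-1}$ for every $\wt A \in \Ext_A(0,\infty)$, which by the uniqueness in the extremal characterization identifies $\bigoplus_n S_n^F$ with $A^F$ and $\bigoplus_n S_n^K$ with $A^K$.

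The main obstacle I anticipate is the step where an arbitrary non-negative self-adjoint extension $\wt A$ of $A$ need \emph{not} itself be block-diagonal: a general $\wt A \in \Ext_A(0,\infty)$ may couple the summands $\gotH_n$ to one another, so one cannot simply write $\wt A = \bigoplus_n \wt A_n$ and compare block by block. To handle this I would argue at the level of quadratic forms rather than resolvents directly. Specifically, $A^F$ is the form closure of $A$, so its form $\gt_{A^F}$ is the closure of $\gt_A = \bigoplus_n \gt_{S_n}$, which is manifestly the orthogonal sum of the forms $\gt_{S_n^F}$; since the form of $\bigoplus_n S_n^F$ is exactly this orthogonal sum, the two operators coincide. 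For the Krein extension I would use the known description of $A^K$ via the smallest non-negative form extending $A$, or equivalently via the formula $(A^K + x)^{-1} = \slim_{y \downarrow 0}(A^F + x)^{-1} + \gamma(-x)(\,\cdot\,)\gamma(-x)^*$ in terms of the boundary triplet of Theorem \ref{VI.3}; because that triplet is block-diagonal and the associated Weyl function satisfies $M(z) = \bigoplus_n M_n(z)$ by \eqref{6.7}, the Krein parametrization (which corresponds to the boundary relation $\gT = \{0\}\times\cH$ or its analogue) also decomposes as a direct sum, forcing $A^K = \bigoplus_n S_n^K$.

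Thus the cleanest route combines the extremal resolvent inequalities with the direct-sum decomposition of the Weyl function: the Friedrichs case follows from the orthogonal-sum structure of the closed form, and the Krein case follows because the boundary triplet of Theorem \ref{VI.3} is block-diagonal, so that the distinguished relation defining $S_n^K$ in each fibre assembles into the distinguished relation defining $A^K$. I expect the form-theoretic identification of $A^F$ to be essentially immediate, while the Krein case is where the block-diagonality of $M(\cdot)$ established in Theorem \ref{VI.3} does the real work.
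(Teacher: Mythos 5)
Your Friedrichs half is sound: $\gt_{A^F}$ is the closure of $\gt_A=\bigoplus_n\gt_{S_n}$, which is the orthogonal sum of the closed forms of the $S_n^F$, and this is precisely the alternative proof the paper itself points out in the remark following the corollary. The gap is in the Krein half, which is the harder one. The paper's proof chooses, via Theorem \ref{VI.3}, triplets $\Pi_n$ with $S_{0n}=S_n^K$, so that $A_0=\bigoplus_n S_n^K$; the entire content of the proof is then to show that this $A_0$ really equals $A^K$. Your argument at this point --- that ``the distinguished relation defining $S_n^K$ in each fibre assembles into the distinguished relation defining $A^K$'' --- simply asserts that conclusion. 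The relation $\gT=\{0\}\times\cH$ parametrizes $A_0$, not $A^K$; that $A_0=A^K$ is exactly what must be verified, and it does not follow formally from the block-diagonality of $M(\cdot)$.

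The missing step is the transfer of the Krein criterion from the blocks to the infinite direct sum. By \cite[Proposition 4]{DM91}, $A_0=A^K$ if and only if $\lim_{x\downarrow 0}(M(-x)h,h)=+\infty$ for every $h\in\cH\setminus\{0\}$. Each $M_n$ satisfies this on $\cH_n$, but a general $h=\bigoplus_n h_n$ has infinitely many nonzero components, and one cannot simply sum the divergences: a priori the infinite tail $\sum_{n>p}\bigl(M_n(-x)h_n,h_n\bigr)$ could be very negative as $x\downarrow 0$. The paper handles this by splitting $h=h^{(1)}\oplus h^{(2)}$ with $\|h^{(2)}\|$ small, bounding the tail from below uniformly in $x\in(0,x_2)$ by the monotonicity of $M(\cdot)$ on $(-\infty,0)$, namely $\bigl(M(-x)h^{(2)},h^{(2)}\bigr)>\bigl(M(-x_2)h^{(2)},h^{(2)}\bigr)>-1$, and then letting the finite part $\bigl(M(-x)h^{(1)},h^{(1)}\bigr)$ diverge. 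Nothing in your proposal plays the role of this estimate. (A genuinely form-theoretic proof of the Krein half is also possible, via the Ando--Nishio sup-characterization of $\gt_{A^K}$, which does decompose over orthogonal sums by a Cauchy--Schwarz argument; but you would have to carry that out rather than appeal to ``the smallest non-negative form extending $A$,'' which is not by itself a correct characterization of $A^K$.)
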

\begin{proof}
Let us prove the second relation. The first one
is proved similarly.  By Theorem \ref{VI.3} there exists { a}
boundary triplet $\Pi_n=\{\cH_n,\Gamma_{0n},\Gamma_{1n}\}$ for
$S^*_n$ such that $S^K_n = S_{0n}$ and
$\Pi=\bigoplus^{\infty}_{n=1}\Pi_n$ is a boundary triplet for $A^*$.

Fix any $x_2\in\R_+$ and put $C_2:=\|M(-x_2)\|$. Then any
$h=\bigoplus^{\infty}_{n=1} h_n \in \cH$ can be decomposed by
$h=h^{(1)}\oplus h^{(2)}$ with $h^{(1)}\in\oplus^p_{n=1} \cH_n$ {
and} $h^{(2)}\in\oplus^{\infty}_{n=p+1}\cH_n$ {such that
$\|h^{(2)}\| < C^{-1/2}_2$}. Hence $|(M(-x_2)h^{(2)},h^{(2)})| <
1$. {  Due to the monotonicity of $M(\cdot)$ we get}
\bed
\bigg(M(-x)h^{(2)},h^{(2)}\bigg)>\bigg(M(-x_2)h^{(2)},h^{(2)}\bigg)>-1,\qquad
x\in (0, x_2).
\eed
Since $S_{0n} = S^K_n$, the Weyl function $M_n(\cdot)$ satisfies
\begin{equation}\label{5.16}
\lim_{{ x\downarrow 0}}\bigg(M_n(-x)g_n, g_n\bigg)=+\infty, \qquad
g_n\in\cH_n\setminus\{0\},
\end{equation}
{ cf. \cite[Proposition 4]{DM91}}. Because $M(\cdot)=
\bigoplus^\infty_{n=1} M_n(\cdot)$ is block-diagonal, cf.
\eqref{6.7}, we get from \eqref{5.16} that for any $N>0$ there
exists $x_1>0$ such that
\begin{equation}\label{5.17}
\bigg(M(-x)h^{(1)},h^{(1)}\bigg)=\sum^p_{n=1}\bigg(M_n(-x)h_n,h_n\bigg)>
N \quad \text{for}\quad x\in (0, x_1).
\end{equation}
Combining \eqref{5.16} with \eqref{5.17} and using the diagonal
form of $M(\cdot)$, we get
\bed (M(-x)h,h) = (M(-x)h^{(1)}, h^{(1)}) + (M(-x)h^{(2)},h^{(2)})
> N-1 \eed
for $0<x\le\min(x_1,x_2)$. Thus, $\lim_{{x\downarrow 0}}(M(-x)h,h)
= + \infty$ for $h\in \cH\setminus\{0\}.$  Applying
\cite[Proposition 4]{DM91} we {prove the second relation of}
\eqref{5.14}.
\end{proof}

\begin{remark}
{\em
Another proof can be obtained by using characterization of $A^F$
and $A^K$ by means of the respective quadratic forms.
}
\end{remark}

\subsection{Direct sums of symmetric operators with arbitrary deficiency indices}

We start with some simple spectral observations for direct sums of
symmetric operators where the symmetric operators may have arbitrary deficiency indices.
\begin{proposition}\la{VI.5b}
Let $\{S_n\}^\infty_{n=1}$ be a sequence of densely defined closed
symmetric operators in $\gotH_n$ and let $S_{0n} = S^*_{0n}\in \Ext_{S_n}$.
Further, let $A$ and $A_0$ be given by \eqref{6.5a} and
\eqref{6.8}, respectively. If $\wt A$ is a self-adjoint extension
of $A$ such that condition
\be\la{5.0}
(\wt A - i)^{-1} - (A_0 - i)^{-1} \in \gotS_\infty(\gotH)
\ee
is satisfied, then
\be\label{5.13}
\gs_{ac}(A_{0}) = \overline{\bigcup \gs_{ac}({ S_{0n}})} \subseteq
\gs(\wt A) \quad \mbox{and} \quad \gs_{ac}(\wt A) \subseteq
\overline{\bigcup \gs({ S_{0n}})} = \gs(A_{0}).
      \ee
\end{proposition}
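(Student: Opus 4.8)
The plan is to prove the two halves of \eqref{5.13} separately, and in each half to split the claim into an equality among spectra of direct sums and an inclusion coming from the compactness hypothesis \eqref{5.0}. First I would record the elementary spectral fact that for an orthogonal direct sum $A_0 = \bigoplus_n S_{0n}$ of self-adjoint operators one has $\gs(A_0) = \overline{\bigcup_n \gs(S_{0n})}$ and, for the absolutely continuous parts, $\gs_{ac}(A_0) = \overline{\bigcup_n \gs_{ac}(S_{0n})}$. The first of these is standard; the second follows because $A_0^{ac} = \bigoplus_n S_{0n}^{ac}$ acts as a direct sum and the $ac$-spectrum of a direct sum is the closure of the union of the summand $ac$-spectra. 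These two identities establish the two outer equalities in \eqref{5.13}, so what remains are the two inclusions $\gs_{ac}(A_0) \subseteq \gs(\wt A)$ and $\gs_{ac}(\wt A) \subseteq \gs(A_0)$.

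For the inclusion $\gs_{ac}(\wt A) \subseteq \gs(A_0)$, the idea is to use that \eqref{5.0} is a compact resolvent perturbation, so by Weyl's theorem on the invariance of the essential spectrum, $\gs_{\ess}(\wt A) = \gs_{\ess}(A_0)$. Since $\gs_{ac}(\wt A) \subseteq \gs_{\ess}(\wt A)$ (the $ac$-spectrum has no isolated points of finite multiplicity and hence is contained in the essential spectrum) and $\gs_{\ess}(A_0) \subseteq \gs(A_0)$, I would chain these to obtain $\gs_{ac}(\wt A) \subseteq \gs_{\ess}(\wt A) = \gs_{\ess}(A_0) \subseteq \gs(A_0)$, which is the desired inclusion.

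For the reverse inclusion $\gs_{ac}(A_0) \subseteq \gs(\wt A)$, the point is that the $ac$-spectrum, being essential, is likewise preserved under the compact resolvent perturbation: from $\gs_{\ess}(\wt A) = \gs_{\ess}(A_0)$ one gets $\gs_{ac}(A_0) \subseteq \gs_{\ess}(A_0) = \gs_{\ess}(\wt A) \subseteq \gs(\wt A)$. Thus both inclusions reduce, after the direct-sum identities, to the single stability statement $\gs_{\ess}(\wt A) = \gs_{\ess}(A_0)$ together with the containment of the $ac$-spectrum in the essential spectrum. The main obstacle I anticipate is purely bookkeeping rather than conceptual: one must be careful that the relevant notion of essential spectrum (the one invariant under compact resolvent differences) is used consistently, and one must verify the direct-sum formula for $\gs_{ac}$ with its closure, since the union of the summand spectra need not be closed even though each $\gs_{ac}(S_{0n})$ is. No appeal to the Weyl-function machinery of the preceding sections is needed here; the result is a soft consequence of Weyl's essential-spectrum stability and the behavior of spectra under orthogonal direct sums.
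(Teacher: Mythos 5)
Your proposal is correct and follows essentially the same route as the paper: both arguments rest on Weyl's theorem giving $\gs_{\ess}(\wt A)=\gs_{\ess}(A_0)$ under the compact resolvent difference \eqref{5.0}, combined with the containment $\gs_{ac}\subseteq\gs_{\ess}$ and the direct-sum identities $\gs(A_0)=\overline{\bigcup_n\gs(S_{0n})}$, $\gs_{ac}(A_0)=\overline{\bigcup_n\gs_{ac}(S_{0n})}$. The paper's proof is exactly the two chains of inclusions you describe, so nothing further is needed.
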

\begin{proof}
By the Weyl theorem, condition \eqref{5.0} yields  $\gs_{\rm ess}(\wt
A) = \gs_{\rm ess}(A_0)$. Hence
\bed \overline{\bigcup \gs_{ac}({ S_{0n}})} = \gs_{ac}(A_0) \subseteq
\gs_{\rm ess}(A_0) = \gs_{\rm ess}(\wt A) \subseteq \gs(\wt A) \eed
and
\bed
\gs_{ac}(\wt A) \subseteq \gs_{\rm ess}(\wt A) = \gs_{\rm ess}(A_0)
\subseteq \gs(A_0) = \overline{\bigcup \gs({ S_{0n}})}
\eed
which completes the proof.
\end{proof}

Applying Theorem \ref{V.5} the results of Proposition \ref{VI.5b}
can be improved as follows.
\bt\la{VI.4}
Let $\{S_n\}^\infty_{n=1}$ be a sequence of densely defined closed
symmetric operators in $\gotH_n$ and let $S_{0n} = S^*_{0n}\in \Ext_{S_n}$.
Further, let $\gP_n = \{\cH_n,\gG_{0n},\gG_{1n}\}$ be a
boundary triplet for ${ S^*_n}$  such that $S_{0n} =
S^*_n\upharpoonright\ker(\gG_{0n})$, $n \in \N$, and let
$M_n(\cdot)$ be the corresponding Weyl function. Moreover, let
$\gotm^+_n(t)$, $n \in \N$, be the invariant maximal normal function
for $\Pi_n$. Further, let $A$ and $A_0$ be given by \eqref{6.5a} and
\eqref{6.8}, respectively.

If $\gd$ is a Lebesgue measurable subset of $\R$ such that
$\sup_{n\in\N}\gotm^+_n(t) < +\infty$ for a.e. $t \in \gd$, then
for any  self-adjoint extension $\wt{A}$ of $A$ satisfying the condition \eqref{5.0}, the
absolutely continuous parts $\wt{A}^{ac}E_{\wt A}(\gd)$ and $A^{ac}_0E_{A_0}(\gd)$ are
unitarily equivalent. In particular, if $\gd = \R$, then the parts
$\wt{A}^{ac}$ and $A^{ac}_0$ are unitarily equivalent
and \eqref{5.13} is replaced  by $\gs_{ac}(A_{0})= \gs_{ac}(\wt A)$.
\et
\begin{proof}
Let $\wt \gP_n = \{\cH_n, \wt\gG_{0n}, \wt\gG_{1n}\}$ be a
boundary triplet for $S^*_n, \ n\in \N,$ defined according to
\eqref{3.8}, that is ${\wt\Gamma}_{0n}:= R_n\Gamma_{0n}$ and
${\wt\Gamma}_{1n}:=R^{-1}_n\bigl(\Gamma_{1n}- \re
(M_n(i))\Gamma_{0n}\bigr)$, where $R_n:=\sqrt{\im M_n(i))}.$ The
corresponding Weyl function ${\wt M}_n(\cdot)$ is
\bed
{\wt M}_n(z)=R^{-1}_n\bigl(M_n(z)- \re M_n(i)\bigr)R^{-1}_n, \quad
n\in \N.
\eed
Since  ${\wt M}_n(i)=i,\  n\in \N,$ by Theorem \ref{VI.3},
${\wt\Pi}=\bigoplus^{\infty}_{n=1}{\wt\Pi}_n=:\{\cH,{\wt\Gamma}_0,{\wt\Gamma}_1\}$
is a boundary triplet for $A^*=\bigoplus_{n=1}^{\infty} S^*_n$
satisfying $A^*\upharpoonright\ker{\wt\Gamma}_0=A_0:=\bigoplus^{\infty}_{n=1}
S_{0n}$.  By the definition of $\gotm^+_n(\cdot)$ one has
$\gotm^+_n(t) = \wt m^+_n(t) := \sup_{y \in (0,1]}\|\wt M_n(t+iy)\|$ for $t \in \R$, $n \in \N$. Since
$A_0 =\bigoplus_{n=1}^{\infty}S_{0n}$ we get  that $\wt
m^+(t) = \sup_n m^+_n(t)$, where $\wt m^+(t) := \sup_{y \in
(0,1]}\|\wt M(t + iy)\|$, $t \in \R$. By assumption, the
maximal normal function $\wt m^+(t)$ is finite for a.e. $t \in \gd$. Hence we obtain from
Theorem \ref{V.5} that $\wt A^{ac}E_{\wt A}(\gd)$ and $A^{ac}_0E_{A_0}(\gd)$ are unitarily
equivalent.
\end{proof}

Let $T$ and $T'$ be densely defined closed symmetric operators in
$\gotH$ and let $T_0$ and $T'_0$ be self-adjoint extensions of
$T$ and $T'$, respectively. The pairs $\{T,T_0\}$
and $\{T',T'_0\}$ are called unitarily equivalent if there exists  a
unitary operator $U$ in $\gotH$ such that $T' = UTU^{-1}$ and
$T'_0 = UT_0U^{-1}$.
\bc\la{VI.5a}
Let the assumptions of Theorem \ref{VI.4} be satisfied. Moreover, let
the pairs $\{S_n,S_{0n}\}$, $n\in \N$, be unitarily
equivalent to the pair $\{S_1,S_{01}\}$. If the maximal
normal function $m^+_1(t) := \sup_{0 < y\le 1}\|M_1(t+iy)\|$ is finite for a.e. $t \in \gd$ and if the
condition \eqref{5.0} is satisfied, then the absolutely continuous
parts $\wt{A}^{ac}E_{\wt A}(\gd)$ and $A^{ac}_0E_{A_0}(\gd)$ are unitarily equivalent.
\ec
\begin{proof}
Since the symmetric operators ${ S_n}$ are unitarily
equivalent, we assume without loss of generality that $\cH_n =
\cH$ for each $n \in \N$. Let $U_n$ be a unitary operator such
that $A_1 = U_n{ S_n}U^{-1}_n$ and $A_{01}  = U_n{
S_{0n}}U^{-1}_n$. A straightforward computation shows that $\gP'_n
:= \{\cH,\gG'_{0n},\gG'_{1n}\}$, $\gG'_{0n} := \gG_{01}U_n$ and
$\gG'_{1n} := \gG_{1n}U_n$, defines a boundary triplet for ${
S^*_n}$. The Weyl function $M'_n(\cdot)$ corresponding to $\gP'_n$
is $M'_n(z) = M_1(z)$. Hence  $\gotm^+_n(\cdot)=
\gotm'^+_n(\cdot)$ and  $\gotm^+_1(t) = \gotm'^+_n(t)$ for $t \in
\R,$ where $\gotm^+_n(t)$ and $\gotm'^+_n(t)$ are the invariant
maximal normal functions corresponding  to the triplets $\gP_n$
and $\gP'_n$, respectively. Since $\gotm^+_1(t) = \gotm^+_n(t)$ for
$t\in \R$ and $n \in \N$ we complete the proof applying Theorem
\ref{VI.4}.
\end{proof}

\subsection{Direct sums of symmetric operators with finite deficiency}

Here we improve the previous results assuming that
$n_{\pm}({ S_n})<\infty.$ First, we show that extensions
$A_0=\bigoplus^\infty_{n=1} { S_{0n}}(\in \Ext_A)$ of the form
\eqref{6.8} possess a certain spectral minimality property. To this
end we start with the following lemma.
\bl\la{V.9}
Let $H$ be  a bounded non-negative self-adjoint operator in a
separable Hilbert  space $\gotH$ and let $L$ be a bounded operator
in $\gotH$. Then

\item[\rm \;\;(i)]  $\dim(\overline{\ran(H)}) =
\dim(\overline{\ran(\sqrt{H})})$;

\item[\rm \;\;(ii)]
If $L^*L  \le H$, then
$\dim(\overline{\ran(L)}) \le \dim(\overline{\ran(H)})$;

\item[\rm \;\;(iii)] If $P$ is an orthogonal projection,
then $\dim(\overline{\ran(PHP)}) \le
\dim(\overline{\ran(H)})$.
\el
\begin{proof}
The assertion (i) is obvious.

(ii)  If $L^*L \le H$, then there is a contraction $C$ such that
$L = C\sqrt{H}.$ Hence $\dim(\overline{\ran(L)}) =
\dim(\overline{\ran(C\sqrt{H})}) \le
\dim(\overline{\ran(\sqrt{H})}) = \dim(\overline{\ran(H)})$.

(iii) Clearly, $\dim(\overline{\ran(PHP)}) \le
\dim(\overline{\ran(\sqrt{H}P)}) \le
\dim(\overline{\ran(\sqrt{H})})$. Applying (i) we complete the proof.
\end{proof}

We are going to show that if the summands have only finite deficiency
indices, then the absolutely spectrum of extensions of the direct sum
can only increase comparing with the absolutely continuous spectrum of
those extensions which are direct sums of extensions.
\bt\la{VI.6}
Let $\{{ S_n}\}^\infty_{n=1}$ be a sequence of densely defined
closed symmetric operators in $\gotH_n$ and let
$S_{0n} = S^*_{0n}\in \Ext_{S_n}$. Further, let $A$
and $A_0$ be given by \eqref{6.5a} and \eqref{6.8}, respectively.

If the deficiency indices of $S_n$ are finite for each $n \in \N$, then $A_0$ is $ac$-minimal, in particular,
$\sigma_{ac}(A_0) \subseteq \sigma_{ac}(\wt A)$ for any
self-adjoint extension $\wt A$ of $A$.
\et
\begin{proof}
By Theorem \ref{VI.3} there is a sequence of boundary triplets
$\gP_n := \{\cH_n,\gG_{0n},\gG_{1n}\}$, $n \in \N$, for ${
S^*_{n}}$ such that ${S_{0n}}  = {
S^*_n}\upharpoonright\ker(\gG_{0n})$, $n \in \N$, and the direct
sum $\gP =  \{\cH,\gG_0,\gG_1\} = \bigoplus^\infty_{n=1}\gP_n$ of
the form \eqref{6.5a} is a boundary triplet for $A^*$ satisfying
$A_0 = A^*\upharpoonright\ker(\gG_0)$. By Proposition
\ref{prop2.1}, any $\wt A = {\wt A}^*\in \Ext_A$ admits a
representation  $\wt A = A_\gT$ with  $\gT = \gT^*\in \wt
\cC(\cH).$   By \cite[Corollary 4.2(i)]{MN2011},  we can
{ assume that } $\wt A$ and
$A_0,$ { are disjoint, that is }
{ $\gT= B =B^*\in \cC(\cH)$}. Consider the generalized Weyl
function $M_B(\cdot) := (B - M(\cdot))^{-1}$, where $M(\cdot) =
\bigoplus^\infty_{n=1}M_n(\cdot)$, cf. \eqref{6.7}. Clearly,
\bed
\imag(M_B(z)) = M_B(z)^*\imag(M(z))M_B(z), \quad z \in \C_+.
\eed
Denote by $P_N$, $N \in \N$,  the orthogonal projection from $\cH$
onto the subspace $\cH_N := \bigoplus^N_{n=1}\cH_n$. Setting
$M^{P_N}_B(z):= P_NM_B(z)\upharpoonright\cH_N$,  { and taking
into account the block-diagonal form of $M(\cdot)$ and the
inequality $\imag(M(z))>0$ we obtain}
\bea\la{5.18}
\lefteqn{
\imag(M^{P_N}_B(z)) = \imag(P_NM_B(z)P_N) }\\
& & = P_NM_B(z)^*\imag(M(z))M_B(z)P_N \ge
M^{P_N}_B(z)^*\imag(M^{P_N}(z))M^{P_N}_B(z), \nonumber \eea
where $M^{P_N}(z) := P_NM(z)\upharpoonright\cH_N =
\bigoplus^N_{n=1}M_n(z)$. Since $P^N$ is a finite dimensional
projection the limits $M^{P_N}_B(t) := \slim_{y\downarrow 0}M^{P_N}_B(t+iy)$
and $M^{P_N}(t) := \slim_{y\downarrow 0}M^{P_N}(t+iy)$ exists for a.e. $t \in
\R$. From \eqref{5.18} we get
\be\la{5.19}
\imag(M^{P_N}_B(t)) \ge
M^{P_N}_B(t)^*\imag(M^{P_N}(t))M^{P_N}_B(t) \quad\text{ for
a.e.}\quad t \in \R.
\ee
Since $M_B(\cdot)$ is { a} generalized Weyl function, it is a
strict $R_\cH$-function, that is, $\ker(\imag(M_B(z)))=\{0\},\
z\in \C_+$. Therefore, $M^{P_N}_B(\cdot)$ is also strict. Hence
$0\in \varrho(M^{P_N}_B(z))$, $z\in \C_+$, and $G_N(\cdot) := -
(M^{P_N}_B(\cdot))^{-1}$ { is strict}. Since both $G_N(\cdot)$ and
$M^{P_N}_B(\cdot)$ are { matrix-valued} $R$-functions, the limits
$M^{P_N}_B(t+i0):= \lim_{y\downarrow 0} M^{P_N}_B(t+iy)$ and $G_N(t+i0):=
\lim_{y\downarrow 0} G_N(t+iy)$ exist for a.e. $t\in \R.$ Therefore,
passing to the limit in the identity $M^{P_N}_B(t+iy)G_N(t+iy) =
-I$ as $y\to 0,$  we get $M^{P_N}_B(t+i0)G_N(t+i0) = -I$ for a.e.
$t\in \R.$ Hence $M^{P_N}_B(t) := M^{P_N}_B(t+i0)$ is invertible
for a.e. $t \in \R$.

Further, combining  \eqref{5.19} with Lemma \ref{V.9}(ii) we get
     \bed
\dim\left(\overline{\ran\left(\sqrt{\imag
M^{P_N}(t)}M^{P_N}_B(t)\right)}\right) \le d_{M^{P_N}_B}(t) \quad
\text{for a.e.} \quad t \in \R.
        \eed
Since $M^{P_N}_B(t)$ is invertible for a.e. $t \in \R,$ we find
      \be\label{5.21}
 d_{M^{P_N}}(t) := \dim\left(\overline{\ran \left( \sqrt
{\imag M^{P_N}(t)}\right)}\right) \le d_{M^{P_N}_B}(t)\quad
\text{for a.e.} \quad t \in \R.
       \ee
Let ${ D}_N = P_N \oplus D_0$ where { ${D}_0 \in
\gotS_2(\cH^\perp_N)$ and satisfy} $\ker({ D}_0) = \ker({D}^*_0) =
\{0\}$. Then $\ker({ D}_N) = \ker({D}^*_N) = \{0\}$ and $P_N =
P_N{D}_N = { D}_NP_N$. By Lemma \ref{V.9}(iii),
$d_{M^{P_N}}(t) \le d_{M^{{ D}_N}_B}(t)$ for a.e. $t \in \R$.
Further, for any ${D}\in \gotS_2(\cH)$ and satisfying $\ker({D}) =
\ker({D}^*) = \{0\},$ $d_{M_B^{D}}(t) = d_{M^{{D}_N}_B}(t)$ for
a.e. $t \in \R.$ { Combining this equality with \eqref{5.21}}
we get  $d_{M^{P_N}}(t) \le d_{M_B^{ D}}(t)$ for a.e. $t \in \R$
and $N \in \N$. Since
\be\la{6.13A}
d_{M^{P_N}}(t) = \sum^N_{n=1}d_{M_n(t)} \quad \mbox{and} \quad
d_{M^D}(t) = \sum^\infty_{n=1}d_{M_n}(t)
\ee
for a.e. $t \in \R,$ we finally prove that $d_{M^{ D}}(t) \le
d_{M_B^{D}}(t)$ for a.e. $t \in \R$. One completes the
proof by applying Theorem \ref{IV.10}(i).
\end{proof}

Taking into account Proposition \ref{III.8}  and Corollary \ref{IV.9}
the proof of Theorem \ref{VI.6} shows us that in fact the spectral multiplicity
function $N_{\wt A^{ac}}(t)$ can only be increase with respect to
$N_{A^{ac}_0}(t)$, that is, one always has $N_{\wt A^{ac}}(t) \ge
N_{A^{ac}_0}(t)$ for a.e. $t \in \R$ and any self-adjoint extension
$\wt A$ of $A$.

\bc\la{VI.7A}
Let the assumptions of Theorem \ref{VI.6} be satisfied. If $S_n \ge 0$, $n \in \N$ and if the deficiency
indices of $S_n$ are finite for each $n \in \N$, then the
Friedrichs and the Krein extensions $A^F$ and $A^K$ of
$A$ are $ac$-minimal. In particular, $(A^F)^{ac}$ and $(A^K)^{ac}$
are unitarily equivalent.
\ec
\begin{proof}
Combining  Theorem \ref{VI.6}  and Corollary \ref{cor5.5} one
immediately proves the assertions.
\end{proof}
\bc\la{VI.7}
Let the assumptions of Theorem \ref{VI.4} be satisfied. Further, let the deficiency indices of
$S_n$ be finite for each $n \in \N$.

\item[\;\;\rm (i)]
If
\be\la{6.13}
\gd_\infty := \{t \in \R: \sum_{n\in\N}d_{M_n}(t) = \infty\},
\ee
then for any self-adjoint extension $\wt A$ of $A$ the parts $\wt A^{ac}E_{\wt A}(\gd_\infty)$ and
$A^{ac}_0E_{A_0}(\gd_\infty)$ are unitarily equivalent.

\item[\;\;\rm (ii)]
If $\gd$ is a Lebesgue measurable subset of $\R$ such that $\sup_n\gotm^+_n(t) < \infty$ for a.e. $t \in \gd$, then
for any self-adjoint extension $\wt A$ of $A$ the parts $\wt A^{ac}E_{\wt A}(\gd_\infty \cup \gd)$ and
$A^{ac}_0E_{A_0}(\gd_\infty\cup\gd)$ are unitarily equivalent.
\ec
\begin{proof}
(i)
By \eqref{6.13A} and \eqref{6.13} we find $d_{M^D}(t) = + \infty$ for
a.e. $t \in \gd_\infty$.  Since by Theorem \ref{VI.6} the spectral multiplicity function
can only be increase for self-adjoint extensions $\wt A$ one gets that
$N_{\wt A^{ac}}(t)  = N_{A^{ac}_0}(t)$ for a.e. $t \in \gd$ which immediately yields
the unitary equivalence of the parts $\wt A^{ac}E_{\wt A}(\gd_\infty)$ and
$A^{ac}_0E_{A_0}(\gd_\infty)$.

(ii)
By Theorem \ref{VI.4} the parts $\wt A^{ac}E_{\wt A}(\gd)$ and
$A^{ac}_0E_{A_0}(\gd)$ are unitarily equivalent. Using (i) we immediately obtain
the unitary equivalence of the parts
$\wt A^{ac}E_{\wt A}(\gd_\infty \cup \gd)$ and $A^{ac}_0E_{A_0}(\gd_\infty\cup\gd)$.
\end{proof}
\bc\la{VI.11}
Let the assumptions of Theorem \ref{VI.6} be satisfied.
If the deficiency indices of $S_n$ are finite for each $n \in \N$,
then $\overline{\bigcup_{n\in\N}\gs_{ac}(S_{0n})} \subseteq
\gs_{ac}(\wt A)$ for any self-adjoint extension $\wt A$ of $A$.
If in addition condition \eqref{5.0} is valid
and the extensions $S_{0n}$ are purely
absolutely continuous for each $n \in \N$, then
\be\la{6.17}
\gs_{ac}(\wt A) = \overline{\bigcup_{n\in\N}\gs_{ac}({ S_{0n}})}.
\ee
\ec
\begin{proof}
The first statement immediately follows from  Theorem \ref{VI.6}.
Relation \eqref{6.17} is implied by  { Proposition} \ref{VI.5b}.
\end{proof}
\bc\la{VI.12}
Let the assumptions of Theorem \ref{VI.6} be satisfied. Further, let
the pairs $\{S_n,S_{0n}\}$, $n \in \N$, be
unitarily equivalent to $\{S_1,S_{01}\}$. If the deficiency indices of $S_n$ are finite for each $n \in \N$,
holds, then for any self-adjoint extension $\wt A$ of $A$ satisfying
condition \eqref{5.0} the $ac$-parts $\wt A^{ac}$ and $A^{ac}_0$ are unitarily equivalent.
\ec
\begin{proof}
The proof follows immediately from Corollary \ref{VI.5a}.
\end{proof}
\br
{\rm
(i) For { the} special case { $n_\pm(S_n) = 1$, $n \in
\N$}, Theorem \ref{VI.6} complements \cite[Corollary 5.4]{ABMN05}
where the inclusion { $\sigma_{ac}(A_0)\subseteq
\sigma_{ac}(\wt A)$} was proved. Moreover, Corollary  \ref{VI.12}
might be { regarded} as a substantial generalization of
\cite[Theorem 5.6(i)]{ABMN05} to the case $n_{\pm}({ S_n})>1$.
However,  in the case $n_{\pm}({ S_n})=1,$ Corollary
\ref{VI.12} is implied by \cite[Theorem 5.6(i)]{ABMN05} where the
unitary equivalence of ${\wt A}^{ac} = {\wt A}_B^{ac}$ and
$A_0^{ac}$ was proved  under { the weaker} assumption that
$B$ is purely singular. Indeed, by Proposition \ref{prop2.9}
condition \eqref{5.0} with $\wt A = A_B$  is equivalent to
the discreteness of $B$.

(ii) The inequality   $N_{A^{ac}_0}(t) \le
N_{\wt{A}^{ac}}(t)$ in Theorem \ref{VI.6} might be strict even
for $t\in\sigma_{ac}(A_0)$. Indeed, assume that $(\alpha,\beta)$
is a gap for all { except for the operators}
$S_1,\ldots,S_N$. Set $S_1:=\bigoplus^N_{n=1} { S_n}$ and
$S_2:=\bigoplus^{\infty}_{ n=N+1}{ S_n}$.  Then
$n_{\pm}(S_2)=\infty$ and $(\alpha,\beta)$ is a gap for $S_2.$ By
\cite{Bra04} there exists ${\wt S}_2={\wt S}^*_2\in \Ext_{S_2}$
having $ac$-spectrum within $(\alpha,\beta)$ of arbitrary
multiplicity. Moreover, even for operators $A =
\bigoplus^\infty_{n=1} S_n$  satisfying assumptions of
Corollary \ref{VI.12} with $n_{\pm}({ S_n})=1$ the inclusion
{ $\gs_{ac}(A_0)\subseteq \gs_{ac}(\wt A)$} might be strict
whenever condition \eqref{5.0} is violated, { cf. \cite{Bra04}
or \cite[Theorem 4.4]{ABMN05} which  guarantees the
appearance of prescribed spectrum either within one gap or within
several gaps of $A_0$}.
}
\er

\section{Sturm-Liouville operators with bounded operator potentials}

Let $\cH$ be an infinite dimensional separable Hilbert space. As usual, $L^2({\mathbb
R}_+,\cH)$ stands for the Hilbert
space of (weakly) measurable vector-functions $f(\cdot):{\mathbb
R}_+\to\cH$ satisfying $\int_{{\mathbb
R}_+}\|f(t)\|^2_{\cH}dt<\infty$. Denote also by $W^{2,2}({\mathbb
R}_+,\cH)$ the Sobolev space of
vector-functions taking  values in $\cH$.

Let $T = T^* \ge 0$ be a bounded operator in $\cH$.  Denote by
$A:=A_{\min}$ the minimal operator generated by $\cA$,
cf. \eqref{8.10}, in $\gH:=L^2({\R}_+,\cH)$.
It is known (see \cite{GG91,Rof-Bek69}) that the minimal operator $A$ is
given by
\begin{eqnarray}\la{8.1}
(Af)(x)   =  -\frac{d^2}{dx^2}f(x) + Tf(x), \quad f \in \dom(A) =
W^{2,2}_0({\mathbb R}_+,\cH),
\end{eqnarray}
where $W^{2,2}_0(\R_+,\cH) := \{f \in W^{2,2}(\R_+,\cH):\ f(0) =
f'(0) =0 \}$.

The operator $A$ is closed, symmetric and non-negative. It can be
proved similarly to \cite[Example 5.3]{BMN02} that $A$ is simple.
The adjoint operator $A^*$ is given by \cite[Theorem 3.4.1]{GG91}
\be\la{8.1B}
(A^*f)(x) = -\frac{d^2}{dx^2}f(x) + Tf(x), \quad f \in \dom(A^*) =
W^{2,2}(\R_+,\cH).
\ee
The Dirichlet realization $A^D$ is defined by $A^Df := \cA f$, $f \in
\dom(A^D) := \{g \in W^{2,2}(\R_+,\cH): g(0) = 0\}$.
Similarly, the Neumann realization $A^N$ is defined by $A^Nf := \cA f$,
$f \in \dom(A^N) := \{g \in W^{2,2}(\R_+,\cH: g'(0) = 0\}$.
Since $\dom(A) \subseteq \dom(A^D),\dom(A^N) \subseteq \dom(A^*)$
one gets that $A^D$ and $A^N$ are proper
extensions of $A$. One easily verifies that $A^D$ and $A^N$  are
symmetric extensions.

By \cite[Theorem 1.3.1]{LioMag72} the trace operators
$\Gamma_0, \ \Gamma_1: \dom(A^*)\to \cH$,
\be\la{8.2}
\gG_0 f = f(0) \quad \mbox{and} \quad \gG_1 f = f'(0),
\quad f\in \dom(A^*),
\ee
are well defined. Moreover, the deficiency subspace
$\mathfrak N_z(A)$ is
     \begin{equation}\label{8.3A}
\mathfrak N_z(A) = \{e^{ix\sqrt{z-T}}h: \ h \in \cH\}, \qquad z\in
\C_{\pm},
         \end{equation}
with the cut along $\R_+$.
\begin{lemma}\label{lem6.1}
A triplet $\gP = \{\cH,\gG_0,\gG_1\}$, where $\gG_0$ and $\gG_1$
are defined by  \eqref{8.2}, forms
 a boundary triplet for $A^*$.
The corresponding Weyl function $M(\cdot)$ is
\be\la{8.2B} M(z) = i\sqrt{z-T} = i\int\sqrt{z - \gl}\;dE_T(\gl),  \quad z \in \C_+.
 \ee
\end{lemma}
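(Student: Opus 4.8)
The plan is to verify the two defining properties of a boundary triplet---Green's identity and surjectivity of $\gG=(\gG_0,\gG_1)$---and then to compute the Weyl function directly from its definition \eqref{2.3A}. Since $T=T^*\ge 0$ is bounded, the adjoint $A^*$ acts by $\cA$ on all of $W^{2,2}(\R_+,\cH)$ by \eqref{8.1B}, and the trace maps $\gG_0 f = f(0)$, $\gG_1 f = f'(0)$ are well defined and bounded from $\gotH_+$ (the graph-norm space of $A^*$) into $\cH$ by \cite[Theorem 1.3.1]{LioMag72}. The key identity to establish is \eqref{2.10A}, which here is just the second Green's formula for $-d^2/dt^2$ on the half-line: for $f,g \in W^{2,2}(\R_+,\cH)$,
\be
(A^*f,g) - (f,A^*g) = \int_{\R_+}\bigl(-(f'',g) + (f,g'')\bigr)\,dt,
\ee
the potential term $T$ cancelling since $T=T^*$. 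Integration by parts twice turns the right-hand side into the boundary contribution $-[(f',g)-(f,g')]_{t=0}^{t=\infty}$, and the boundary terms at $t=\infty$ vanish because $f,f',g,g' \in L^2$ with $f,g\in W^{2,2}$ forces decay. What survives is exactly $(f'(0),g(0))_\cH - (f(0),g'(0))_\cH = (\gG_1 f,\gG_0 g)_\cH - (\gG_0 f,\gG_1 g)_\cH$, which is \eqref{2.10A}.

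For surjectivity of $\gG$, I would exhibit, for an arbitrary prescribed pair $(a,b)\in\cH\oplus\cH$, an explicit function $f \in W^{2,2}(\R_+,\cH)$ with $f(0)=a$ and $f'(0)=b$; for instance a compactly supported smooth cutoff times $(a + tb)$ works, or one builds $f$ from the deficiency elements. This shows $\ran(\gG)=\cH\oplus\cH$, completing the verification that $\gP$ is a boundary triplet. By Proposition \ref{prop2.1} the extension $A_0 = A^*\upharpoonright\ker(\gG_0)$ is then automatically self-adjoint, and it coincides with $A^D$ since $\ker(\gG_0)=\{f: f(0)=0\}$.

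To compute $M(\cdot)$, I would use the explicit deficiency subspace \eqref{8.3A}. For $z\in\C_+$ the element $f_z(t) = e^{it\sqrt{z-T}}h$ lies in $\gotN_z = \ker(A^*-z)$, since one checks that $e^{it\sqrt{z-T}}h \in L^2(\R_+,\cH)$ (the branch of $\sqrt{z-T}$ is chosen so that $\im\sqrt{z-\gl}>0$ for $z\in\C_+$, giving decay) and that $-f_z'' + Tf_z = z f_z$. Then $\gG_0 f_z = f_z(0) = h$ and $\gG_1 f_z = f_z'(0) = i\sqrt{z-T}\,h$. By the definition $\gamma(z) = (\gG_0\upharpoonright\gotN_z)^{-1}$ we have $\gamma(z)h = f_z$, and hence
\be
M(z) = \gG_1\gamma(z)h = f_z'(0) = i\sqrt{z-T}\,h,
\ee
which is precisely \eqref{8.2B}, the spectral-integral form following from the functional calculus for $T$.

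The main technical point to handle carefully is the choice of branch of the operator square root $\sqrt{z-T}$ and the justification that $e^{it\sqrt{z-T}}h$ genuinely belongs to $L^2(\R_+,\cH)$ uniformly in the spectral variable: one must confirm that $\im\sqrt{z-\gl}>0$ for all $\gl\in\gs(T)\subset[0,\infty)$ when $z\in\C_+$, so that $\|e^{it\sqrt{z-T}}h\|_\cH^2 = \int e^{-2t\,\im\sqrt{z-\gl}}\,d\|E_T(\gl)h\|^2$ is integrable in $t$. This is where the square-root branch encodes the Herglotz property of $M$, and once it is pinned down the rest is a routine differentiation under the spectral integral. The vanishing of the boundary terms at infinity in Green's identity is the other place to be slightly careful, but it follows from the standard Sobolev embedding forcing $f,f'\to 0$ at $+\infty$.
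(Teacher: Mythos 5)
Your proposal is correct and follows essentially the same route as the paper: Green's identity by integration by parts with vanishing boundary terms at infinity, surjectivity of $\gG$ via the trace theorem (the paper cites \cite[Theorem 1.3.2]{LioMag72} where you construct a lifting explicitly, which amounts to the same thing), and the Weyl function read off from the deficiency elements $e^{it\sqrt{z-T}}h$ of \eqref{8.3A} exactly as the paper does. Your added care about the branch of $\sqrt{z-T}$ ensuring $\im\sqrt{z-\gl}>0$ is the right point to pin down and is consistent with the paper's choice of cut along $\R_+$.
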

\begin{proof}
One obtains the Green formula integrating by parts. The
surjectivity of the mapping $\gG:=(\Gamma_0,\Gamma_1):
\dom(A^*) \rightarrow \cH \oplus \cH$ follows from
\eqref{8.2} and \cite[Theorem 1.3.2]{LioMag72}. Formula
\eqref{8.2B} is implied by \eqref{8.3A}.
\end{proof}
\bl\la{VII.1}
Let $T$ be a bounded non-negative self-adjoint operator in $\cH$
and let $A$ and $\Pi=\{\cH,\Gamma_0,\Gamma_1\}$ be defined by
\eqref{8.1} and \eqref{8.2}, respectively. Then

\item[\rm\;\;(i)] the invariant maximal normal function $\gotm^+(t)$
of the Weyl function $M(\cdot)$ is finite for all $t \in \R$ and
satisfies
\be\la{7.3a}
\gotm^+(t) \le (1 + \sqrt{2})(1 + t^2)^{1/4}, \quad t \in \R.
\ee

\item[\rm\;\;(ii)] The limit $M(t+i0) := \slim_{y\downarrow 0}M(t+iy)$
exists, is bounded and equals
\be\la{7.3}
M(t +i0) = i\int_\R \sqrt{t-\gl} dE_T(\gl) \quad \text{for any}\ \
t \in \R.
\ee
\item[\rm\;\;(iii)]  $d_M(t) = \dim(\ran(E_T([0,t))))$ for any $t
\in \R$.
\el
\begin{proof} (i)
It follows from \eqref{8.2B} and definition \eqref{4.12A} that
\bed
\gotm^+(t) \le \sup_{y\in(0,1]}\sup_{\gl \ge 0}
\left|\frac{\sqrt{t + iy -\gl} -
\real(\sqrt{i-\gl})}{\imag(\sqrt{i - \gl})}\right|.
\eed
Clearly, $\sqrt{i-\gl} = (1 + \gl^2)^{1/4}e^{i(\pi -\varphi)/2}$
where $ \varphi := \arccos\left(\tfrac{\gl}{\sqrt{1 +
\gl^2}}\right).$ Hence
\bed
\left|\frac{\re(\sqrt{i-\gl})}{\im(\sqrt{i-\gl})}\right| =
\tan\left(\frac{\varphi}{2}\right) = \frac{1}{\gl + \sqrt{1 + \gl^2}} \le 1,
\quad \gl \ge 0.
\eed
Furthermore, we have
\bed
\left|\frac{\sqrt{t + iy -\gl}}{\im(\sqrt{i - \gl})}\right|
\le \sqrt{2}\sqrt{\frac{\sqrt{(\gl - t)^2 + y^2}}{\gl + \sqrt{1 +
\gl^2}}} \le \sqrt{2}(1 + t^2)^{1/4}
\eed
for $\gl \ge 0$, $t \in \R$ and $y \in (0,1]$ which yields
\eqref{7.3a}.

(ii) From \eqref{8.2B} we find $M(t):= M(t +i0) := \slim_{y\downarrow 0}
i\sqrt{t+iy - T} = i\sqrt{t-T},$ for  any $t \in \R,$ which proves
\eqref{7.3}. Clearly, $M(t)\in [\cH]$ since $T\in [\cH].$

(iii) It follows that $\im(M(t)) = \sqrt{t-T}E_T([0,t))$, which
yields $d_M(t) = \dim(\ran(\im(M(t))))= \dim(\ran(E_T([0,t))))$.
\end{proof}

With $A=A_{\min}$ one associates a
closable quadratic form ${\mathfrak t}'_F[f]:= (Af,f)$,
$\dom({\mathfrak t}')=\dom(A)$. Its closure ${\mathfrak t}_F$ is
given by
\be\la{8.1A} { \mathfrak t_F[f]} := \int_{\R_+}
\left\{\|f'(x)\|^2_\cH + \|\sqrt{T}f(x)\|^2_\cH\right\}dx, \ee
$f \in \dom({\mathfrak t}_F) = W^{1,2}_0(\R_+,\cH)$,
where  $W^{1,2}_0(\R_+,\cH) := \{f \in W^{1,2}(\R_+,\cH): f(0)=0\}.$
By definition,  the Friedrichs extension $A^F$ of $A$ is a
self-adjoint operator associated with ${\mathfrak t}_F$. Clearly,
$A^F = A^*\upharpoonright (\dom(A^*)\cap\dom({\mathfrak t}_F)).$
\bt\label{prop7.2A}
Let $T\ge 0$, $T = T^*\in [\cH]$, and $t_0:=\inf\gs(T)$.
Let $A$ be defined by \eqref{8.1} and $\gP = \{\cH,\gG_0,\gG_1\}$
the boundary triplet for $A^*$ defined by \eqref{8.2}.
Then the following holds:

\item[\rm \;\;(i)]
The Dirichlet realization $A^D$ coincides with
$A_0 : = A^*\upharpoonright\ker(\gG_0)$ which is identical
with the Friedrichs extension $A^F$.
Moreover, $A^D$ is absolutely continuous and its
spectrum is given by $\gs(A^D)=\gs_{ac}(A^D)= [t_0,\infty)$.

\item[\rm\;\;(ii)]
The Neumann realization $A^N$
coincides with $A_1 := A^*\upharpoonright \ker(\gG_1)$.  $A^N$ is absolutely continuous
$(A^N)^{ac} = A^N$ and $\gs(A^N) = \gs_{ac}(A^N) =
[t_0,\infty)$.

\item[\rm \;\;(iii)] The Krein realization (or extension) $A^K$ is given by
\begin{equation}\label{7.4C}
\dom(A^K)=\{f\in W^{2,2}({\mathbb R}_+,\cH):\ f'(0) +
\sqrt{T}f(0)=0\}.
\end{equation}
Moreover, $\ker(A^K) = \gH_0 := \overline{\gH'_0}$,  $\gH'_0:=
\{e^{-x\sqrt{T}}h:\ h\in \ran(T^{1/4})\}$ and the restriction
$A^K\upharpoonright\dom(A^K) \cap \gH^{\perp}_0$ is absolutely
continuous, that is, $\gH^{\perp}_0=\gH^{ac}(A^{K})$ and $A^K=
{0}_{\gH_0}\oplus (A^K)^{ac}$. In particular, $\gs(A^K) = \{0\} \cup
\gs_{ac}(A^K)$ and $\gs_{ac}(A^K) = [t_0,\infty)$.

\item[\rm \;\;(iv)]  The realizations $A^D$, $A^N$ and $(A^K)^{ac}$ are
unitarily  equivalent.
\et
\begin{proof}
(i)  It follows from  \eqref{8.1B} and \eqref{8.2} that $\dom(A^D) = \dom(A_0)$
which yields  $A^D = A_0$.
Since $\dom(A_0) \subseteq W^{1,2}_0(\R_+,\cH)=
\dom({\mathfrak t}_F)$  we have $A^F = A_0$ (see \cite[Section
8]{AG81} and \cite[Theorem 6.2.11]{Ka76}).
It follows from \eqref{7.3} and \cite[Theorem 4.3]{BMN02} that
$\gs_p(A_0) = \gs_{sc}(A_0) = \emptyset$. Hence $A_0$ is
absolutely continuous. Taking into account Lemma \ref{VII.1}(iii)
and Proposition \ref{III.8} we get $\gs(A_0) = \gs_{ac}(A_0) =
\cl_{ac}(\supp(d_M)) = [t_0,\infty)$ which proves (i).

(ii) Obviously we have $\dom(A^N) = \dom(A_1) := \ker(\gG_1)$
which proves $A^N = A_1$. It follows from  Lemma \ref{lem6.1} and
\eqref{2.5} that the Weyl function corresponding to $A_1$ is given by
   \be\label{7.14B}
M_0(z) := (0 -M(z))^{-1} = i(z - T)^{-1/2} =
i\int \frac{1}{\sqrt{z -\gl}}dE_T(\gl), \quad z \in \C_+.
  \ee
Since  $M_0(\cdot)$ is regular within $(-\infty, t_0),$  we have
$(-\infty, t_0)\subset \varrho(A_1)$. Further, let $\tau
> t_0$. We set $\cH_\tau := E_T([t_0,\tau))\cH$ and note that for any $h \in
\cH_\tau$ and $t > \tau$
\begin{equation}\label{7.14}
\bigl(M_0(t+i0)h,h\bigr) = i\bigl((t-T)^{-1/2}h,h\bigr)=
i\int_{t_0}^{\tau} \frac{1}{\sqrt{t -\gl}}d(E_T(\gl)h,h).
\end{equation}
Hence  for any $h\in \cH_\tau\setminus\{0\}$ and   $t > \tau$
\bed
0 <
(t-t_0)^{-1/2}\|h\|^2 \le \imag(M_0(t+i0)h,h) = \int_{t_0}^{\tau}
(t -\gl)^{-1/2}d(E_T(\gl)h,h) <\infty.
\eed
By \cite[Proposition 4.2]{BMN02},  $\gs_{ac}(A_1) \supseteq
[\tau,\infty)$ for any $\tau > t_0,$ which yields $\gs_{ac}(A_1) =
[t_0,\infty)$. It remains to show that $A_1$ is purely
absolutely continuous. Since $M_0(t+i0)\not\in [\cH]$ we cannot
apply \cite[Theorem 4.3]{BMN02}. Fortunately, to we can use \cite[Corollary
4.7]{BMN02}. For any   $t \in \R,$ \ $y > 0,$ and $h \in \cH$ we
set
\bed V_h(t+iy) := \im(M_0(t+iy)h,h) =
\int\im\left(\frac{1}{\sqrt{\gl-t-iy}}\right)d(E_T(\gl)h,h). \eed
Obviously, one has
\bed V_h(t+iy) \le \int\frac{1}{((\gl-t)^2 +
y^2)^{1/4}}d(E_T(\gl)h,h), \quad t \in \R, \quad y > 0, \quad h
\in \cH. \eed
Hence
\bed V_h(t+iy)^p \le \|h\|^{2(p-1)} \int\frac{1}{((\gl-t)^2 +
y^2)^{p/4}}d(E_T(\gl)h,h), \quad p \in (1,\infty). \eed
We show that  for $p \in (1,2)$ and $-\infty < a < b < \infty$
\bed C_p(h; a,b) := \sup_{y\in (0,1]}\int^b_a V_h(t+iy)^p\;dt <
\infty. \eed
Clearly,
\bead
\int^b_a V_h(t+iy)^p dt \le \|h\|^{2(p-1)}\;\int^{\|T\|}_0
d(E(\gl)h,h) \int^b_a \frac{1}{((\gl-t)^2 + y^2)^{p/4}}dt \nonumber  \\
= \|h\|^{2(p-1)}\;\int^{\|T\|}_0 d(E(\gl)h,h) \int^{b-\gl}_{a-\gl}
\frac{1}{(t^2 + y^2)^{p/4}}dt. \quad
\eead
Note, that for $p \in (1,2)$ and  $-\infty < a < b < \infty$
  \begin{equation*}
\int^{b-\gl}_{a-\gl} \frac{1}{(t^2 + y^2)^{p/4}}dt \le
\int^{b}_{a-\|T\|} \frac{1}{t^{p/2}}dt =: \varkappa_p(b,a-\|T\|) <
\infty,
     \end{equation*}
Hence  $C_p(h;a,b) \le \varkappa_p(b,a-\|T\|)\|h\|^{2p} < \infty$
for $p \in (1,2)$, $-\infty < a < b < \infty$ and $h \in \cH$. By
\cite[Corollary 4.7]{BMN02},  $A_1$ is purely absolutely
continuous on any bounded interval $(a,b)$. Hence $A_1$ is purely
absolutely continuous.

(iii)\ By \cite[Proposition 5]{DM91}  $A^K$ is defined by $A^K=
A^*\upharpoonright \ker(\gG_1 - M(0)\gG_0).$ It follows from \eqref{8.2B}
that $M(0)=-\sqrt{T}$. Therefore, $A^K$ is defined by \eqref{7.4C}.

It follows from the extremal property of the Krein extension that
$\ker(A^K) =\ker(A^*)$. Clearly,   $f_h(x) := \exp(-x\sqrt{T})h\in
L^2(\R_+,\cH),$  $h\in \ran(T^{1/4}),$ since
\bead
\lefteqn{
\int^{\infty}_0\!\|\exp(-x\sqrt{T})h\|^2_{\cH}dx}\\
& &
=
\int^{\|T\|}_0\!d\rho_h(t)\int^{\infty}_0
\!e^{-2x\sqrt{t}}dx=\int^{\|T\|}_0 \!\frac{1}{2\sqrt
t}d\rho_h(t)<\infty,
\eead
where $\rho_h(t):=\bigl(E_T(t)h,h\bigr)$. Thus,
$\gH'_0\subset \ker(A^*).$ It is easily seen that  $\gH'_0$ is
dense in $\gH_0.$ To investigate the rest of the spectrum of
$A^K$ consider the Weyl function $M_K(\cdot)$ corresponding to
$A^K$. It follows from \eqref{8.2B} and \eqref{2.5} that
\bead
\lefteqn{
M_K(z)=M_{-\sqrt{T}}(z)=-\bigl(\sqrt{T}+M(z)\bigr)^{-1}}\\
& &
= -(\sqrt{T}+i\sqrt{z-T})^{-1}
=\frac{1}{z}(i\sqrt{z-T}-\sqrt{T}) = -\frac{2\sqrt{T}}{z}+\Phi(z).
\nonumber
\eead
where $\Phi(z):=\frac{1}{z}[i\sqrt{z-T}+\sqrt{T}]$.
For $t>0$ we get
\begin{equation}\label{7.13}
\im M_K(t+i0) =  \im\Phi(t+i0)=t^{-1}\sqrt{t-T}E_T([0,t)).
\end{equation}
Hence, by \cite[Theorem4.3]{BMN02},  $\sigma_p(A^K)\cap(0,\infty)=
\sigma_{sc}(A^K)\cap(0,\infty)=\emptyset$. It follows from
\eqref{7.13} that $\im{(M_K(t+i0))}>0$ for $t>t_0$. By Corollary
\ref{IV.9} we find $\sigma_{ac}(A^K)= [t_0,\infty)$.

(iv) It follows from \eqref{7.3} and \eqref{7.13} that
$d_M(t)=d_{M_K}(t)= \dim(\ran(E_T([0,t))))$ for  $t>t_0$.
Combining this equality with $\sigma_{ac}(A^K)= \sigma_{ac}(A^F)=
[t_0,\infty),$ we conclude from Theorem \ref{IV.10}(ii)
that $A^F$ and $(A^K)^{ac}$ are unitarily equivalent.

Passing to $A_1,$ we assume that $1 \le \dim(\ran(E_T([0,s))))=p_1
< \infty$ for some $s > 0$. Let $\gl_k$, $k \in \{1,\ldots,p\}, \
p\le p_1$, be the set of distinct eigenvalues within $[0,s)$.
Since $M_0(t+iy)E_T([0,t))$ is the $p\times p$ matrix-function,
the limit $M_0(t+i0)E_T([0,t))$ exists for $t \in [0,s) \setminus
\bigcup^p_{k=1}\{\gl_k\}.$   It follows from \eqref{7.14} that
\bed
\im(M_0(t)) = |T-t|^{-1/2}E_T([0,t)),\qquad t \in [0,s)
\setminus \bigcup^p_{k=1}\{\gl_k\}.
\eed
This yields
\bed
d_{M_0(t))} := \dim(\ran(\im(M_0(t)))) =
\dim(\ran(E_T([0,t)))) = d_M(t)
\eed
for a.e $t \in [0,s) \setminus \bigcup^p_{k=1}\{\gl_k\}$, that
is, for a.e. $t \in [0,s)$.

If $\dim(E_T([t_0,s))) = \infty$, then there exists  a point
$s_0\in (0, s),$ such that $\dim(E_T([0,s_0])) = \infty$ and
$\dim(E_T([0,s))) < \infty$ for $s \in [0,s_0)$. For any  $t \in
(s_0,s)$ choose  $\tau \in (s_0,t)$ and note that
$\dim(\ran(E_T([0,\tau)))) = \infty$. We set $\cH_\tau :=
E_T([0,\tau))\cH$ and $\cH_\infty := E_T([\tau,\infty))\cH$ as
well as $T_\tau := TE_T([0,\tau))$ and $T_\infty :=
TE_T([\tau,\infty))$. Further, we choose Hilbert-Schmidt
operators $D_\tau$ and ${D}_\infty$ in $\cH_\tau$ and
$\cH_\infty$, respectively, such that $\ker(D_\tau) =
\ker(D^*_\tau) = \ker(D_\infty) = \ker(
D^*_\infty) = \{0\}$. According to the decomposition
$\cH=\cH_\tau \oplus\cH_\infty$ we have $M_0 = M_\tau
\oplus M_\infty,$ $D = D_\tau \oplus D_\infty$ and
$d_{M_0^{ D}}(t) = d_{M_\tau^{{ D}_\tau}}(t) + d_{M_\infty^{
D_\infty}}(t)$ for a.e. $t \in [0,\infty).$
Hence $d_{M_0^{D}}(t) \ge d_{M_\tau^{{ D}_\tau}}(t)$ for a.e. $t
\in [0,\infty)$. Clearly, $M_\tau (t+iy) = i(t + iy -
T_\tau)^{-1/2}.$ If $t > \tau$, then $t\in \varrho(T_\tau)$ and
$M(t) := \slim_{y\downarrow 0} M(t+i0)$ exists and
\bed
M_\tau(t) := \slim_{y\to 0}M_\tau(t+iy) = i(t -
T_\tau)^{-1/2}E_T([0,\tau)).
\eed
Hence $d_{M_\tau^{D_\tau}}(t) = \dim(\ran(E_T([0,\tau)))) =
\infty$ for $t > s_0$. Hence $d_{M_0^D}(t) = d_M(t) = \infty$ for a.e.
$t > s_0$ which yields $d_{M_0^D}(t) = d_M(t)$ for a.e. $t \in
[0,\infty)$. Using Theorem \ref{IV.10}(ii) we obtain that $A^{ac}_0$
and $A^{ac}_1$  are unitarily equivalent which shows $A_0$ and $A_1$
are unitarily equivalent.
\end{proof}
\br\la{IV.4}
{\em
The statements on $A^D$, $A^N$ and $A^K$ are proved self-consistently
in the framework of boundary triplets.
However, the unitary equivalence of $A^D$ and $A^N$ can be proved much simpler. In fact,
the Dirichlet and Neumann realizations $l_D$ and $l_N$ of the differential
expression $l := -\frac{d^2}{dt^2}$ in $L^2(\R_+)$ are unitary equivalent.
If $U:\ L^2(\R_+)\longrightarrow L^2(\R_+)$ is such a unitary
operator, i.e. $Ul_D= l_NU$, then we have
\bead
\lefteqn{
A^N = l_N\otimes I_{\cH} + I_{\gH}\otimes T = }\\
& &
(U\otimes I_{\cH})[l_D\otimes I_{\cH} + I_{\gH}\otimes T](U^*\otimes I_{\cH})
= (U\otimes I_{\cH})A^D(U^*\otimes I_{\cH}).
\eead
The proof can be extended to any non-negative realization $l_h$ of
$l$ fixed by the domain $\dom(l_h)=\{f\in W^{1,2}(\R_+): f'(0) = hf(0),\quad h \ge 0\}$.
Moreover, a proof of the absolutely continuity of $A^D$ and $A^N$,
which does not used boundary triplets,
can be found in Appendix \ref{A.2}. For the Krein realization $A^K$ we do not know such proofs.
}
\er

Next we describe the spectral properties of any self-adjoint
extension of $A$. In particular, we show that the
Friedrichs extension $A^F$ of $A$ is $ac$-minimal, though $A$ does
not satisfy conditions of Theorem  \ref{VI.6}.
\bt\la{VII.3}
Let $T\ge 0$, {$T = T^*\in [\cH]$}, and $t_1:=\inf\gs_{\ess}(T)$.
Let also $A$ be the symmetric operator defined by \eqref{8.1} and
$\wt A = \wt A^* \in \Ext_A$.  Then

\item[\rm\;\;(i)] the absolutely continuous part $\wt A^{ac}
E_{\wt A}([t_1,\infty))$ is
unitarily equivalent to the part $A^DE_{A^D}([t_1,\infty))$;

\item[\rm\;\;(ii)] the Dirichlet, Neumann and Krein realizations are
$ac$-minimal and $\gs(A^D) = \gs(A^N) = \gs_{ac}(A^K) \subseteq
\sigma_{ac}(\wt A)$;

\item[\rm \;\;(iii)] the absolutely continuous part $\wt A^{ac}$
is unitarily equivalent to $A^D$  whenever either $(\wt
A - i)^{-1} - (A^D - i)^{-1} \in \gotS_\infty(\gotH)$ or $(\wt A -
i)^{-1} - (A^K - i)^{-1} \in \gotS_\infty(\gotH)$.
\et
\begin{proof}
By \cite[Corollary 4.2]{MN2011} it  suffices to assume that the
extension $\wt A = \wt A^*$  is disjoint with $A_0,$ that is, by
Proposition \ref{prop2.1}(ii) it admits a representation  $\wt A
= A_B$ with $B\in \cC(\cH).$

(i) Let $\Pi=\{\cH, \gG_0,\gG_1\}$ be a boundary triplet for $A^*$
defined by  \eqref{8.2}.  In accordance with Theorem \ref{IV.10}
we calculate $d_{M_B^K}(t)$ where $M_B(\cdot) :=
(B-M(\cdot))^{-1}$ is the generalized Weyl function of
the extension $A_B$, cf. \eqref{2.5}. Clearly,
\be\label{7.15}
 \im(M_B(z)) = M_B(z)^*\im(M(z))M_B(z), \quad z \in
\C_+. \ee
Since  $\re(\sqrt{z - \gl})>0$ for $z = t +iy,\  y>0,$  it
follows from \eqref{8.2B} that
      \be\label{7.16}
  \im(M(z)) = \int_{[0,\infty)}\re(\sqrt{z -
\gl})\;dE_T(\gl) \ge \int_{[0,\tau)}\re(\sqrt{z -
\gl})\;dE_T(\gl),
      \ee
where $z = t +iy$. It is easily seen that
      \be \label{7.16A}
\re(\sqrt{z - \gl}) \ge \sqrt{t-\gl} \ge
\sqrt{t-\tau}, \quad \gl \in [0,\tau), \quad t > \tau.
     \ee
Combining \eqref{7.15} with \eqref{7.16} and \eqref{7.16A} we get
\bed  \im(M_B(t+iy)) \ge
\sqrt{t-\tau}M_B(t+iy)^*E_T([0,\tau))M_B(t+iy), \quad t > \tau
>0. \eed
Let $Q$ be a finite-dimensional orthogonal projection, $Q \le
E_T([0,\tau))$. Hence
\bed \im(M_B(t+iy)) \ge \sqrt{t-\tau}M_B(t+iy)^*QM_B(t+iy), \quad
t > \tau >0, \quad y > 0. \eed
Setting $\cH_1= \ran(Q)$, $\cH_2 := \ran(Q^\perp)$, and
choosing $K_2\in \mathfrak S_2(\cH_2)$ and  satisfying $\ker(K_2)
= \ker(K^*_2) = \{0\}$, we define a Hilbert-Schmidt operator $K :=
Q \oplus K_2\in \mathfrak S_2(\cH).$  Clearly,  $\ker(K) =
  \ker(K^*) = \{0\}$ and,
\bea\label{6.14A}
\lefteqn{
\im(K^*M_B(t+iy)K) \ge }\\
& &
\sqrt{t-\tau}K^*M_B(t+iy)^*QM_B(t+iy)K, \quad t > \tau >0.
\nonumber
\eea
Since  $M_B(\cdot)\in (R_\cH)$ and   $Q,\ K\in \mathfrak
S_2(\cH),$  the limits
\bead
K^*M_B(t)^*Q & := & \slim_{y\downarrow 0}K^*M_B(t + iy)^*Q \quad \mbox{and} \quad \\
(QM_BK)(t)   & := & \slim_{y\downarrow 0}QM_B(t+iy)K
\eead
exist for a.e. $t\in \R$ (see \cite{BirEnt67}). Therefore
passing to the limit as $y\to 0$ in \eqref{6.14A}, we arrive at
the inequality
\bed
\im(M_B^K(t)) \ge \sqrt{t-\tau}(K^*M_B(t)^*Q)(QM_BK(t)), \quad  t
> \tau >0, \quad y> 0.
\eed
It follows that
\be\label{7.18}
\dim(\ran\left((QM_BK)(t)\right)) \le
\dim(\ran\left(\im M_B^K(t)\right)) =d_{M^K_B}(t), \quad t > \tau.
\ee
We set $\wt M^Q_B(z) :=QM_B(z)Q \upharpoonright\cH_1.$
Since $\dim(\cH_1)<\infty$ the limit $\wt M^Q_B(t):=
\slim_{y\downarrow 0}\wt M^Q_B(t+iy)$ exists  for a.e. $t\in \R.$
Since $(QM_BK)(t)\upharpoonright\cH_1 = \ran\left((\wt
M^Q_B)(t)\right)$, \eqref{7.18} yields the inequality
        \be\la{7.18a}
\dim(\ran\left(\wt M^Q_B(t)\right)) \le
\dim(\ran\left((QM_BK)(t)\right)) \le d_{M^K_B}(t)
         \ee
for a.e. $t \in [\tau,  \infty)$.

Since  $\dim(\cH_1) <\infty$ and $\ker(\wt M^Q_B(z)) =\{0\},
z\in \C,$ we easily get by  repeating the corresponding reasonings
of the proof of Theorem   \ref{VI.6} that $\ran\left(\wt
M^Q_B(t)\right) = \cH_1$ for a.e. $t \in \R$. Therefore
\eqref{7.18a} yields  $\dim(\cH_1) \le d_{M^K_B}(t)$ for a.e. $t
\in [\tau, \infty).$

If $\tau > t_1$, then $\dim(E_T([0,\tau))\cH) =\infty$ and
the dimension of a projection $Q \le E_T([0,\tau))$ can be
arbitrary. Thus, $d_{M_B^K}(t) = \infty$ for a.e. $t >
\tau$. Since $\tau > t_1$ is arbitrary we get
$d_{M^K_B}(t) = \infty$ for a.e. $t
> t_1$. By  Theorem \ref{IV.10}(ii)
the operator $\wt A^{ac}E_{\wt A}([t_1,\infty))$ is
unitarily equivalent to $A_0E_{A_0}([t_1,\infty))$.

(ii) If $\tau \in (t_0, t_1)$, then $\dim(E_T([0,\tau))\cH) =:
p(\tau)<\infty.$  Hence, $\dim(Q\cH) \le p(\tau)$ which shows that
$d_{M_B^K}(t) \ge p(\tau)$ for a.e. $t \in (\tau, t_1)$. Since
$\tau$ is arbitrary, we obtain $d_{M_B^K}(t) \ge p(\tau)$ for
a.e. $t \in [0,t_1)$. Using Theorem \ref{IV.10}(i) we prove
$A^D$ is $ac$-minimal. Using Theorem \ref{prop7.2A}(iv) we complete the proof
of (ii).

(iii) By Lemma \ref{VII.1} the invariant maximal normal
function $\gotm^+(t)$ is finite for $t \in \R$. By Theorem \ref{V.5}
$\wt A^{ac}$ and $(A^F)^{ac}$ are unitarily equivalent.
Similarly we prove that $\wt A^{ac}$ and $(A^K)^{ac}$ are
unitarily equivalent. To complete the proof it remains to
apply Theorem \ref{prop7.2A}(i).
\end{proof}

Using Definition \ref{III.5} one gets the following corollary.
\bc\la{VI.5Z}
Let the assumptions of Theorem \ref{VII.3} be satisfied. If
$\dim(\cH) = \infty$ and $t_0:=\inf\gs(T) =
\inf\gs_{\ess}(T)=:t_1$, then

\item[\;\;\rm (i)]  the Dirichlet, Neumann and Krein realizations are strictly
$ac$-minimal;
\item[\;\;\rm (ii)] the absolutely continuous part $\wt A^{ac}$ of $\wt A$ is
unitarily equivalent to $A^D$,  whenever
\be\label{4.18A}
(\wt A - i)^{-1} - (A^N - i)^{-1} \in \gotS_\infty(\gotH).
\ee
\ec
\begin{proof}
(i) This statement follows from Theorem \ref{VII.3}(i) and Theorem \ref{prop7.2A}.

(ii) To prove this  statement we note that by the Weyl theorem the
inclusion \eqref{4.18A}  yields $\sigma_{\ess}(\wt
A)=\sigma_{\ess}(A^N)$. Since $\sigma_{\ess}(A^N)=
\sigma_{ac}(A^N)=[t_0,\infty)$ we have $\sigma_{\ess}(\wt
A)=[t_0,\infty)$.  On the other hand, by Theorem \ref{VII.3}(i) we get
$[t_0,\infty)=\gs_{\ess}(\wt A)\ \subseteq \gs_{ac}(\wt A)$. Thus,
$\sigma_{ac}(\wt A)=[t_0,\infty)$ and $\wt
A^{ac}=\wt A^{ac} E_{\wt A}\bigl([t_0,\infty)\bigr)$. Using
Theorem \ref{prop7.2A}(i) and again Theorem
\ref{VII.3}(i) we find that $\wt A^{ac}$ is unitarily equivalent to $A^D$.
\end{proof}
\br
{\em
According to \eqref{7.14B} the condition $\gotm^+(t)<\infty$,
$t\in \R$ (cf. \eqref{4.12A}) is not satisfied for the  Weyl
function $M_0(\cdot)$ of the Neumann extension $A^N$. Thus, the
statement (ii) of Corollary \ref{VI.5Z}  shows that the assumption
$\gotm^+(t)<\infty$ of Theorem \ref{V.5}, which is a generalization of the
classical Kato-Rosenblum theorem, is sufficient but not necessary for validity of
the conclusions.
}
\er
\bc\la{VI.5Zz}
Let the assumptions of Theorem \ref{VII.3} be satisfied and let
$\dim(\cH) = \infty$. Then $A^D$ is strictly $ac$-minimal if and only if $t_0 =
t_1$.
\ec
\begin{proof}
Let $t_0 < t_1$. Then there is a decomposition $T = T_{\rm fin} \oplus
T_\infty$ such that $T_{\rm fin}$ acts in a finite dimensional Hilbert space
$\cH_{\rm fin}$ and $t_0 = \inf\gs(T_{\rm fin})$ and $T_\infty =
T^*_\infty \in \cC(\cH_\infty)$ and
$t_0 < t_\infty := \inf\gs(T_\infty) \le t_1$. This leads to the decomposition
$A = A_{\rm fin} \oplus A_\infty$ where $A_{\rm fin}$ and $A_\infty$ are defined analogously to
\eqref{8.1}. Clearly $A^D = A^D_{\rm fin} \oplus A^D_\infty$. By
Theorem \ref{prop7.2A} both extensions $A^D_{\rm fin}$ and $A^D_\infty$
are absolutely continuous and their spectra are given by $\gs(A^D_{\rm fin}) =
[t_0,\infty)$ and $\gs(A^D_\infty) = [t_\infty,\infty)$.
Since $\dim(\cH_\infty) = \infty$ the deficiency indices of
$A_\infty$ are infinite. We note that $(-\infty,t_\infty)$ is a
spectral gap for $A_\infty$. Using a result of Brasche \cite{Bra04}
there exists an extension $\wt A_\infty = \wt A^*_\infty \in \Ext{A_\infty}$ such that
$\gs(\wt A_\infty) \subseteq [t_0,\infty)$, the
part $\wt A_\infty E_{\wt A_\infty}([t_0,t_\infty))$ is absolutely
continuous and $N_{\wt A^{ac}_\infty}(t) = \infty$ for $t \in
[t_0,t_1)$.

Let $\wt A := A^D_{\rm fin} \oplus \wt A_\infty$. The
operator $\wt A$ is a self-adjoint extension of $A$ such that
$\gs(\wt A) = \gs(A^D) = [t_0,\infty)$. The parts
$A^DE_{A^D}([t_0,t_\infty))$ and $\wt AE_{\wt A}([t_0,t_\infty))$
are absolutely continuous. However, the absolutely continuous parts of
both extensions are not unitarily equivalent. Indeed, for a.e. $t \in
[t_0,t_\infty)$ one has $N_{A^D}(t) < \infty$ but
$N_{\wt A^{ac}}(t) = \infty$, by construction. Hence $A^D$ is not
strictly $ac$-minimal which yields $t_0 = t_1$. The converse follows
from Corollary \ref{VI.5Z}(i).
\end{proof}

\section{Sturm-Liouville operators with unbounded operator potentials}

\subsection{{Regularity properties}}

In this subsection we  consider the differential expression
\eqref{8.1} with unbounded non-negative  $T = T^* ( \in \cC(\cH))$
in $\gH:=L^2({\R}_+,\cH)$. The minimal operator $A :=A_{\rm \min}
:= \overline{\cA}$, cf.\eqref{8.10} and \eqref{8.11}, is densely
defined and non-negative.   If $T$ is bounded, then $A$ coincides
with \eqref{8.1}.

Let $\cH_1(T)$ be the Hilbert space which is obtained
equipping the set $\dom(T)$ with the graph norm of $T$. Moreover,
for any $s\ge 0$ we equip  $\dom(T^s)$ with the graph norm
\be\la{5.1a}
\|u\|_s=(\|u\|^2_{\cH}+\|T^{{s}} u\|^2_{\cH})^{1/2},
\qquad s \ge 0, \quad u \in \cH,
\ee
and denote  by $\cH_s(T)$ the corresponding the Hilbert space.
{Following \cite[Definition I.2.1]{LioMag72} the intermediate spaces
$[X,Y]_\gth$, $\gth \in [0,1]$,  of $X = \cH_1(T)$ and $Y = \cH_0(T) := \cH$ are defined
by $[X,Y]_\gth = \cH_{1-\gth}(T)$, $\gth \in [0,1]$.}

Furthermore, by $\cH_s(T)$, $s < 0$, we denote the completion of
$\cH$ with respect to the "negative" norm
\be\la{5.1b}
\|u\|s =\|(I+T^{-2s})^{-1/2}u\|_{\cH},
\qquad  s < 0, \quad  u\in\cH.
\ee
At first, we describe the domain $\dom(A)$ of the minimal operator
$A$. For this purpose,  following \cite{LioMag72} we introduce the
Hilbert spaces $W^{k,2}_T(\R_+,\cH) := W^{k,2}(\R_+,\cH) \cap
L^2(\R_+,\cH_1(T))$, $k \in \N$, equipped with the Hilbert norms
\bed
\|f\|^2_{W^{k,2}_T} =\int_{{\R}_+}\bigl(\|f^{(k)}(t)\|^2_{\cH} +
\|f(t)\|^2_\cH +\|T f(t)\|_{\cH}^2\bigr)dt.
\eed
Obviously we have $\cD_0 \subseteq W^{2,2}_T({\R}_+,\cH)$ where
is given by \eqref{8.11}. The closure of $\cD_0$ in
$W^{2,2}_T({\R}_+,\cH)$ coincides with $W^{2,2}_{0,T}(\R_+,\cH) := \{f\in
W^{2,2}_T({\R}_+,\cH):\  f(0)=f'(0) =0\}$ which yields
$W^{2,2}_{0,T}(\R_+,\cH) \subseteq \dom(A)$.
\bl\la{VI.2.7}
Let $T=T^*$ be a non-negative  operator in $\cH$. Then the domain
$\dom(A)$ equipped with the graph norm coincides with  the Hilbert
space $W^{2,2}_{0,T}(\R_+,\cH)$  algebraically and topologically.
\el
\begin{proof}
Obviously, for any $f \in \cD_0$ we have
\bead
\lefteqn{
\left\|\cA f\right\|^2_\gotH
= \int_{\R_+}\left\|f''(x)\right\|^2_\cH dx } \\
     & &
 + \int_{\R_+}\|Tf(x)\|_\cH ^2dx - 2\real\left\{\int_{\R_+}
\left(f''(x),Tf(x)\right)_\cH dx\right\}.
           \eead
Integrating by parts we find
         \bed
\int_{\R_+} \left(f''(x),Tf(x)\right)dx =
-\int_{\R_+}\left\|\sqrt{T}f'(x)\right\|^2_\cH dx.
        \eed
Hence
\bed
\left\|\cA f\right\|^2_\gotH =
\int_{\R_+}\left\|f''(x)\right\|^2dx +
  \int_{\R_+}\|Tf(x)\|^2dx +
2\int_{\R_+}\left\|\sqrt{T}f'(x)\right\|^2_\cH dx
\eed
for any $f \in \cD_0$ which yields
\bed
\|f\|^2_{W^{2,2}_T} \le \|\cA f\|^2_\gotH + \|f\|^2, \quad f \in
\cD_0.
\eed
Furthermore, by the Schwartz inequality,
\bed
2\left|\real\left\{\int_{\R_+} \left(f'(x),Tf(x)\right)_\cH dx\right\}\right| \le
\|f\|^2_{W^{2,2}_T}, \quad f \in \cD_0.
\eed
which gives
\bed
\|\cA f\|^2_\gotH + \|f\|^2 \le 2\|f\|^2_{W^{2,2}_T}, \quad f \in \cD_0.
\eed
Thus, we arrive at the two-sided  estimate
\bed
\|f\|^2_{W^{2,2}_T} \le \left\|\cA f\right\|^2_\gotH +
\|f\|^2_\gotH \le 2\|f\|^2_{W^{2,2}_T}, \qquad f \in \cD_0.
\eed
Since  $\cD_0$ is dense in $W^{2,2}_{0,T}(\R_+,\cH)$ we obtain
that  $\dom(A)$ coincides with $W^{2,2}_{0,T}(\R_+,\cH)$
algebraically and topologically.
\end{proof}

In opposite to the case of the minimal operator $A=A_{\min}$
the maximal operator $A_{\max}= A_{\min}^*$  obviously satisfies
$W^{2,2}_T(\R_+,\cH)\subset\dom(A_{\max})$,  though
$\dom(A_{\max})\not = W^{2,2}_T(\R_+,\cH)$ if $T$ is not
bounded. Moreover, it was firstly shown in \cite{Gor71} (see also
\cite[Section 4.1]{GG91}) that the trace mapping
\bed
\{\gamma_0,\gamma_1\}:\ W^{2,2}_T(I,\cH) \longrightarrow \cH_{3/4}(T)\oplus\cH_{1/4}(T),\quad
\{\gamma_0,\gamma_1\}f=\{f(a),f'(a)\},
\eed
can be extended to a continuous (non-surjective) mapping
\bed
\{\gamma_0,\gamma_1\}:\
\dom(A_{\max})\to\cH_{-1/4}(T)\oplus\cH_{-3/4}(T).
\eed
It is also shown in \cite[Theorem 4.1.1]{GG91} that
$y(\cdot)\in\dom(A_{\max})$ if and only if the following conditions
are satisfied:
\begin{enumerate}
\item[\;\;(i)] $y'(\cdot)$ exists and is an absolutely continuous function on
$I$ into $\cH_{-1}(T)$;
\item[\;\;(ii)]
$\cA y\in L^2(I,\cH)$.
\end{enumerate}
This result is similar to that for elliptic operators with smooth
coefficients in domains with smooth boundary,
cf. \cite{Gru08,LioMag61}. A similar statement holds also for the operator
$A_{\max}= A_{\min}^*$  considered in $L^2(\R_+,\cH)$, cf.
\cite[Section 9]{DM91}.

Next, we investigate the Friedrichs extension $A^F$ and the Krein extension $A^K$ of
the operator $A\ge0$. We define also the Neumann realization $A^N$
as the self-adjoint operator associated with the closed quadratic
form ${\mathfrak t}_N$,
\be\la{4.18}
{\mathfrak t}_N[f]  := \int^\infty_0 \left\{\|f'(x)\|^2_\cH +
\|{\sqrt T}f(x)\|^2_\cH\right\}dx = \|f\|^2_{W^{1,2}_{\sqrt T}} - \|f\|^2_{L^2(\R_+,\cH)},
\ee
$f \in \dom({\mathfrak t}_N)  :=   W^{1,2}_{\sqrt T}(\R_+,\cH)$.
Clearly, $A^N \in \Ext_A.$  In the case of bounded  $T$ one has
$A^N=A_1$ where $A_1$ is defined in  Theorem \ref{prop7.2A}(ii).

We note that the closed quadratic $\mathfrak t_F$ associated with
Friedrich extensions $A^F$ is given by $\gt_F := \gt_N\upharpoonright\dom(\gt_F)$,
$\dom(\gt_F) := \{f\in W^{1,2}_{\sqrt{T}}(\R_+,\cH):\ f(0)=0\}$.
\bp\label{prop6.8}
Let $T=T^*\in\cC(\cH)$, $T\ge 0$, and let $A:=\overline{\cA}$
Let also $\cH_n:=\ran\bigl(E_T([n-1,n))\bigr)$, $T_n:= T E_T([n-1,n))$, $n\in\N$, and
let $S_n$ be the closed minimal symmetric operator defined
by \eqref{8.1} in $\gH_n:=L^2({\mathbb R}_+,\cH_n)$ with $T$
replaced by $T_n.$  Then
\item[\rm\;\;(i)] the following decompositions hold
\be\label{8.23}
 A = \bigoplus^\infty_{n=1} S_n, \quad A^F =
\bigoplus^\infty_{n=1} S^F_n, \quad A^K = \bigoplus^\infty_{n=1}
S^K_n, \quad A^N = \bigoplus^\infty_{n=1}S^N_n;
\ee
\item[\rm\;\;(ii)] the domain $\dom(A^F)$ equipped with the
graph norm is a closed subspace of $W^{2,2}_T(\R_+,\cH)$ is given by
$\dom(A^F) = \{f \in W^{2,2}_T(\R_+,\cH):  f(0) = 0\}$;

\item[\rm\;\;(iii)] the domain $\dom(A^N)$ equipped with the
graph norm is a closed subspace of $W^{2,2}_T(\R_+,\cH)$, is give by
$\dom(A^N) = \{f \in W^{2,2}_T(\R_+,\cH): f'(0)=0\}$.
\end{proposition}
\begin{proof}
(i)
{Since Lemma \ref{VI.2.7} is valid for bounded $T$ we find that
the graph $\graph(S_n)$ of $S_n$
equipped with usual graph norm is algebraically and
topologically equivalent to $W^{2,2}_{T_n}(\R_+,\cH_n)$, $n \in \N$.
Obviously, we have
\bed
W^{2,2}_T(\R_+,\cH) = \bigoplus_{n\in\N}W^{2,2}_{T_n}(\R_+,\cH_n)
\eed
which yields
\bed
\graph(A)  = \bigoplus_{n\in\N} \graph(S_n).
\eed
However, the last relation proves the first relation of \eqref{8.23}.}

The second and the third relations are implied by  Corollary \ref{cor5.5}.
To prove the last relation of \eqref{8.23} we set $S^N :=
\bigoplus^\infty_{n=1}S^N_n.$ Since $S^N_n= (S^N_n)^*\in
\Ext_{S_n}$ and $A = \bigoplus^\infty_{n=1} S_n,$  $S^N$ is a
self-adjoint extension of $A,\ S^N\in \Ext_A$. Let
$f=\bigoplus^\infty_{n=1} f_n\in \gotH$ where
$\gotH = \bigoplus^\infty_{n=1}\gotH_n$. Denoting by
$\wt{{\mathfrak t}}_N$ the quadratic form associated with $S^N$
we find $f=\bigoplus^\infty_{n=1} f_n\in \dom(\wt{{\mathfrak t}}_N)$
if and only if $f_n \in \dom(\mathfrak t_n)$, $n \in \N$, and
$\sum^\infty_{n=1} \mathfrak t_n[f_n] < \infty$ where
$\mathfrak t_n$ is the quadratic form associated with $S^N_n$,
$n \in \N$. If $f \in \dom(\wt{{\mathfrak t}}_N)$, then
\bed
\begin{aligned}
\wt{{\mathfrak t}}_N[f] = \sum^\infty_{n=1}\mathfrak t_n[f_n] =
\sum^{\infty}_{n=1}& \int^\infty_0 \left\{\|f'_n(x)\|^2_{\cH_n}  +
  \|{\sqrt T_n}f_n(x)\|^2_{\cH_n}\right\}dx \\
= \int^\infty_0 &\left\{\|f'(x)\|^2_{\cH} + \|{\sqrt T}f(x)\|^2_{\cH}\right\}dx
= {\mathfrak t}_N[f]
\end{aligned}
\eed
which yields $f \in \dom({\mathfrak t}_N)$. Conversely, if
$f \in \dom({\mathfrak t}_N)$ and $f = \bigoplus^\infty_{n=1}f_n$,
then $f_n \in \dom(\mathfrak t_n)$, $n \in \N$, and
$\sum^\infty_{n=1}\mathfrak t_n[f_n] < \infty$ which proves
$f \in \dom(\wt{{\mathfrak t}}_N)$. Hence $S^N = A^N$.

(ii) Following the reasoning of Lemma \ref{VI.2.7} we find
 \be\la{8.23b}
\|f_n\|^2_{W^{2,2}_{T_n}} \le \|S^F_n f_n\|^2_{\gotH_n} +
\|f_n\|^2_{\gotH_n} \le 2\|f_n\|^2_{W^{2,2}_{T_n}}, \qquad n \in \N,
\ee
where  $f_n \in \dom(S^F_n) = \{g_n \in
W^{2,2}(\R_+,\cH_n): g_n(0) =0\}$. { Using
 representation \eqref{8.23} for $A^F$ and setting $f^m :=
\bigoplus^m_{n=1} f_n$, $f_n \in \dom(F_n)$, we obtain from
\eqref{8.23b}
\be\la{8.23a}
 \|f^m\|^2_{W^{2,2}_T} \le \|A^F f^m\|^2_{\gotH} +
\|f^m\|^2_{\gotH} \le 2\|f^m\|^2_{W^{2,2}_T}, \quad m \in \N.
\ee
{ Since the set  $\{f^m = \bigoplus^m_{n=1} f_n: \ f_n \in
\dom(S^F_n),\ m \in \N \}$, is  a core for $A^F,$ inequality
\eqref{8.23a} remains valid for} $f \in \dom(A^F)$. This shows
that $\dom(A^F) = \{f \in W^{2,2}_T(\R_+,\cH): f(0) = 0\}$. {
Moreover, due to \eqref{8.23a}  the graph norm of $A^F$ and  the
norm $\|\cdot\|_{W^{2,2}_T}$ restricted to $\dom(A^F)$ are
equivalent}.

(iii) Similarly to  \eqref{8.23b} one gets}
     \bed
\|f_n\|^2_{W^{2,2}_{T_n}} \le \|S^N_n f_n\|^2_{\gotH_n} + \|f_n\|^2 \le 2\|f_n\|^2_{W^{2,2}_{T_n}}
         \eed
for $f_n \in \dom(S^N_n) = \{g_n \in W^{2,2}(\R_+,\cH_n): g'_n(0)
= 0\}$, $n \in \N$. { It remains to repeat the reasonings  of
(ii)}.
\end{proof}

In the following we denote by $C_b(\R_+,\cH_s)$, $s \in [0,1]$,
the space of bounded continuous functions $f : \R_+ \longrightarrow
\cH_s$.
\bc\label{cor5.4}
Let the assumptions of Proposition \ref{prop6.8} be satisfied. Further,
let $\partial f := f'$ be the derivative of $f
\in W^{2,2}(\R_+,\cH)$ in the distribution sense. If $f \in \dom(A^D) \cup \dom(A^N)$, then

\item[\;\;\rm (i)]
$\partial f := f' \in L^2(\R_+,\cH_{1/2}(T))$ and the maps
\bead
\partial : & & \dom(A^D) \ni f \longrightarrow f' \in L^2(\R_+,\cH_{1/2}(T)),\\
\partial : & & \dom(A^N) \ni f \longrightarrow f' \in L^2(\R_+,\cH_{1/2}(T))
\eead
are continuous;

\item[\;\;\rm (ii)]
$f(\cdot)\in C_b(\R_+,\cH_{3/4}(T))$, $f'(\cdot)\in C_b(R_+,\cH_{1/2}(T))$ and the maps
\bead
\partial^j : & & \dom(A^D) \ni f \longrightarrow f^{(j)} \in C_b(\R_+,\cH_{3/4- j/2}(T)),\\
\partial^j : & & \dom(A^N) \ni f \longrightarrow f^{(j)} \in C_b(\R_+,\cH_{3/4- j/2}(T)),
\eead
$j = 0,1$, are continuous. In particular, one has $f(0)\in \cH_{3/4}(T)$ and
$f'(0)\in\cH_{1/4}(T)$.
\ec
\begin{proof}
(i) {From Proposition \ref{prop6.8}(ii) and (iii) we get that
$u \in L^2(\R_+,X)$, $X = \cH_1(T)$. Applying the intermediate Theorem I.2.3
of \cite{LioMag72} to $X \subseteq Y = \cH_0 := \cH$ we immediately
obtain $f' \in L^2(\R_+,[X,Y]_{1/2})$ which yields $f' \in
L^2(\R_+,\cH_{1/2}(T))$. Moreover, it follows that the map $\partial$
is continuous.
}

(ii) Combining Proposition \ref{prop6.8}(ii) and (iii) with the trace
theorem \cite[Theorem 1.3.1]{LioMag72} one proves (ii).
\end{proof}
\begin{remark}\la{V.4}
{\em
Lemma \ref{VI.2.7},  Proposition \ref{prop6.8} and Corollary
\ref{cor5.4} also hold for realizations of the differential
expression $\cA$ considered on a finite interval $I$, i,e, in
the space $L^2(I,\cH)$. For this case Corollary \ref{cor5.4}
has firstly been proved by M.L. Gorbachuk \cite{Gor71} (see also
\cite[Corollary 4.1.5]{GG91}, \cite[Theorem 4.2.4]{GG91}) by
applying another method. Realizations $\wt A\in \Ext_A$ satisfying
the condition $\dom(\wt A)\subset C(I,\cH_{3/4}(T))$ are called
maximally smooth (see \cite[Section 4.2]{GG91}).

We emphasize however, that Lemma \ref{VI.2.7} and  Proposition
\ref{prop6.8} are new for the case of finite interval realizations
too.
}
\end{remark}

\subsection{Operators on semi-axis: Spectral properties.}

To extend Theorem \ref{prop7.2A} to the case of
unbounded operators $T=T^*\ge0$ we firstly construct a boundary
triplet for $A^*$, using Theorem \ref{VI.3} and
representation \eqref{8.23} for $A$.
\bl\label{lem6.9}
Let the assumptions of Proposition \ref{prop6.8} be satisfied. Then there is a
sequence of boundary triplets $\wh\gP_n =
\{\cH_n,\wh\gG_{0n},\wh\gG_{1n}\}$ for $S_n^*$ such that
$\gP := \bigoplus^\infty_{n=1}\wh\gP_n =: \{\cH,\wh
  \gG_0,\wh \gG_1\}$ forms a boundary triplet for $A^*.$
Moreover,  $A^F = A^*\upharpoonright\ker(\wh \gG_0)$ and
the corresponding  Weyl function is given by
\be\la{8.7}
\wh M(z) = \frac{i\sqrt{z-T} +\im(\sqrt{i - T})}{\re(\sqrt{i-T})}.
\qquad z \in \C_+,
\ee
 \el
\begin{proof}
For any $n \in \N$ we choose a boundary triplet $\gP_n =
\{\cH_n,\gG_{0n},\gG_{1n}\}$ for $S_n^*$ with
$\gG_{0n},\gG_{1n}$ defined by \eqref{8.2}. By Theorem
\ref{prop7.2A}(i) $S^F_n = S_{0n} =
S_n^*\upharpoonright\ker(\gG_{0n})$ and by Lemma
\ref{lem6.1}
the corresponding Weyl function is $M_n(z) = i\sqrt{z-T_n}$.

Following Lemma \ref{VI.1}, cf. \eqref{3.8}, we  define a sequence
of regularized boundary triplets $\wh\gP_n =
\{\cH_n,\wh\gG_{0n},\wh\gG_{1n}\}$ { for $S_n^*$} by setting
$R_n:=(\re(\sqrt{i-T_n}))^{1/2},$ \ $Q_n:= -\im(\sqrt{i - T_n})$
and
\be\label{6.22A}
\wh\gG_{0n} := R_n\gG_{0n}, \quad \wh\gG_{1n} := R_n^{-1}(\gG_{1n}
- Q_n\gG_{0n}), \qquad n\in \N.
\ee
Hence { $S^F_n = S_{0n}$} and the corresponding  Weyl
function $\wh M_n(\cdot)$ is given by
\be \la{8.7B}
\wh M_n(z) = \frac{i\sqrt{z-T_n}
  +\im(\sqrt{i - T_n})}{\re(\sqrt{i-T_n})},
\qquad z \in \C_+, \qquad n \in \N.
\ee
By  Theorem \ref{VI.3} the direct sum $\wh\gP :=
\bigoplus_{n=1}^\infty\wh\gP_n  = \{\cH,\wh\gG_{0},\wh\gG_{1}\}$
forms a boundary triplet for $A^*$ and
the corresponding Weyl function is
\be\la{8.7C}
 \wh M(z) = \bigoplus_{n\in\N}\wh M_n(z), \qquad z \in \C_+.
\ee
Combining \eqref{8.7C} with  \eqref{8.7B} we arrive at \eqref{8.7}.
From Theorem \ref{VI.3} (cf. \eqref{6.8})  and
Corollary \ref{cor5.5} we get
\be\la{6.25z}
A_0 = A^*\upharpoonright\ker(\wh \gG_0) =
\bigoplus^\infty_{n=1}S^*_n\upharpoonright\ker(\wh \gG_{0n}) =
\bigoplus^\infty_{n=1}S_{0n} =  \bigoplus^\infty_{n=1}S^F_{n}=A^F
\ee
which proves the second assertion.
\end{proof}

Next we generalize Theorem \ref{prop7.2A} to the case of
unbounded operator potentials.
\bt\la{VI.9}
Let $T = T^* \ge 0,$\  $t_0 := \inf\gs(T)$. Let $A := A_{\rm min}$
be the minimal operator associated with $\cA$, cf. \eqref{8.10}
and let $\wh \gP = \{\cH,\wh \gG_0,\wh \gG_1\}$ be the boundary
triplet for $A^*$ defined by Lemma \ref{lem6.9}.
Then the following holds:

\item[\rm \;\;(i)]
The Dirichlet realization $A^Df := \cA f$, $f \in \dom(A^D) :=
\{g \in W^{2,2}_T(\R_+,\cH): g(0) = 0\}$ coincides with
$A_0 := A^*\upharpoonright\ker(\wh \gG_0)$ which is identical
with the Friedrichs extension $A^F$.
Moreover, $A^D$ is absolutely continuous and $\gs(A^D) = \gs_{ac}(A^D) = [t_0,\infty)$.

\item[\rm \;\;(ii)]
The  Neumann realization $A^N := \cA f$, $f \in \dom(A^N) := \{g \in
W^{2,2}_T(\R_+,\cH: g'(0) = 0\}$ coincides with $A_{B^N} :=
A^*\upharpoonright\dom(A_{B^N})$ where $\dom(A_{B^N}) =
\dom(\ker(\wh \gG_1 - B^N\wh \gG_0))$ and $B^N := \sqrt{T + \sqrt{I + T^2}}$.
Moreover, $A^N$ is absolutely continuous
$\gs(A^N) = \gs_{ac}(A^N) = [t_0,\infty)$.

\item[\rm \;\;(iii)]
The Krein realization (or extension) $A^K$ is given by
$A_{B^K}:=A^*\upharpoonright \ker(\gG_1 - B^K\gG_0)$, where
\be\la{6.32x}
B^K = \frac{1}{\sqrt{2}\sqrt{T} + \sqrt{T + \sqrt{1
+ T^2}}} \frac{1}{\sqrt{T + \sqrt{1 + T^2}}}.
\ee
Moreover, $\ker(A^K) = \gotH_0 := \overline{\gotH'_0}$, $\gotH'_0
:= \{e^{-x\sqrt{T}}h: h \in \ran(T^{1/4})\},$ the restriction $A^K
\upharpoonright\dom(A^K) \cap \gotH^\perp_0$ is absolutely
continuous, and $A^K = 0_{\gotH_0} \bigoplus (A^K)^{ac}$.
In particular, $\gs(A^K) = \{0\} \cup \gs_{ac}(A^K)$ and
$\gs_{ac}(A^K) = [t_0,\infty)$.

\item[\rm \;\;(iv)]
The realizations $A^D$, $A^N$ and $(A^K)^{ac}$ are unitarily
equivalent.
\et
\begin{proof}
(i) From Proposition \ref{prop6.8}(ii) we get $A^D = A^F$. Applying Lemma \ref{lem6.9} we get
$A^F = A_0$. Finally, using Proposition \ref{prop6.8}(i) and
Theorem \ref{prop7.2A}(i) we verify the remaining part.

(ii) It is easily seen that with respect to the boundary triplet $\wh \gP_n
= \{\cH_n,\wh \gG_{0n},\wh \gG_{1n}\}$ defined by
\eqref{6.22A} the  extension $A^N_n$ admits a representation
$A^N_n = A_{B_n}$ where  $B_n := \sqrt{T_n + \sqrt{1 + T^2_n}}$,
$n \in \N$. By Proposition \ref{prop6.8}(i),  $A^N =
\bigoplus^\infty_{n=1}A^N_n = A_{B^N}$ where $B^N =
\bigoplus^\infty_{n=1}B_n$. The remaining part of (ii) follows from
the representation $A^N = \bigoplus^\infty_{n=1}A^N_n$ and Theorem
\ref{prop7.2A}(ii).

(iii) Using  the polar decomposition $i-\gl = \sqrt{1 +
\gl^2}e^{i\theta(\gl)}$ with $\theta(\gl) = \pi - \arctan(1/\gl)$,
$\gl \ge 0$ we get
\be \label{6.23A} \re(\sqrt{i - T}) = \int^\infty_0 \sqrt[4]{1 +
\gl^2}\cos(\theta(\gl)/2)dE_T(\gl). \ee
Setting  $\varphi(\gl) = \arctan(1/\gl)$, $\gl \ge 0$ and noting
that $\cos(\varphi(\gl)) = {\gl}{(1 + \gl^2})^{-1/2},$ we find
$\cos(\theta(\gl)/2) = {2}^{-1/2}(1 + \gl^2)^{-1/4}(\gl + \sqrt{1
+ \gl^2})^{-1/2}.$ Substituting this expression in \eqref{6.23A}
{ yields}
       \be\la{6.30a}
\re(\sqrt{i - T}) = {2}^{-1/2}(T + \sqrt{1 + T^2}\
)^{-1/2}.
       \ee
Similarly, taking into account $\sin(\theta(\gl)/2) =
\cos(\varphi(\gl)/2)$ and $\cos(\varphi(\gl)/2) = {2}^{-1/2}(1 +
\gl^2)^{-1/4}(\gl + \sqrt{1 + \gl^2})^{1/2}$, we get
        \be\la{6.35}
\im(\sqrt{i- T}) = \int^\infty_0 \sqrt[4]{1 +
  \gl^2}\cos(\varphi(\gl)/2)dE_T(\gl) = \frac{1}{\sqrt{2}}\sqrt{T + \sqrt{1 + T^2}}.
         \ee
It follows from  \eqref{8.7} with account of \eqref{6.30a} and \eqref{6.35} that
 $M(0):=\slim_{x\downarrow 0}M(-x) =: B^K$ { where $B^K$ is} defined
by \eqref{6.32x}. Therefore, by \cite[Proposition 5(iv)]{DM91} the
Krein extension $A^K$ is given by $A_{B^K}:=A^*\upharpoonright
\ker(\gG_1 - B^K\gG_0).$ The remaining statement follows from Proposition \ref{prop6.8}(i) and
Theorem \ref{prop7.2A}(iii).

(iv) The assertion follows from  Theorem \ref{prop7.2A}(iv)
and \eqref{8.23}.
\end{proof}

Next we  generalize Theorem \ref{VII.3} to the case of
unbounded $T\ge0$.
\bt\la{VI.10a}
Let $T = T^* \ge 0$ and $t_1 := \inf\gs_{\rm ess}(T)$.
Further, let $A$ be the minimal operator of $\cA$,
cf. \eqref{8.10}-\eqref{8.11}, and $\wt A = \wt A^* \in Ext_A$. Then

\item[\rm\;\;(i)]
the absolutely continuous part
$\wt A^{ac} E_{\wt A}([t_1,\infty))$ is unitarily
equivalent to the part $A^DE_{A^D}([t_1,\infty))$;

\item[\rm\;\;(ii)]
the Dirichlet, Neumann and Krein  realizations are
$ac$-minimal and $\gs(A^D) = \gs(A^N) = \gs_{ac}(A^K) \subseteq
\gs_{ac}(\wt A)$;

\item[\rm \;\;(iii)]
the $ac$-part $\wt A^{ac}$
is unitarily equivalent to $A^D$ if either $(\wt A - i)^{-1} -
(A^F - i)^{-1} \in \gotS_\infty(\gotH)$ or $(\wt A - i)^{-1} -
(A^K - i)^{-1} \in \gotS_\infty(\gotH)$.
\et
\begin{proof}
By \cite[Corollary 4.2]{MN2011} it  suffices to assume that the
extension $\wt A = \wt A^*$  is disjoint with $A_0,$ that is,
it admits a representation  $\wt A = A_B$ with $B\in \cC(\cH).$

(i) { We} consider the boundary triplet $\wh \gP = \{\cH,\wh
\gG_0,\wh \gG_1\}$ defined { in Lemma \ref{lem6.9}}. In accordance with
\eqref{2.5} the Weyl function corresponding to $A_B$
is given by  $\wh M_B(z) = (B - \wh M(z))^{-1}$, $z \in
\C_+$, where $\wh M(z)$ is given by \eqref{8.7}. Clearly,
\be\la{6.32AB}
\im(\wh M_B(z)) = \wh M_B(z)^*\im(\wh M(z))\wh M_B(z), \qquad z
\in \C_+.
\ee
{ It follows from \eqref{8.7} that  $\bigl(\re(\sqrt{i -
T})\bigr)^{-1}\ge \sqrt 2.$ Therefore \eqref{6.30a} yields}
\be\la{6.32a}
 \im(\wh M(z)) \ge \sqrt{2}\im(M(z)), \quad z \in
\C_+, \quad \text{where} \quad M(z) = i\sqrt{z - T},
\ee
cf.  \eqref{8.2B}. Following the line of reasoning of the proof of
Theorem \ref{VII.3}(i) we obtain from \eqref{6.32a} that $d_{\wh
M^D}(t)=\infty$ for a.e. $t\in [t_1, \infty),$ where  $D=D^*\in
\gotS_2(\cH)$ and $\ker D=\{0\}.$ Moreover, it follows from
\eqref{6.32AB} that $d_{\wh M_B^D}(t)= d_{\wh M^D}(t)=\infty$ for
a.e. $t\in [t_1, \infty).$ One completes the proof by applying
Theorem \ref{IV.10}.

(ii) To prove (ii) for $A^D$ we use  again  estimates \eqref{6.32a} and
follow the proof of Theorem \ref{VII.3}(ii). We complete the proof
for $A^D$ by applying Theorem \ref{IV.10}. Taking into account Theorem \ref{VI.9}(iv)
we complete the proof of (ii).

(iii) The Weyl function $\wh M(\cdot)$
is given by  \eqref{8.7}. Taking into account
\eqref{8.7C} one obtains  $\sup_{n\in\N}\gotm^+_n < \infty$,
where $\gotm^+_n$ is the invariant maximal normal function defined by
\eqref{4.12A}. Indeed, this follows from
\eqref{7.3a} because this estimate shows that $\gotm^+_n$ does not
depend on $n\in \N$. Applying Theorem \ref{V.5}
we complete the proof.

To prove the second statement we note that the operator
$B^K$ defined by \eqref{6.32x} is  bounded. Therefore, by
\eqref{2.5} to $A_{\!\!B^K}$ the Weyl function
\bed
\wh M_{\!\!B^K}(z) = (B^K - \wh M(z))^{-1}, \quad z \in \C_+.
\eed
corresponds. Inserting expression \eqref{6.32x} into this formula we get
\bed { \wh M_{\!\!B^K}(z)} = -\frac{1}{\sqrt{2}}\frac{1}{\sqrt{T} +
i\sqrt{z-T}}\frac{1}{\sqrt{T + \sqrt{1 + T^2}}}=
\frac{1}{z\sqrt{2}}\frac{\sqrt{T} - i\sqrt{z - T}}{\sqrt{T +
\sqrt{1 + T^2}}}.
 \eed
It follows that the limit ${ \wh M_{B^K}(t+i0)}$ exists for  any
$t\in \R \setminus \{0\}$ and
\bed
{ \wh M_{\!\!B^K}(t)} := \slim_{y\to*0}M_{\!\!B^K}(t+iy) =
-\frac{1}{t\sqrt{2}}\frac{\sqrt{T} - i\sqrt{t - T}}{\sqrt{T + \sqrt{1 + T^2}}}.
\eed
Clearly, $\wh M_{\!\!B^K}(t)\in [\cH]$ for any $t\in \R \setminus \{0\}.$
By  Theorem \ref{V.5} the  $ac$-parts of $\wt A$ and $A^K$
are unitarily equivalent whenever $(\wt A - i)^{-1} - (A^K -
i)^{-1} \in \gotS_\infty(\gotH).$ This completes the {
proof.}
\end{proof}

Finally, we generalize Corollary \ref{VI.5Z} to unbounded
operator potentials.
\bc\label{cor6.12}
Let the assumptions of Theorem \ref{VI.10a} be satisfied. If, in
addition,  $\dim(\cH) = \infty$ and $t_0:=\inf\gs(T) =
\inf\gs_{\ess}(T)=:t_1$, then

\item[\;\;\rm (i)]
the Dirichlet, Neumann and Krein realizations are
strictly $ac$-minimal;

\item[\;\;\rm (ii)]
the $ac$-part $\wt A^{ac}$ of $\wt A$ is unitarily equivalent to $A^D$
whenever \eqref{4.18A} is satisfied.
\ec
\begin{proof}
The first statement is immediately follows 
from Theorem \ref{VI.10a}(i) and Theorem \ref{VI.9}(iv). The
second statement is proved in just the same way as Corollary
\ref{VI.5Z}(ii).
\end{proof}

Next we apply Theorem \ref{VI.10a} to realizations $A_C$ of
the form
\bed
y'(0) = Cy(0),\qquad C = C^*\in \cC(\cH),
\eed
using results of the papers \cite{GorKut82,GorKut78}.
\begin{corollary}
Let the assumptions of Theorem \ref{VI.10a} be satisfied. If either
\begin{equation}\label{5.18a}
(T+I)^{-1/2}C(T+I)^{-1/2}\in\mathfrak S_\infty(\cH)\quad
\text{or}\quad (T+I)^{-1}\in\mathfrak S_\infty(\cH)
\end{equation}
is valid, then the $ac$-part
$A_C^{ac}$ is unitarily equivalent to $A^D$.
\end{corollary}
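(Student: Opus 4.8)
The plan is to prove the asserted unitary equivalence by matching spectral multiplicity functions through Theorem \ref{IV.10}(ii), rather than through a single resolvent comparison (which, as one checks on examples, is available only for special $C$). I work in the boundary triplet $\wh\gP$ of Lemma \ref{lem6.9}, in which $A^D=A^F=A_0$, and by \cite[Corollary 4.2]{MN2011} I may assume $A_C$ is disjoint from $A_0$, so that $A_C=A_B$ for a self-adjoint $B$ and $N_{A_C^{ac}}(t)=d_{M_B^D}(t)$ with $M_B(z)=(B-\wh M(z))^{-1}$ (Corollary \ref{IV.9}). At the level of the trace parametrization $y'(0)=Cy(0)$ the relevant generalized Weyl function is $M_C(z)=(C-M(z))^{-1}$ with $M(z)=i\sqrt{z-T}$, for which $\im M(t+i0)=\sqrt{t-T}\,E_T([0,t))$, hence $\ran(\im M(t+i0))=\ran(E_T([0,t)))$. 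Since $N_{(A^D)^{ac}}(t)=\dim\ran(E_T([0,t)))$ (Theorem \ref{VI.9}(i) and Lemma \ref{VII.1}(iii)), the goal reduces to proving $d_{M_C}(t)=\dim\ran(E_T([0,t)))$ for a.e. $t$.

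The inequality $d_{M_C}(t)\ge\dim\ran(E_T([0,t)))$ is free: it is the $ac$-minimality of $A^D$ recorded in Theorem \ref{VI.10a}(ii). On $[t_1,\infty)$ both sides are infinite and the parts are already unitarily equivalent by Theorem \ref{VI.10a}(i), so the entire problem is the \emph{reverse} inequality $d_{M_C}(t)\le\dim\ran(E_T([0,t)))$ on the finite-multiplicity range $[t_0,t_1)$, where $E_T([0,t))$ has finite rank. Using $\im M_C=M_C^*\,\im M\,M_C$ (cf. \eqref{6.32AB}), the reverse inequality follows at once from $\ran(\im M(t+i0))=\ran(E_T([0,t)))$ \emph{provided} the strong boundary limit $M_C(t+i0):=\slim_{y\downarrow0}M_C(t+iy)$ exists as a \emph{bounded} operator for a.e. $t\in[t_0,t_1)$, since then $\ran(\im M_C(t+i0))\subseteq M_C(t+i0)^*\,\ran(E_T([0,t)))$. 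Establishing this bounded a.e. limit is the heart of the matter, and it is exactly the point at which the hypotheses \eqref{5.18a} enter (and where a general self-adjoint $C$ fails, so that $A^D$ is merely $ac$-minimal and not strictly so, cf. Corollary \ref{VI.5Zz}).

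Under the first hypothesis I would factor $C=(T+I)^{1/2}\wt C(T+I)^{1/2}$ with $\wt C:=(T+I)^{-1/2}C(T+I)^{-1/2}\in\gotS_\infty$ and write $M_C(z)=(T+I)^{-1/2}\bigl(\wt C-G(z)\bigr)^{-1}(T+I)^{-1/2}$, where $G(z):=i\sqrt{z-T}\,(T+I)^{-1}$ is an $[\cH]$-valued $R$-function whose boundary value $G(t+i0)$ is boundedly invertible for a.e. $t$. Since $\wt C$ is compact, $\wt C-G(z)$ is a compact perturbation of $-G(z)$, and the analytic Fredholm alternative shows that $\bigl(\wt C-G(t+i0)\bigr)^{-1}$ exists and is bounded outside a null set; this gives the bounded limit $M_C(t+i0)$ and hence the reverse inequality. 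Under the second hypothesis $(T+I)^{-1}\in\gotS_\infty$ the operator $T$ is discrete, so for a.e. $t$ one has $\im M(t+iy)\to\sqrt{t-T}\,E_T([0,t))$ in operator norm with finite-rank limit; combining this norm convergence with the $\gotS_2$-boundary limits of the sandwiched function $D^*M_CD$ from \cite{BirEnt67,Gin66} again confines $\ran(\im M_C(t+i0))$ to $\ran(E_T([0,t)))$. In either case $d_{M_C}(t)=\dim\ran(E_T([0,t)))=d_M(t)$ a.e., and Theorem \ref{IV.10}(ii) yields $A_C^{ac}\cong A^D$; the resolvent-comparison results of \cite{GorKut78,GorKut82} furnish an alternative route to the same a.e. boundary control.

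The main obstacle is precisely this a.e. existence of the bounded boundary limit $M_C(t+i0)$ on $[t_0,t_1)$: one must exclude, on a set of positive measure, the resonant blow-up of $(C-M(t+i0))^{-1}$ coming from the spectral subspace $E_T([t,\infty))\cH$, because such blow-up is exactly the mechanism by which a general extension can create additional absolutely continuous spectrum inside $(t_0,t_1)$. The compactness of $\wt C$ in the first hypothesis (via the analytic Fredholm alternative) and the discreteness of $T$ in the second are the two devices that suppress this blow-up for a.e. $t$; verifying that they do so rigorously — in particular the one-sided, rather than merely sandwiched, boundary control of $M_C$ — is the step I expect to require the most care.
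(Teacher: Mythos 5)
The paper proves this corollary in two lines, by exactly the route you set aside at the outset: the hypotheses \eqref{5.18a} are the Gorbachuk--Kutovoi conditions \cite{GorKut78,GorKut82} under which $(A_C-i)^{-1}-(A^N-i)^{-1}\in\mathfrak S_\infty(\gotH)$, and the conclusion is then read off from Theorem \ref{VI.10a}(iii) (together, for the comparison with $A^D$, with the $\mathfrak S_\infty$-comparability of $A^N$ and $A^D$ when $(T+I)^{-1}\in\mathfrak S_\infty$). Your opening premise that a resolvent comparison ``is available only for special $C$'' inverts the logic of the statement: \eqref{5.18a} \emph{is} the resolvent-comparability condition, and the corollary asserts nothing for $C$ outside that class.

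Your substitute argument has two genuine gaps. First, the corollary is stated under the assumptions of Theorem \ref{VI.10a}, i.e.\ for unbounded $T$. There the trace maps $f\mapsto f(0)$, $f\mapsto f'(0)$ do \emph{not} form a boundary triplet for $A^*$ (they map $\dom(A_{\max})$ into $\cH_{-1/4}(T)\oplus\cH_{-3/4}(T)$, cf.\ Section 5.1), and $M(z)=i\sqrt{z-T}$ is unbounded; so neither Corollary \ref{IV.9} nor the identity $\im M_C=M_C^*(\im M)M_C$ and its passage to boundary values is available in the form you use. One must work in the regularized triplet $\wh\gP$ of Lemma \ref{lem6.9}, under which the boundary operator $C$ and all of your formulas are transformed. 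Second, and independently, the analytic core of your argument fails: $G(t+i0)=i\sqrt{t-T}\,(T+I)^{-1}$ is \emph{never} boundedly invertible for unbounded $T$, because the symbol $\sqrt{t-\gl}\,(1+\gl)^{-1}\to 0$ as $\gl\to+\infty$, so $0\in\sigma(G(t+i0))$ for every $t$. Hence $\wt C-G(z)$ is not a compact perturbation of a boundedly invertible operator at the boundary, and the analytic Fredholm alternative --- which in any case yields invertibility only off a discrete subset of the open half-plane $\C_+$ and says nothing about a.e.\ boundary behaviour on $\R$ --- cannot produce the bounded limit $M_C(t+i0)$ on which your reverse inequality rests. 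The a.e.\ existence of bounded boundary values of $(B-\wh M(t+i0))^{-1}$ is precisely the kind of statement the paper's argument is designed to avoid, by routing everything through the resolvent comparison and Theorem \ref{VI.10a}(iii).
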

\begin{proof}
According to  \cite{GorKut78, GorKut82}  $(A_C -i)^{-1} -
(A^N - i)^{-1} \in\mathfrak S_p$ provided that
either $(T+I)^{-1/2}C(T+I)^{-1/2}\in\mathfrak S_p(\cH)$ or
$(T+I)^{-1}\in\mathfrak S_p(\cH)$ for $p\in [1,\infty]$. It remains to
apply Theorem \ref{VI.10a}(iii).
\end{proof}
\begin{remark}
{\em
Clearly,   $t_0 \not = t_1$ if  $T\ge 0$ and $(T+I)^{-1}\in \mathfrak
S_\infty.$ Thus, in this case $A^D$  is $ac$-minimal but not
strictly $ac$-minimal.
}
\end{remark}

\subsection{Application}

In this subsection we apply previous results to
Schr\"odinger operators in the half-space. To this end we denote by
$L = L_{\rm min}$ the minimal elliptic operator associated with the differential expression
\bed
\cL := -\frac{\partial^2}{\partial t^2} - \sum^n_{j=1}\frac{\partial^2}{\partial x^2} + q(x),
\quad q(x) = \overline{q(x)} \in L^\infty(\R^n),
\eed
in $L^2(\R^{n+1}_+)$, $\R^{n+1}_+:=\R_+\times \R^n$.
Recall that $L_{\min}$ is the closure of $\cL$
defined on $C^\infty_0(\R^{n+1}_+)$.
It holds $\dom(L_{\min}) = H^2_0(\R^{n+1}_+) := \{f \in H^{2}(\R^{n+1}_+):
f\upharpoonright\partial\R^{n+1}_+ = 0, \quad
\frac{\partial f}{\partial \gotn}\upharpoonright\partial\R^{n+1}_+ = 0\}$
where $\gotn$ stands for the interior normal to $\partial\R^{n+1}_+$. Clearly, $L$
is symmetric. The maximal operator $L_{\max}$ is defined by
$L_{\max} = (L_{\min})^*$. We emphasize that $H^{2}(\R^{n+1}_+)\subset
\dom(L_{\max})\subset H^{2}_{loc}(\R^{n+1}_+)$ but
$\dom(L_{\max})\not = H^{2}(\R^{n+1}_+)$. The trace mappings
$\gamma_j\colon C^{\infty}({\overline{\R^{n+1}_+}})
\longrightarrow  C^{\infty}(\partial\R^{n+1}_+)$,
$j\in \{0,1\}$ are defined by
$\gamma_0 f := f \upharpoonright\partial\R^{n+1}_+$  and
$\gamma_1 f := \frac{\partial f}{\partial \gotn}\upharpoonright\partial\R^{n+1}_+$.
Let $\gotL_+$ be the domain $\dom(L_{\max})$
equipped with the graph norm. It is known (see \cite{Gru08,LioMag72})
that $\gamma_j$ can be extended by continuity to the
operators mapping $\gotL_+$ continuously onto
$H^{-j-1/2}(\partial\R^{n+1}_+), \ j\in \{0,1\}.$

Let us define the following  realizations of $\cL$:
\begin{enumerate}

\item[(i)]
$L^Df := \cL f$,  $f \in \dom (L^D) :=
\{\varphi  \in H^2(\R^{n+1}_+):\  \gamma_0 \varphi = 0\}$;

\item[(ii)]
$L^Nf := \cL f$,  $f \in \dom (L^N) :=
\{\varphi \in H^2(\R^{n+1}_+): \gamma_1 \varphi=0\}$;

\item[(iii)]
$L^Kf := \cL f$, $f \in \dom (L^K) :=
\{\varphi \in \dom(L_{\max}): \gamma_1 \varphi + \Lambda \gamma_0
\varphi = 0\}$ where $\Lambda := \sqrt{
-\Delta_x + q(\cdot)}:\   H^{-1/2}(\partial\R^{n+1}_+)\to
H^{-3/2}(\partial\R^{n+1}_+).$
\end{enumerate}

To treat the operator $L_{\min}$ as the Sturm-Liouville operator
with (unbounded) operator potential we denote by $T$  the
minimal operator associated  with the Schr\"odinger expression
\begin{equation}\label{7.41}
\cT := -\Delta_x  + q(x) := - \sum_{j=1}^n\frac{\partial^2}{\partial
x_j^2} + q(x), \quad \overline{q(x)} = q(x),
\end{equation}
in $\cH := L^2({\mathbb R}^n)$. It turns out that $T$ is
Moreover, If $q(x) \ge 0$, then $T\ge 0$.
Let $A := A_{\rm min}$ be the minimal operator
associated with \eqref{8.10} where $T = T_{\rm min}$.
\bp\label{prop6.12}
Let $q(\cdot) \in L^\infty(\R),$ \ $q(\cdot) \ge 0$,  and let $T$
be the minimal (self-adjoint) operator associated with $\cT$ in $L^2(\R)$.
Let also $t_0:=\inf \gs(T)$ and $t_1:=\inf \sigma_{\ess}(T)$.
Then:

\item[\rm \;\;(i)]
the  minimal operator $A$ coincides with
the minimal operator $L$ and   $\dom(A)= H^2_0(\R^{n+1}_+)$;

\item[\rm\;\;(ii)]
the Dirichlet realization  $A^D$ coincides with
$L^D$, hence, $L^D$ is absolutely continuous and
$\sigma(L^D)=\sigma_{ac}(L^D) = [t_0,\infty)$;

\item[\rm\;\;(iii)]
the  Neumann realization $A^N$
coincides with $L^N$, in particular, $L^N$ is absolutely
continuous and $ \sigma(A^N)  = \sigma_{ac}(A^N)=[t_0, \infty)$;

\item[\rm\;\;(iv)]
the Krein realization $A^K$ coincides with
$L^K$, in particular, $L^K$ admits the decomposition $L^K =
0_{\cH_0}\bigoplus (L^K)^{ac}$, $\cH_0 := \ker(L^K)$,  and
$\sigma_{ac}(L^K)=[t_0,\infty)$;

\item[\rm\;\;(v)]
the self-adjoint realizations  $L^D$,\  $L^N,$\  and $L^K$ are
  $ac$-minimal, in particular, $L^D$,\  $L^N,$\  and
$(L^K)^{ac}$ are unitarily equivalent to each other. If
$t_0=t_1$,  then  the operators $L^D$, $L^N$ and $L^K$ are
strictly $ac$-minimal;

\item[\rm\;\;(vi)]
if $\wt L$ is a  self-adjoint realization of $\cL$ such that either
$({\wt L}-i)^{-1}-(L^D-i)^{-1}\in \gotS_{\infty}(L^2(\R^{n+1}_+))$ or
$({\wt L}-i)^{-1}-(L^K-i)^{-1}\in \gotS_{\infty}(L^2(\R^{n+1}_+))$ is
satisfied, then ${\wt L}^{ac}$ and $L^D$ are unitarily
equivalent;

\item[\rm\;\;(vii)]
If $t_0=t_1$ and if $\wt L$ is a  self-adjoint realization of $\cL$ such
that $(\wt L - i)^{-1} - (L^N - i)^{-1} \in
\gotS_\infty(L^2(\R^{n+1}_+))$ is satisfied, then ${\wt L}^{ac}$ and $L^D$ are unitarily
equivalent.
\end{proposition}
\begin{proof}
(i) We introduce the set
\bed
\cD_\infty := \left\{\sum_{1\le j \le k} \phi_j(x)h_j(\xi):
\;\phi_j \in C^\infty_0(\R_+), \  h_j \in C^\infty_0(\R^n),\  k \in
\N\right\}
\eed
We note that $\cD_\infty \subseteq \cD_0$, which is  given by \eqref{8.11}, and $\cD_\infty
\subseteq C^\infty_0(\R^{n+1}_+)$. Moreover, $A
\upharpoonright\cD_\infty = L\upharpoonright\cD_\infty$.  Since
$\cD_\infty$ is a core for both minimal operators $A$ and $L$
we have $A = L$ which yields $\dom(A) = H^{2,2}_0(\R^{n+1}_+)$.

(ii) Since $A = L$ we have $A^F = L^F$.
Using $L^F = L^D$ the proof of (ii) follows immediately from
Theorem \ref{VI.9}(i).

(iii) One verifies that $W^{2,2}_T(\R_+,\cH) = H^2(\R^{n+1}_+)$, i.e,
both spaces are isomorphic. A straightforward computation shows
that
\bed
\gt^\cL[f] := (\cL f,f)_\gotH = (\cA f,f)_{L^2(\R^{n+1}_+)} =: \gt^\cA[f],
\quad f \in W^{2,2}_T(\R_+,\cH) = H^2(\R^{n+1}_+).
\eed
Since $W^{2,2}_T(\R_+,\cH)$ is dense in $W^{1,2}_{\sqrt{T}}(\R_+,\cH)$ the
completion of $\gt^\cA$ gives $\gt_N$ defined by \eqref{4.18} which is the closed
quadratic form associated with $A^N$.
Moreover, using that $H^{2,2}(\R^{n+1}_+)$ is dense in $H^{1,2}(\R^{n+1}_+)$
the completion of $\gt^\cL$ gives the closed quadratic form
associated with $L^N$. Since both completion coincide we get that
$A^N = L^N$. The remaining part follows from Theorem \ref{VI.9}(ii).

(iv) Since $A = L$ we have that $A^K$ is identical
with the Krein realization of $\cL$. However, it was proved
in \cite[Section 9.7]{DM91} that even
$L^K$ is the Krein extension of $\cL$  The rest of the
statements is implied by Theorem \ref{VI.9}(iii).

(v) By Theorem \ref{VI.10a}(ii) the extension
$A^D$, $A^N$ and $A^K$  are $ac$-minimal. Taking into account
(i) - (iv) we find that $L^D$, $L^N$ and $L^K$ are $ac$-minimal.
The second statement of (v) follows from Corollary \ref{cor6.12}(i).

(vi) This statement follows immediately from Theorem \ref{VI.10a}(iii)
and (ii).

(vii) It follows from Corollary \ref{cor6.12}(ii).
\end{proof}
\begin{remark}
{\em
Let  $T$ be the (closed) minimal non-negative operator associated
in $\cH := L^2({\mathbb R}^n)$ with general uniformly elliptic
operator
\bed
\wt{\cT} := - \sum_{j,k=1}^n \frac{\partial}{\partial x_j} a_{jk}(x)
\frac{\partial}{\partial x_j} + q(x), \;\; a_{jk} \in
C^{1}({\overline\R^{n+1}_+}),\ \ q \in C({\overline\R^{n+1}_+})\cap
L^\infty({\R^{n+1}_+}),
\eed
where the coefficients $a_{jk}(\cdot)$ are bounded with their
$C^1$-derivatives, $q\ge 0.$ If the coefficients have some
additional "good" properties, then $\dom(T)=H^2({\R}^n)$
algebraically and topologically.  By Lemma \ref{VI.2.7},
{ $\dom(A_{\rm \min})= W^{2,2}_{0,T}(\R_+,\cH) =
H^{2,2}_0(\R^{n+1}_+)$ } and Proposition  \ref{prop6.12} remains valid with
$T$ in place of the Schr\"odinger operator \eqref{7.41}.

Note also that the Dirichlet and the Neumann realizations $L^D$
and $L^N$ are always self-adjoint ((cf. \cite[Theorem
2.8.1]{LioMag72}, \cite{Gru08})).
}
\end{remark}
\begin{corollary}\label{cor6.13}
Let the assumptions of Proposition  \ref{prop6.12} be satisfied. If
\begin{equation}\label{6.33}
\lim_{|x|\to\infty}\int_{|x-y|\le 1}q(y)dy = 0,
\end{equation}
then the  realizations $L^D$, $L^N$ and $L^K$ are strictly
$ac$-minimal and
\bed
\gs(L^D) = \gs_{ac}(L^K) = \gs(L^N) =\gs_{ac}(L^N)
= [0,\infty).
\eed
\end{corollary}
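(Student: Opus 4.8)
The plan is to reduce the entire statement to the single spectral identity $t_0 = t_1 = 0$, after which both conclusions follow directly from the already-established Proposition \ref{prop6.12}. Recall that $T$ is the minimal self-adjoint operator associated with $-\Delta_x + q$ in $\cH = L^2(\R^n)$, with $q = \bar q \in L^\infty(\R^n)$ and $q \ge 0$, and that $t_0 = \inf\gs(T)$, $t_1 = \inf\gs_{\ess}(T)$.

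First I would show that condition \eqref{6.33} forces $t_1 = \inf\gs_{\ess}(T) = 0$. The condition says that the local $L^1$-mass $\int_{|x-y|\le 1} q(y)\,dy$ of the potential tends to zero at infinity, which makes multiplication by $q$ a relatively form-compact perturbation of the free Laplacian $-\Delta_x$. Concretely, I would verify that $\sqrt{q}\,(-\Delta_x + 1)^{-1/2}$ is compact by truncating $q$ to $q_R := q\,\chi_{\{|x|\le R\}}$: the operator $\sqrt{q_R}\,(-\Delta_x+1)^{-1/2}$ is compact by the Rellich criterion (a compactly supported, locally bounded potential times the resolvent of the Laplacian), while the tail $\sqrt{q - q_R}\,(-\Delta_x+1)^{-1/2}$ has operator norm controlled by $\sup_{|x|>R}\int_{|x-y|\le 1} q(y)\,dy$ through the exponential decay of the Yukawa (Bessel-potential) kernel of $(-\Delta_x + 1)^{-1}$; this tail tends to zero as $R\to\infty$ by \eqref{6.33}. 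Hence $q$ is $(-\Delta_x)$-form compact, and the form version of the Weyl theorem yields $\gs_{\ess}(T) = \gs_{\ess}(-\Delta_x) = [0,\infty)$, so $t_1 = 0$.

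Next, since $q \ge 0$ one has $T = -\Delta_x + q \ge 0$, whence $t_0 = \inf\gs(T) \ge 0$; on the other hand $0 \in \gs_{\ess}(T) \subseteq \gs(T)$ by the previous step, so $t_0 \le 0$. Combining, $t_0 = 0 = t_1$. With $t_0 = t_1$ established, the strict $ac$-minimality of $L^D$, $L^N$ and $L^K$ is exactly the second assertion of Proposition \ref{prop6.12}(v), and the spectral identities $\gs(L^D) = \gs(L^N) = \gs_{ac}(L^N) = \gs_{ac}(L^K) = [t_0,\infty) = [0,\infty)$ are read off from Proposition \ref{prop6.12}(ii)--(iv) upon inserting $t_0 = 0$.

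I expect the main obstacle to be the compactness step: justifying rigorously that \eqref{6.33} gives the $(-\Delta_x)$-form compactness of $q$, i.e. the kernel estimate for the tail term. This is a standard but slightly technical fact for Schr\"odinger operators (a condition of Stummel/Kato-class type); the cleanest route is the truncation-plus-kernel-decay argument sketched above, and alternatively one may simply invoke a known essential-spectrum stability result for potentials satisfying \eqref{6.33}.
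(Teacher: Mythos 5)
Your proposal is correct and follows essentially the same route as the paper: both reduce the corollary to the single identity $t_0=t_1=0$ (using $q\ge 0$ to get $t_0\ge 0$ and $0\in\gs_{\ess}(T)$ to get $t_0\le t_1=0$) and then quote Proposition \ref{prop6.12}. The only difference is that the paper establishes $\gs_{\ess}(T)=[0,\infty)$ by citing Glazman's classical criterion for condition \eqref{6.33}, whereas you sketch a direct relative form-compactness argument -- a legitimate substitute, and you yourself note the citation alternative.
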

\begin{proof}
By \cite[Section 60]{Gla66}  condition  \eqref{6.33} yields the
equality $\sigma_c(T)= \R_+,$ in particular $0\in \sigma_c(T)$ and
$t_1=0.$ Since $q\ge 0,$\  we have $0\le t_0 \le t_1=0,$ that is
$t_0 = t_1=0$. It remains to apply Proposition \ref{prop6.12}(i)-(iv).
\end{proof}
\begin{remark}
{\em
Condition  \eqref{6.33} is satisfied whenever
$\lim_{|x|\to\infty}q(x) = 0.$  Thus,  in this case the
conclusions  of Corollary  \ref{cor6.13}  are valid.
However, it might happen that  $\gs(L^D) = \gs(L^N) = \gs_{ac}(L^K)
= [t_0,\infty)$, $t_0 > 0$, though  $\inf q(x) = 0$.
}
\end{remark}

\setcounter{section}{0}
\renewcommand*\thesection{\Alph{section}}
\section{Appendix: Operators admitting separation of variables}

\subsection{Finite interval}

Here we consider the differential expression $\cA$ with unbounded
$T = T^* \ge 0$ (cf.  \eqref{8.10})  on a finite interval $I =
[0,\pi]$ and denote it by $\cA_I$. The minimal operator $A
:=A_{I,\rm \min} := \overline{A'}$ generated by $\cA$ in the
Hilbert space $\gotH_I := L^2(I,\cH)$ is  defined similarly to
that of $A=A_{\min}$ in $L^2(\R_+,\cH)$. Obviously,  $A_{I,\rm
min}$ is densely defined and non-negative.

We briefly discuss the spectral properties of realizations of
$\cA_I$ which  admit separating of variables. We set
\bead
A^D_If & :=&  \cA_If, \quad f \in \dom(A^D_I) := \{f \in W^{2,2}_T(I,\cH): f(0) = f(\pi) = 0\}\\
A^N_If & :=&  \cA_If, \quad f \in \dom(A^D_I) := \{f \in W^{2,2}_T(I,\cH): f'(0) = f'(\pi) = 0\}
\eead
where $W^{2,2}_T(I,\cH) = W^{2,2}(I,\cH) \cap L^2(I,\cH_1(T))$ with
$\cH_1(T)$ defined by \eqref{5.1a}.

 To state the main result  denote by $l_D$ and $l_N$ the Dirichlet and
Neumann realization of the differential expression $l := -d^2/dx^2$ in
the Hilbert space $L^2(I)$, i.e.
\bed
\begin{matrix}
l_D & := & -\frac{d^2}{dx^2}\upharpoonright\dom(l_D), \;\dom(l_D)=\{f\in
W^{2,2}[0,\pi]:f(0)=f(\pi)=0\},\\
l_N & := & -\frac{d^2}{dx^2}\upharpoonright\dom(l_N), \;\dom(l_N)=\{f\in
W^{2,2}[0,\pi]:f'(0)= f'(\pi)=0\}.
\end{matrix}
\eed
Obviously, both spectra are discrete and given by $\gs(l_D) =
\{1,4,\ldots,k^2,\ldots\}$, $k \in \N$ and $\gs(l_N) =
\{0,1,4,\ldots,k^2,\ldots\}$, $k \in \N_0 := \{0\} \cup \N$.
\bp\label{propAp1}
Let $A^D_I$ and $A^N_I$ be the Dirichlet and the Neumann
realizations of $\cA_I$ in $L^2(I,\cH)$ and let
$T_k:=T + k^2 I_{\cH}\bigl(\in\cC(\cH)\bigr)$. Then

\item[\;\;\rm (i)]
$A^D_I$ is unitarily equivalent to the operator
$\oplus^{\infty}_{k=1}T_k$;

\item[\;\;\rm (ii)]
$A^N_I$ is unitarily equivalent to the operator $\oplus^{\infty}_{k=0}T_k$;

\item[\;\;\rm (iii)]
The spectrum of the operators $A^D_I$ and  $A^N_I$ is discrete, pure
point, purely singular and absolutely continuous if and only if the
spectrum of $T$ is so.

\item[\;\;\rm (iv)]
The spectral multiplicity functions $N_{A^D_I}(\cdot)$ and $N_{A^D_N}(\cdot)$
of the realizations $A^D_I$ and $A^N_I$, respectively, are finite for
each $\gl \in \R$ whenever the multiplicity function $N_{T}(\cdot)$ is finite.
Moreover, if $\sigma_{ac}(T)=[t_0,\infty)$,
then $\sigma_{ac}(A^D_I) = [t_0+1, \infty)$ and
\bed
N_{(A^D_I)^{ac}}(t)=pN_{T^{ac}}(t)\quad \text{for a.e.}\quad
t\in[t_0+k^2,t_0+(k+1)^2),\quad k\in \N,
\eed
as well as $\sigma_{ac}(A^D_I) = [t_0, \infty)$ and
\bed
N_{(A^N_I)^{ac}}(t) = (p +1) N_{T^{ac}}(t)\quad \text{for a.e.}\quad
t\in[t_0+k^2, t_0 + (k+1)^2),
\eed
$k \in \N_0 := \{0\} \cup \N$.

\item[\;\;\rm (v)]
The operators $(A^D_I)^{ac}$ and $(A^N_I)^{ac}$ are not
unitarily equivalent.
\end{proposition}
\begin{proof}
(i) By the spectral theorem, the operator $l_D = l_D^*$ is
unitarily equivalent to the diagonal operator
$\Lambda_D=\diag(1^2,2^2,\ldots, k^2,\ldots)$ acting in $\mathfrak
H_D =l^2(\N)$. Namely, $U_D l_D = \Lambda_D U_D$ where $U_D$ is
the unitary map from $L^2[0,\pi]$ onto $l^2(\mathbb N)$,
\bed
U_D:\  f=\sqrt{\frac{2}{\pi}}\sum^{\infty}_{k=1}a_k\sin
kx\to\{a_k\}^{\infty}_1\in l^2(\mathbb N)
\eed
and $a_k=(f,\sqrt{2/\pi}\sin kx)$. Hence
\bead
\lefteqn{\hspace{-10mm}
(U_D\otimes I_{\cH}) A^D(U^*_D\otimes I_{\cH})  =
(U_D\otimes
I_{\cH})(l_D\otimes I_{\cH} +  I_{\mathfrak H_1}\otimes T)(U_D^*\otimes
I_{\cH}) = }\\
& &
\Lambda_D\otimes I_\cH +I_{\mathfrak H_2}\otimes T
= \bigoplus^{\infty}_{k=1}(k^2I_{\cH} + T) =
\bigoplus^{\infty}_{k=1}T_k.
\eead

(ii) In this case, by the spectral theorem, the operator $A^N$ is
unitarily equivalent to the diagonal operator
$\Lambda_N=\diag(0,1^2,2^2, \ldots,k^2,\ldots)$ in
$\gH_N =l^2(\N_0)$, $U_N l_N=\Lambda_N U_N$ where
\bed
U_N:
 f=\frac{1}{\sqrt{\pi}}b_0+\sqrt{\frac{2}{\pi}}\sum^{\infty}_{k=1}b_k\cos
kx\to\{b_k\}^{\infty}_0\in l^2(\N_0)
\eed
and $b_k=(f,\sqrt{2/\pi}\cos kx)$. Repeating the previous
reasonings we arrive at the required relation
\bed
(U_N\otimes I_{\cH})A^N(U^*_N\otimes I_{\cH}) =
\oplus^{\infty}_{k=0}T_k.
\eed

(iii) This statement follows immediately from (i) and (ii) in view of the
obvious relations $\gs\bigl(\bigoplus^{\infty}_{k=1}T_k\bigr) =
\bigcup_{k=1}^\infty \gs(T_k)$ and
$\gs_{\tau}\bigl(\bigoplus^{\infty}_{k=1}T_k\bigr) = \bigcup_{k=1}^\infty
\gs_{\tau}(T_k)$,  $\tau= pp,s,sc,ac$.

(iv) From (i) and (ii) and the
obvious relations $\sigma_{\tau}(T_k) = k^2 + \sigma_{\tau}(T_k)$,
$\tau= d, pp,s,sc,ac$, $k\in \N$ we verify (iv).

(v) From (i) and (ii) it follows that
$\gs_{ac}(A^N_I) = \bigcup^\infty_{k=0}\gs_{ac}(T_k)$
and $\gs_{ac}(A^D_I) = \bigcup^\infty_{k=1}\gs_{ac}(T_k)$
which yields $\gs_{ac}(A^N_I) \not= \gs_{ac}(A^D_I)$
which proves (v).
\end{proof}

\subsection{Semi-axis}\la{A.2}

Our next purpose is to show that the spectral properties of
realizations of $\cA$ admitting separation of variables can be
investigated directly by applying elementary methods. In
particular, we present a simple proof of Theorem
\ref{VI.9}(ii). let us at first prove a general statement.
\bl\la{A.2.I}
Let $K$ and $T$ be self-adjoint operators in the separable Hilbert spaces $\cK$
and $\cH$, respectively, and let $L_K := K \otimes I_\cH + I_\cK
\otimes T$ which is self-adjoint in $\cK \otimes \cH$.

\item[\;\;\rm (i)]
If the self-adjoint operators $K_1$ and $K_2$ are unitarily
equivalent, then $L_{K_1}$ and $L_{K_2}$ are unitarily equivalent

\item[\;\;\rm (ii)]
If $K$ is absolutely continuous, then $L_K$ is absolutely continuous.
\el
\begin{proof}
(i) Let $V$ be a unitary operator such that $K_2 = V^*K_1V$. Then
$U := V \otimes I_\cH$ is unitary and
\bed
U^* L_{K_1} U = V^* \otimes I_\cH ( K_1 \otimes I_\cH + I_\cK
\otimes T)V \otimes I_\cH = K_2 \otimes I_\cH + I_\cK \otimes T
= L_{K_2}.
\eed

(ii) Let $\goth$ be an auxiliary infinite dimensional separable
Hilbert space. In $L^2(\R,\goth)$ we consider the multiplication operator $Q$ defined by
\be\la{A.2.1}
(Qf)(t) = tf(t), \quad t \in \R, \quad f \in L^2(\R,\goth).
\ee
If $K$ is absolutely continuous, then there is an isometry
$\Phi_0 :\cK \longrightarrow L^2(\R,\goth)$ such that
$Q \Phi_0 = \Phi_0 K$, $\Phi^*_0\Phi_0 = I_\cK$.
Hence the isometry $\Phi := \Phi_0 \otimes I_\cH:
\cK \otimes \cH  \longrightarrow L^2(\R,\goth) \otimes \cH$
intertwines $L_K$ and $\wh L := Q \otimes I_\cH +
I_{L^2(\R,\goth)} \otimes T$, i.e.
\bed
\wh L \Phi = \Phi L_K.
\eed
Notice that $L^2(\R,\goth) \otimes
\cH = L^2(\R,\goth \otimes \cH)$. The operator $\wh L$ has in $L^2(\R,\goth')$,
$\goth' :=\goth \otimes \cH$, the representation
$\wh L := \wh Q + \wh T$ where $\wh Q$ is a multiplication operator
which  is defined similarly as $Q$, cf. \eqref{A.2.1}, and $\wh T$ is given by
\bed
(\wh T f)(t) := T'f(t), \quad f \in \dom(\wh T) := \{f
  \in L^2(\R,\goth'): T' f(t) \in L^2(\R,\goth')\}
\eed
where $T' := I_\goth \otimes T$. Using the Fourier
transform $\cF$ one easily verifies that $\wh Q$ is
unitarily equivalent to the momentum operator $-i\frac{d}{dt}$ in
$L^2(\R,\goth')$, i.e $\cF^{-1} \wh Q \cF = -i\frac{d}{dt}$. This
yields that
\bed
\cF \wh L  \cF^{-1} = -i\frac{d}{dt} + \wh H.
\eed
Finally, using the gauge transform $(\cG f)(t) = e^{-it\wh H}f(t)$, $f
\in  L^2(\R,\goth')$, we find  $\cG\cF \wh L  \cF^{-1} \cG^{-1} = -i\frac{d}{dt}$. Hence
\be\la{A.2.2}
-i\frac{d}{dt} \;\cG\cF\Phi = \cG\cF\Phi L_K
\ee
Since the momentum operator $-i\frac{d}{dt}$ is absolutely continuous
the relation \eqref{A.2.2} immediately implies that $L_K$ is absolutely
continuous.
\end{proof}

We consider the self-adjoint operator
\bed
l_\tau := -\frac{d^2}{dt^2}\upharpoonright\dom(l_\tau), \qquad \dom(l_\tau)=\{f\in
W^{2,2}({\mathbb R}_+):\ f'(0)= \tau f(0)\},
\eed
in $\cK := L^2(\R_+)$ where $\tau \in \R_+\cup\{0\}\cup\{\infty\}$.
The extensions $\tau = 0$ and $\tau = \infty$ are identified with
the Neumann and the Dirichlet realizations of $-\frac{d^2}{dt^2}^2$, respectively. Further,
let  $T=T^*\ge 0$, $T\in\cC(\cH)$. Consider the family of self-adjoint
operators
\be\la{A.3}
A_\tau := l_\tau \otimes I_{\cH}+I_{\cK}\otimes T,
\qquad \tau \in \R_+\cup\{0\}\cup\{\infty\},
\ee
in the Hilbert space $\cK \otimes \cH = L^2(\R_+,\cH)$. Note
for each $\tau \in \R_+\cup\{0\}\cup\{\infty\}$ the
operator $A_\tau$ can be regarded as a self-adjoint extension of the
minimal operator $A$ defined by \eqref{8.10} and \eqref{8.11}. In
particular, we have $A_0 = A^N$ and $A_\infty = A^D$.
\bc\la{A.2.II}
Let $T = T^* \ge 0$.

\item[\;\;\rm (i)] If $\tau_1 \ge 0$ and $\tau_2 \ge 0$, then
  $A_{\tau_1}$ and $A_{\tau_2}$ are unitarily equivalent. In particular,
  the extensions $A^D$ and $A^N$ are unitarily equivalent.

\item[\;\;\rm (ii)] If $\tau \ge 0$, then $A_\tau$ is absolutely
  continuous. In particular, $A^D$ and $A^N$ are absolutely
  continuous.
\ec
\begin{proof}
(i) From \cite[Section 21.5]{Nai69} we get that the operators $l_\tau$
are unitarily equivalent to each other if $\tau \ge 0$. Applying Lemma
\ref{A.2.I}(i) we prove (i).

(ii) Using the Fourier transformation one easily proves that the
operator $l_0$ is absolutely continuous. Taking into account Lemma
\ref{A.2.I}(ii) we verify (ii).
\end{proof}
\br\la{A.2.III}
{\em
\item[\;\;\rm (i)]
We note that the above reasonings cannot be applied to
realizations of $\cA$ which do not admit the tensor product structure
\eqref{A.3}.

\item[\;\;\rm (ii)]
 Comparing Corollary  \ref{A.2.II} with Proposition
\ref{propAp1} we obtain that there  are substantial differences between spectral
properties of realizations on the semi-axis $\R_+$ and on a finite
interval $I$. Indeed, for self-adjoint realizations of $\cA$ on $\R_+$
the $ac$-part can never be eliminated for any
$T=T^*\ge 0$, cf. Theorem \ref{VI.10a}(ii). In contrast to that
the spectral properties of self-adjoint
realizations of $\cA_I$ strongly depend on $T$.
}
\end{remark}


\begin{thebibliography}{10}

\bibitem{AG81}
N.~I. Achieser and I.~M. Glasmann.
\newblock {\em Theorie der linearen {O}peratoren im {H}ilbert-{R}aum}.
\newblock Verlag Harri Deutsch, Thun, eighth edition, 1981.

\bibitem{ABMN05}
S.~Albeverio, J.~F. Brasche, M.~M. Malamud, and H.~Neidhardt.
\newblock Inverse spectral theory for symmetric operators with several gaps:
  scalar-type {W}eyl functions.
\newblock {\em J. Funct. Anal.}, 228(1):144--188, 2005.

\bibitem{Ber65}
Ju.~M. Berezans{\cprime}ki{\u\i}.
\newblock {\em Expansions in eigenfunctions of selfadjoint operators}.
\newblock Translated from the Russian by R. Bolstein, J. M. Danskin, J. Rovnyak
  and L. Shulman. Translations of Mathematical Monographs, Vol. 17. American
  Mathematical Society, Providence, R.I., 1968.

\bibitem{Bir63a}
M.~{\v{S}}. Birman.
\newblock Existence conditions for wave operators.
\newblock {\em Izv. Akad. Nauk SSSR Ser. Mat.}, 27:883--906, 1963.

\bibitem{BirEnt67}
M.~{\v{S}}. Birman and S.~B. {\`E}ntina.
\newblock Stationary approach in abstract scattering theory.
\newblock {\em Izv. Akad. Nauk SSSR Ser. Mat.}, 31:401--430, 1967.

\bibitem{BirKrei62}
M.~{\v{S}}. Birman and M.~G. Kre{\u\i}n.
\newblock On the theory of wave operators and scattering operators.
\newblock {\em Dokl. Akad. Nauk SSSR}, 144:475--478, 1962.

\bibitem{Bra04}
J.~F. Brasche.
\newblock Spectral theory for self-adjoint extensions.
\newblock In {\em Spectral theory of {S}chr\"odinger operators}, volume 340 of
  {\em Contemp. Math.}, pages 51--96. Amer. Math. Soc., Providence, RI, 2004.

\bibitem{BMN02}
J.~F. Brasche, M.~M. Malamud, and H.~Neidhardt.
\newblock Weyl function and spectral properties of self-adjoint extensions.
\newblock {\em Integral Equations Operator Theory}, 43(3):264--289, 2002.

\bibitem{BGPankr2008}
J.~Br{\"u}ning, V.~Geyler, and K.~Pankrashkin.
\newblock Spectra of self-adjoint extensions and applications to solvable
  {S}chr\"odinger operators.
\newblock {\em Rev. Math. Phys.}, 20(1):1--70, 2008.

\bibitem{DM87}
V.~A. Derkach and M.~M. Malamud.
\newblock On the {W}eyl function and {H}ermite operators with lacunae.
\newblock {\em Dokl. Akad. Nauk SSSR}, 293(5):1041--1046, 1987.

\bibitem{DM91}
V.~A. Derkach and M.~M. Malamud.
\newblock Generalized resolvents and the boundary value problems for
  {H}ermitian operators with gaps.
\newblock {\em J. Funct. Anal.}, 95(1):1--95, 1991.

\bibitem{DM95}
V.~A. Derkach and M.~M. Malamud.
\newblock The extension theory of {H}ermitian operators and the moment problem.
\newblock {\em J. Math. Sci.}, 73(2):141--242, 1995.
\newblock Analysis. 3.

\bibitem{DS87}
A.~Dijksma and H.~S.~V. de~Snoo.
\newblock Symmetric and selfadjoint relations in {K}re\u\i n spaces. {I}.
\newblock volume~24 of {\em Oper. Theory Adv. Appl.}, pages 145--166.
  Birkh\"auser, Basel, 1987.

\bibitem{GMZ08}
F.~Gesztesy, K.~A. Makarov, and M.~Zinchenko.
\newblock Essential closures and {AC} spectra for reflectionless {CMV},
  {J}acobi, and {S}chr\"odinger operators revisited.
\newblock {\em Acta Appl. Math.}, 103(3):315--339, 2008.

\bibitem{Geszin2009}
F.~Gesztesy and M.~Zinchenko.
\newblock Local spectral properties of reflectionless {J}acobi, {CMV}, and
  {S}chr\"odinger operators.
\newblock {\em J. Differential Equations}, 246(1):78--107, 2009.

\bibitem{Gin66}
Ju.~P. Ginzburg.
\newblock Multiplicative representations of bounded analytic
  operator-functions.
\newblock {\em Dokl. Akad. Nauk SSSR}, 170:23--26, 1966.

\bibitem{Gla66}
I.~M. Glazman.
\newblock {\em Direct methods of qualitative spectral analysis of singular
  differential operators}.
\newblock Israel Program for Scientific Translations, Jerusalem, 1966.

\bibitem{GorKut82}
M.~L. Gorbachuk and V.~A. Kutovo{\u\i}.
\newblock Resolvent comparability of some boundary value problems.
\newblock {\em Funktsional. Anal. i Prilozhen.}, 16(3):52--53, 1982.

\bibitem{Gor71}
M.L. Gorbachuk.
\newblock {Self-adjoint boundary problems for a second-order differential
  equation with unbounded operator coefficient.}
\newblock {\em Funct. Anal. Appl.}, 5:9--18, 1971.

\bibitem{GG91}
V.~I. Gorbachuk and M.~L. Gorbachuk.
\newblock {\em Boundary value problems for operator differential equations},
  volume~48 of {\em Mathematics and its Applications (Soviet Series)}.
\newblock Kluwer Academic Publishers Group, Dordrecht, 1991.
\newblock Translated and revised from the 1984 Russian original.

\bibitem{GorKut78}
M.~L. Gorba{\v{c}}uk and V.~O. Kutovi{\u\i}.
\newblock Resolvent comparability of boundary value problems for an operator
  {S}turm-{L}iouville equation.
\newblock {\em Funkcional. Anal. i Prilo\v zen.}, 12(1):68--69, 1978.

\bibitem{Grubb68}
G.~Grubb.
\newblock A characterization of the non-local boundary value problems
  associated with an elliptic operator.
\newblock {\em Ann. Scuola Norm. Sup. Pisa (3)}, 22:425--513, 1968.

\bibitem{Gru08}
G.~Grubb.
\newblock {\em Distributions and operators}, volume 252 of {\em Graduate Texts
  in Mathematics}.
\newblock Springer, New York, 2009.

\bibitem{Ka76}
T.~Kato.
\newblock {\em Perturbation theory for linear operators}.
\newblock Springer-Verlag, Berlin, second edition, 1976.
\newblock Grundlehren der Mathematischen Wissenschaften, Band 132.

\bibitem{Koch79}
A.~N. Ko{\v{c}}ube{\u\i}.
\newblock Symmetric operators and nonclassical spectral problems.
\newblock {\em Mat. Zametki}, 25(3):425--434, 477, 1979.

\bibitem{KosMal10}
A.~S. Kostenko and M.~M. Malamud.
\newblock {1-D Schr\"odinger operators with local point interactions on a
  discrete set.}
\newblock {\em J. Differ. Equations}, 249(2):253--304, 2010.

\bibitem{KL71}
M.~G. Kre{\u\i}n and G.~K. Langer.
\newblock The defect subspaces and generalized resolvents of a {H}ermitian
  operator in the space {$\Pi \sb{\kappa }$}.
\newblock {\em Funkcional. Anal. i Prilo\v zen}, 5(3):54--69, 1971.

\bibitem{Kur59}
S.~T. Kuroda.
\newblock Perturbation of continuous spectra by unbounded operators. {I}.
\newblock {\em J. Math. Soc. Japan}, 11:246--262, 1959.

\bibitem{Kur60}
S.~T. Kuroda.
\newblock Perturbation of continuous spectra by unbounded operators. {II}.
\newblock {\em J. Math. Soc. Japan}, 12:243--257, 1960.

\bibitem{Kut76}
V.~O. Kutovi{\u\i}.
\newblock The spectrum of a {S}turm-{L}iouville equation with an unbounded
  operator-valued coefficient.
\newblock {\em Ukrain. Mat. \v Z.}, 28(4):473--482, 574, 1976.

\bibitem{LioMag61}
J.-L. Lions and E.~Magenes.
\newblock Probl\`emes aux limites non homog\`enes. {II}.
\newblock {\em Ann. Inst. Fourier (Grenoble)}, 11:137--178, 1961.

\bibitem{LioMag72}
J.-L. Lions and E.~Magenes.
\newblock {\em Non-homogeneous boundary value problems and applications. {V}ol.
  {I}}.
\newblock Springer-Verlag, New York, 1972.
\newblock Die Grundlehren der mathematischen Wissenschaften, Band 181.

\bibitem{MM03}
M.~M. Malamud and S.~M. Malamud.
\newblock Spectral theory of operator measures in a {H}ilbert space.
\newblock {\em Algebra i Analiz}, 15(3):1--77, 2003.

\bibitem{MN2009}
M.~M. Malamud and H.~Neidhardt.
\newblock On the unitary equivalence of absolutely continuous parts of
  self-adjoint extensions.
\newblock Preprint, arXiv:0907.0650v1 [math-ph], 2009.

\bibitem{MN2010}
M.~M. Malamud and H.~Neidhardt.
\newblock On the {K}ato-{R}osenblum and the {W}eyl-{N}euman theorems.
\newblock {\em Dokl. Akad. Nauk}, 432(2):162--166, 2010.

\bibitem{MN2011}
M.~M. Malamud and H.~Neidhardt.
\newblock On the unitary equivalence of absolutely continuous parts of
  self-adjoint extensions.
\newblock {\em J. Funct. Anal.}, 260(3):613--638, 2011.

\bibitem{Nai69}
M.~A. Na{\u\i}mark.
\newblock {\em Lineinye differentsialnye operatory}.
\newblock Izdat. ``Nauka'', Moscow, 1969.
\newblock Second edition, revised and augmented, With an appendix by V. {\`E}.
  Ljance.

\bibitem{Rof-Bek69}
F.~S. Rofe-Beketov.
\newblock Selfadjoint extensions of differential operators in a space of
  vector-valued functions.
\newblock {\em Teor. Funkci\u\i\ Funkcional. Anal. i Prilo\v zen. Vyp.},
  8:3--24, 1969.

\bibitem{Ros57}
M.~Rosenblum.
\newblock Perturbation of the continuous spectrum and unitary equivalence.
\newblock {\em Pacific J. Math.}, 7:997--1010, 1957.

\bibitem{Visik52}
M.~I. Vi{\v{s}}ik.
\newblock On general boundary problems for elliptic differential equations.
\newblock {\em Trudy Moskov. Mat. Ob\v s\v c.}, 1:187--246, 1952.

\end{thebibliography}

\def\cprime{$'$} \def\cprime{$'$} \def\cprime{$'$} \def\cprime{$'$}
  \def\cprime{$'$} \def\lfhook#1{\setbox0=\hbox{#1}{\ooalign{\hidewidth
  \lower1.5ex\hbox{'}\hidewidth\crcr\unhbox0}}} \def\cprime{$'$}
  \def\cprime{$'$} \def\cprime{$'$} \def\cprime{$'$} \def\cprime{$'$}
  \def\cprime{$'$} \def\cprime{$'$} \def\cprime{$'$}
  \def\lfhook#1{\setbox0=\hbox{#1}{\ooalign{\hidewidth
  \lower1.5ex\hbox{'}\hidewidth\crcr\unhbox0}}}

\end{document}